\newtheorem{theorem}{Theorem}
\newtheorem{lemma}[theorem]{Lemma}
\newtheorem{corollary}[theorem]{Corollary}
\tikzstyle{block} = [rectangle, draw, fill=white, rounded corners,
\tikzstyle{midblock} = [rectangle, draw, fill=white, rounded corners,
\tikzstyle{wideblock} = [rectangle, draw, fill=white, rounded corners,
\tikzstyle{xwideblock} = [rectangle, draw, fill=white, rounded corners,
\tikzstyle{xxwideblock} = [rectangle, draw, fill=white, rounded corners,
\tikzstyle{xxxwideblock} = [rectangle, draw, fill=white, rounded corners,
\def\ket#1{ | #1 \rangle}
\newcommand{\AD}[1]{\textcolor{black}{#1}}
\begin{document}
\title{Quantum algorithms for solving a drift-diffusion equation: a complexity analysis}

\author{Ellen Devereux}
\email{Ellen.Devereux@warwick.ac.uk}
\affiliation{Department of Physics, University of Warwick, Coventry CV4 7AL, United Kingdom}
\affiliation{Fujitsu UK Ltd}


\author{Animesh Datta}
\affiliation{Department of Physics, University of Warwick, Coventry CV4 7AL, United Kingdom}

\date{\today}

\begin{abstract}
We present \AD{four} quantum algorithms for solving a \AD{multidimensional} drift-diffusion equation.
They rely on a quantum linear system solver, a quantum Hamiltonian simulation, a quantum \AD{random-walk}, and the quantum Fourier transform.
 We compare the complexities of these methods to their classical counterparts, finding that \AD{diagonalization} via the quantum Fourier transform 
 offers a quantum computational advantage for solving linear partial differential equations at a fixed final time. 
 We employ a \AD{multidimensional} amplitude estimation process to extract the full probability distribution from the quantum computer.
\end{abstract}

\keywords{Quantum Computing, Modelling, Partial Differential Equations, Complexity Analysis}

\maketitle

\section{Introduction}
\label{sec: intro}

Current computational limitations constrain numerous scientific and commercial applications, primarily in terms of size, processing speed, and cost. For example, solving partial differential equations (PDEs) consumes significant memory and computational resources~\cite{Montanaro2015QuantumMethods}. Common numerical methods necessitate discretizing a large, complex problem domain, requiring repeated retrieval of stored information for each calculation. Furthermore, progressing the solution at each point in the \AD{discretization} in parallel uses vast amounts of computational resource.

One instance of a computationally expensive PDE is the drift-diffusion equation (DDE). Its importance stems from its role as a Fokker-Planck equation describing particle velocity, and its close relationship to the Black-Scholes and Navier-Stokes equations. Furthermore, a DDE can describe a stochastic process such as the Ornstein-Uhlenbeck process. Therefore, efficient methods of solution to the DDE have value across multiple industries. It is thus of significant commercial interest.
PDEs like the DDE are used to assist decision making across industries,
for example, instances include 
modelling wind turbine power outputs \cite{Arenas-Lopez2020AOutput} or risk outcomes through the pricing of options within finance \cite{Montanaro2015QuantumMethods}. 
Methods for solving linear PDEs like the DDE use linear equation solvers for the finite difference method such as the conjugate gradient method \AD{\cite{Grossmann2007NumericalEquations,Shewchuk1994AnPain}}, \AD{random-walk}s \cite{Horn1985NormsMatrices}, or \AD{diagonalization} by the Fourier transform \cite{Wu2020Diagonalization-basedProblems}.

Quantum algorithms promise potential computational advantage for solving differential equations.
Previous works on quantum algorithms for solving PDEs  can be split into two groups.
The first maps the PDE to a Hamiltonian using either \AD{discretization} or a variational approach. The result can be solved via Schrodinger's equation~\cite{Alghassi2022AFormula, Jin2023QuantumAnalysis, Arrazola2019QuantumEquations, Leong2022VariationalSolver, Sato2025QuantumParameters, Jin2024QuantumSchrodingerization, Over2025QuantumOperator, Brearley2024QuantumSimulation}. 
The second uses \AD{discretization} to build a system of linear equations to be solved quantum mechanically~\cite{An2022AFast-forwarding}.
Both document a $1/\epsilon$ time complexity for various PDEs \cite{Jin2022TimeEquations, Berry2014High-orderEquations, Montanaro2016QuantumMethod, Tennie2024SolvingApproach, Ingelmann2024TwoEquation}, where $\epsilon$ is the error in the accuracy of solution. 

In this paper, we study four different quantum algorithms for solving the DDE in \AD{Eq.} (\ref{eqn: DDE}).
We seek the probability distribution $p(\mathbf{x}, t)$ up to a constant additive error $\epsilon$. 
To that end, we  employ a method of extracting a probability distribution approximately up to an error $\epsilon_q$ as defined in \AD{Eq.}~\eqref{eqn: def_epsq}~\cite{VanApeldoorn2021QuantumEstimation}.
We believe ours is its first application to solving PDEs.
It is in effect a \AD{multidimensional} amplitude estimation procedure to extract a probability distribution from a quantum computer. 
Our work is inspired by an earlier work on solving the heat equation~\cite{Linden2022QuantumEquation}.
It sought the total heat in a region of space up to a constant additive error at a final time, where total heat is an integral of the solution to the heat equation. 
This difference in the objective we seek has several consequences which we discuss in Sec.~\ref{sec:Classical Methods} and Sec.~\ref{sec:Quantum Methods}.  

All of our classical algorithms and three of our quantum algorithms based on the quantum \AD{random-walk}, the quantum \AD{time-evolution} and the quantum Fourier transform also apply to the advection diffusion equation (ADE) where the drift coefficient $a<0$ from \AD{Eq.}~\ref{eqn: DDE}. Our algorithm that is based on the quantum linear system solver only applies to the ADE in the same restricted circumstances as it applies to the DDE to bound the condition number. Outside of these circumstances the condition number tends to infinity.
The ADE in one spatial dimension has been solved using a quantum linear system as well as a variational quantum algorithm~\cite{Ingelmann2024TwoEquation}.
They show, as we do that the quantum linear systems method depends on the condition number\AD{, whereas while} the variational algorithm is more efficient it requires an ansatz and is only able to provide a partial solution.
A linear combination of \AD{H}amiltonian simulation (LCHS) method for solving linear and \AD{nonlinear} PDEs was introduced in Ref.~\cite{Novikau2024QuantumSystems}. This was also applied to ADE in one spatial dimension and found to be near optimal~\cite{NovikauExplicitEquation}. 
In this case they find a linear scaling with time $T$ and logarithmic scaling with their truncation error $\varepsilon$. Our method considers $d$ dimensions, and in one dimension we achieve a quadratically improved dependence on $T$. The truncation error is not equivalent to our $\epsilon_c$ but rather $\zeta$, both variables are introduced in Sec.~\ref{ssec: statement}. The LCHS method is an improved scaling of $\zeta$ compared to our methods. The overall accuracy is not considered analytically by Ref.~\cite{NovikauExplicitEquation} but used as a test of the complexity scaling.

Our findings are \AD{summarized} in Table \ref{tab: results_sum_full}.
Our central result is that quantum computational advantage, in terms of time complexity, is possible by the quantum \AD{diagonalization} method when 
\begin{equation}
   \epsilon_q \geq \Tilde{O}\left(\dfrac{\epsilon_c^{d/4}d D^{d/2}}{\zeta^{d/4}L^d(aL+D)^{d/2}}\right)  \quad\text{and}\quad \epsilon = \epsilon_c + \epsilon_q,  
\end{equation}
where $\epsilon_c$ is the classical \AD{discretization} accuracy defined by Eq.~\eqref{eqn: classical_approx_definition}, $\epsilon_q$ is a quantum approximation accuracy defined by \AD{Eq.}~\eqref{eqn: def_epsq} and $\Tilde{O}(\cdot)$ omits logarithmic factors.

In terms of space complexity, each of the quantum methods has two main contributions. 
The first is $q$, the number of qubits required to build the state $|\tilde{p}\rangle$ as defined in \AD{Eq.}~\eqref{eqn: quantum_state}, where $q$ is $O(d\log n_x)$.
 The second is the number of qubits required to perform the measurement protocol to extract $\Tilde{\Tilde{\mathbf{p}}}$ such that $||\Tilde{\Tilde{\mathbf{p}}}- \mathbf{p}||_{\infty}\leq \epsilon$, which is $O(q + 1/\epsilon_q(\log(1/\epsilon_q)+q))$ \cite{VanApeldoorn2021QuantumEstimation}. Therefore, all of the quantum methods have a space complexity of $\Tilde{O}(d/\epsilon_q)$.

\begin{table}
    \centering
    \begin{tabularx}{0.75\textwidth}{X X X X X X}
    \hline
    \textbf{Method}   & \textbf{Region} & \AD{\textbf{Theorem}} & \textbf{Complexity}\\
    \hline
    \textit{Classical} & & & & \\
    Linear \AD{e}quations &  General & \ref{thm: Lineqn} & $\Tilde{O}\left(\dfrac{d^{d/2+4}T^{d/2+3}\zeta^{(d+3)/2}L^{3+d}}{\epsilon_c^{(d+3)/2}}\right)$ \\ \\
    Time-\AD{s}tepping &  General & \ref{thm: timestepping} & $\Tilde{O}\left(\dfrac{d^{d/2+3}T^{d/2+2}\zeta^{d/2+1}L^{d+2}}{\epsilon_c^{d/2+1}}\right)$ \\ \\
    Exact sampling &  General & \ref{thm: exact_sampling} & $\Tilde{O}\left(\dfrac{d^{d/2+3}T^{d/2+2}\zeta^{d/2+1}L^{d+2}}{\epsilon_c^{d/2+1}}\right)$ \\ \\
    FFT &  Rectangular & \ref{thm: Classical_FFT} & $\Tilde{O}\left(\dfrac{d^{d/2+1}T^{d/2}\zeta^{d/2}L^{2d}}{\epsilon_c^{d/2}}\right)$ \\ \\
    \textit{Quantum} & & \\
    Linear equations &  General & \ref{thm: QLEM} & $\Tilde{O}\left(\dfrac{d^5 T^2 \zeta L^2}{\epsilon_q \epsilon_c} \right) $ \\ \\
    Time evolution &  General & \ref{thm: QTM} & $\Tilde{O}\left(\dfrac{d^{d/2+3} T^{d/2+2}\zeta^{d/4+1}L^{d/2+2}}{\epsilon_q\epsilon_c^{d/4+2}} \right) $ \\ \\
    Quantum RW
    &  General & \ref{thm: QRW} & $\Tilde{O}\left(\dfrac{d^{d/2 + 7/2}T^{d/2+1}\zeta^{d/4+1/2}L^{d/2+1}}{\epsilon_q\epsilon_c^{d/4+1/2}}  \right)$\\ \\
    QFT &  Rectangular & \ref{thm: QFT} & $\Tilde{O}\left( \dfrac{d^{(d/2+2)}T^{d/2}\zeta^{d/4}L^{d/2}}{ \epsilon_q\epsilon_c^{d/4} }\right)$\\ \\
    \hline
    \end{tabularx}
    \caption{Time complexity to solve the DDE from \AD{Eq.}~\eqref{eqn: DDE} in terms of parameters defined in Sec. \ref{sec: sol_strat}. We compare four classical and \AD{four} quantum methods. Logarithmic factors are omitted in $\Tilde{O}(\cdot)$.}
    \label{tab: results_sum_full}
\end{table}


This paper has the following structure\AD{: F}irst we state the problem and a summary of our findings in Sec. \ref{ssec: statement} and \ref{ssec: results_sum}, \AD{and} then we describe the preliminaries and solution strategy used across all solution methods in Sec. \ref{sec: sol_strat}.
We then identify the complexity of four classical solution methods in Sec. \ref{sec:Classical Methods}.
These are the conjugate gradient method for linear equations, linear \AD{time-evolution}, a random-walk method, and diagonalization via the fast Fourier transform.
These are compared with the quantum methods in Sec. \ref{sec:Quantum Methods} which use linear equation solvers, \AD{time-evolution} by Hamiltonian simulation, quantum \AD{random-walk} simulation, and \AD{diagonalization} by quantum Fourier transform (QFT). We conclude our findings in Sec.~\ref{sec: Conclusion}.

\subsection{Problem \AD{s}tatement}
\label{ssec: statement}
We aim to solve the \AD{DDE}
\begin{equation}
    \frac{\partial p(\mathbf{x}, t)}{\partial t}
    = \sum_{i = 1}^d \bigg[ a \frac{\partial}{\partial x_i} [p(\mathbf{x}, t)] + D \frac{\partial^2}{\partial x_i^2} [p(\mathbf{x}, t)]\bigg],
    \label{eqn: DDE}
\end{equation}
where $\mathbf{x} \equiv \{x_1,\AD{\ldots},x_d\}~\in \mathbb{R}^d$ is a  $d$-dimensional vector and $a$ and $D$ are positive constants representing the diffusion and drift coefficients respectively. 
We seek an approximate solution, $\Tilde{\Tilde{p}}(\mathbf{x},t)$, of $p(\mathbf{x},t)$ from \AD{Eq.}~\eqref{eqn: DDE} up to a given error $\epsilon \in (0,1)$ at a time $t = T$ by
\begin{equation}
    \big|\big|\Tilde{\Tilde{p}}(\mathbf{x}, t) - p(\mathbf{x}, t) \big| \big|_{\infty}\leq \epsilon 
    \label{eqn: approx_definition}
\end{equation}
for $\mathbf{x} \in [-L,L]^d$. The infinity norm is defined as $||\mathbf{x}||_{\infty}=\AD{\max}_j|x_j|$.

The solution $p(\mathbf{x}, t)$ is positive   
and $\int_{-\infty}^\infty p(\mathbf{x}, t) d\mathbf{x} = 1$ at all times \cite{Risken1996Fokker-PlanckEquation}. $p(\mathbf{x}, t)$ is also dimensionless.
We assume periodic boundary conditions in all spatial dimensions $x_j$ but not in time $t$, such that $p(L, t) = p(- L, t)$.  
We also assume that $p(\mathbf{x}, t)$ is four times differentiable implying that 
\begin{subequations}
    \begin{gather}
        \underset{(\mathbf{x}, t) \in \mathbb{R}^{d+1}}{\max}\Big|\frac{\partial ^4 p(x_1,...,x_d,t)}{\partial x_i^2 \partial x_j^2}\Big| \leq \zeta, \\
        \underset{(\mathbf{x}, t) \in \mathbb{R}^{d+1}}{\max}\Big|\frac{\partial ^3 p(x_1,...,x_d,t)}{\partial x_i \partial x_j \partial x_k}\Big| \leq \zeta L, \\
        \underset{(\mathbf{x}, t) \in \mathbb{R}^{d+1}}{\max}\Big|\frac{\partial ^2 p(x_1,...,x_d,t)}{\partial x_i \partial x_j}\Big| \leq \zeta L^2, \\
        \underset{(\mathbf{x}, t) \in \mathbb{R}^{d+1}}{\max}\Big|\frac{\partial p(x_1,...,x_d,t)}{\partial x_i}\Big| \leq \zeta L^3\AD{,}
    \end{gather}
    \label{eqn: smoothness_bound}
\end{subequations}
with a smoothness bound $\zeta$ with dimensions of $\mathrm{(length)}^{-4}$. 

We also make the following assumptions on the computational costs:
\begin{enumerate}
    \item For classical computation
    \begin{enumerate}
        \item elementary arithmetic operation (addition or multiplication) on real numbers can be computed in constant time,
        \item generation of a real random number in $[0, 1]$ can be completed in constant time,
        \item $p(\mathbf{x}, 0) = p_0(\mathbf{x})$ and its powers can be computed exactly at no cost for all $\bf{x}$.
    \end{enumerate}
    \item For quantum computation:
    \begin{enumerate}
        \item  that one and two qubit gates can be performed in constant time.
    \end{enumerate} 
 \end{enumerate}
Since $O(\log_2(x))=O(\log_e(x))$ we will not discern between logarithms of different bases.

\subsection{Results summary}
\label{ssec: results_sum}

We show that quantum computers can provide computational advantage in solving the DDE in \AD{Eq.}~\eqref{eqn: DDE} depending on the problem parameters. Specifically we focus on the time complexity of a few solution methods. 
As described more fully in Sec. \ref{sec: intro} and \AD{summarized} in Table \ref{tab: results_sum_full}, the different methods we study are as follows:
\begin{enumerate}[label=\roman*]
    \item System of linear equations is a method for finding the solution at all points in space and time. It employs the conjugate gradient method to solve the system.
    \item Time evolution is the simplest and most efficient classical method for finding the solution at all points in space and time. 
    \item Random walk uses the stochastic nature of the equation to model each time step as a step of the \AD{random-walk}. 
    This is less efficient than \AD{time-evolution} due to the number of samples required to achieve the accuracy prescribed in \AD{Eq.}~\eqref{eqn: approx_definition}.
    \item \AD{Diagonalization} is the most efficient of the classical methods for solving \AD{Eq.}~\eqref{eqn: DDE} at a fixed final time $T$. It does this by \AD{utilizing} the fast Fourier transform to \AD{diagonalize} the differential operator.    
\end{enumerate}
The quantum methods we study are as follows:
\begin{enumerate}[label=\roman*]
    \item Quantum linear equations method is more efficient than the classical linear equations method and the classical \AD{time-evolution} method. This makes it the most efficient method for finding the solution at all space and time steps. 
    \item Quantum \AD{time-evolution} method uses Hamiltonian simulation and the linear combination of unitaries methods to evolve the solution with time. This method uses work from Ref. \cite{Over2025QuantumOperator}. It is less efficient than classical time stepping.
    \item  Quantum \AD{random-walk} method takes advantage of the stochastic nature of \AD{random-walk}s to efficiently model each time step as a step of a quantum \AD{random-walk}. This method uses work from Ref. \cite{Apers2018QuantumTesting}. It is more efficient than the above methods but not the most efficient. 
    \item Quantum \AD{diagonalization} uses the quantum Fourier transform to \AD{diagonalize} the differential operator. This is the most efficient method for solving \AD{Eq.}~\eqref{eqn: DDE} for a fixed final time $T.$
\end{enumerate}

\subsection{In practice}
This section will demonstrate the results quoted in Table \ref{tab: results_sum_full} using a commercially relevant use case. Several works have modelled stock volatility as an Ornstein-Uhlenbeck process, which is a specific instance of a DDE \cite{Andersen2001TheVolatility, Wang2023ModelingProcess}. Based on these works typical values for the parameters of the DDE are shown in Table \ref{tab: variables}. We chose coefficients $a$ and $D$ from Ref. \cite{Wang2023ModelingProcess}. In this case $a$ represents the mean reversion parameter, where mean reversion is the theory that asset prices eventually revert to their long-term mean. 
$D = \sigma^2/2$ where $\sigma$ is the volatility of the stock, as volatility itself is an \AD{annualized} standard deviation. 
$T$ is chosen based on the number of samples used by Ref. \cite{Wang2023ModelingProcess} and represents time as expected. $L$ is chosen based on the spread of volatility results. Stock volatility is highly dimensional as it is dependent on many variables \cite{Andersen2001TheVolatility}. We chose $d=3$ for this exercise because volatility depends minimally on three factors: time to maturity, strike price, and market conditions.

We demonstrate the quantum computational advantage by substituting the variables from Table \ref{tab: variables} into the classical and quantum method complexities from Table \ref{tab: results_sum_full}. These results are shown in Table \ref{tab: in_practice}. To obtain the required overall accuracy of $\epsilon$ this comparison can be carried out in advance to select suitable values for $\epsilon_c$ and $\epsilon_q$.

\begin{table*}
\begin{minipage}[b]{.45\linewidth}
  \centering
    \begin{tabular}{ c c c c c c }
        \hline 
        \vspace{2pt}
         $T$ [days] & $L$ [$\$$] & $a$ & $D$ & $d$ & $\zeta$ [$\$^{-4}$] \\ 
         \hline \\
         \vspace{2pt}
         5000 & 10 & 0.2366 & 0.2455 & 3 & 1 \\  \\
         \hline
    \end{tabular}
    \caption{Parameter values for DDE for a representative problem in finance \cite{Andersen2001TheVolatility, Wang2023ModelingProcess}. These variables result in $n_t = 8.13\times 10^5/\epsilon_c$ and $n_x = 1.05 \times 10^2/\epsilon_c^{1/2}$ using \AD{Corollary} \ref{cor:dxdt bounds}.
    \label{tab: variables}}
\end{minipage}%
\hspace{2em}
\begin{minipage}[b]{.45\linewidth}
  \centering
    \begin{tabular}{l l l l }
        \hline
        Method    & Theorems  &  \multicolumn{2}{c}{Complexity} \\
        && Classical & Quantum  \\
        \hline
        \\
        Linear \AD{e}quations &  \ref{thm: Lineqn} \AD{and} \ref{thm: QLEM} & $\dfrac{4.85\times10^{22}}{\epsilon_c^{3}}$ & $\dfrac{4.14\times10^{10}}{\epsilon_c \epsilon_q}$ \\ \\
        Time \AD{s}tepping &  \ref{thm: timestepping} \AD{and} \ref{thm: QTM} & $\dfrac{1.24\times10^{18}}{\epsilon_c^{2.5}}$&$\dfrac{5.23\times10^{17}}{\epsilon_c^{2.75} \epsilon_q}$ \\ \\
        Random \AD{w}alk &  \ref{thm: exact_sampling} \AD{and} \ref{thm: QRW} & $\dfrac{1.24\times10^{18}}{\epsilon_c^{2.5}}$ & $\dfrac{4.73\times10^{12}}{\epsilon_c^{1.25} \epsilon_q}$  \\ \\
        \AD{Diagonalization} &  \ref{thm: Classical_FFT} \AD{and} \ref{thm: QFT} & $\dfrac{8.07\times10^{11}}{\epsilon_c^{1.5}}$ & $\dfrac{6.98\times10^7}{\epsilon_c^{0.75} \epsilon_q}$     \\ \\
        \hline
    \end{tabular}
    \caption{Summary of results with respect to the chosen variables from \AD{Table} \ref{tab: variables} for solving the DDE. 
    \label{tab: in_practice}}
\end{minipage}
\end{table*}

\section{Preliminaries and solution strategy}
\label{sec: sol_strat}
\subsection{Preliminaries}
\label{sec: prelim}
In this section we list the notation used throughout this paper. All vectors will be identified using lower case \AD{bold font}, such as $\mathbf{x}$ representing a $d$-dimensional vector in space and individual components are denoted by $x_j$.
For the  vectors, we use the following  norms: 
the infinity norm defined as $||\mathbf{x}||_{\infty}=\AD{\max}_j|x_j|$, 
the one norm defined as $||\mathbf{x}||_1 = \sum_{j=1}^{n} |x_j|$, 
the Euclidean norm defined as $||\mathbf{x}||_2 = \sqrt{x_1^2+...+x_n^2}$ \cite{Horn1985NormsMatrices}, 
the operator norm $||\cdot|| = \underset{j}\max     ~\sigma_j$, where $\sigma_j$ denotes the $j$th singular value,
and the energy norm $||\mathbf{x}||_{\mathcal{A}}$ with respect to a positive semi-definite matrix $\mathcal{A}$ defined as $||\mathbf{x}||_{\mathcal{A}} = \sqrt{\mathbf{x}^T\mathcal{A}~\mathbf{x}}$.

All matrices are denoted by an \AD{uppercase} Latin alphabet, such as $\mathcal{A}$.  In particular, $\mathcal{I}_m$ denotes the identity matrix of dimension $m \times m.$ 
The condition number $\kappa_\mathcal{A}$ a matrix $\mathcal{A}$ is defined as $\kappa_\mathcal{A} = ||\mathcal{A}||\, ||\mathcal{A}^{-1}||,$ noting the use of the subscript to denote ownership. Likewise where possible, their eigenvalues will be denoted with a lowercase Greek character corresponding to the matrix 
e.g., $\alpha_j$ denotes the $j$th eigenvalue of the matrix $\mathcal{A}$.

The solution is $p(\mathbf{x},t)$ at time $t$ and position $\mathbf{x}$.
All our solutions employ \AD{discretization} on a grid referred to as $G$, which spans $[-L,L]^d$ in space and $[0,T]$ in time.
This grid is \AD{discretized} into $n_x$ equally spaced points in each of the $d$ spatial dimensions ($n_x$ must be even) and $n_t$ equally distributed points in time. 
The spacing of these points is $\Delta x = 2L/n_x$ in all space dimensions and $\Delta t = T/n_t$.
Therefore the grid $G$ is described by the discrete points $(\mathbf{x}, t)$ which, per dimension $x_j$ are\AD{,} 
$x_{j[-n_x/2]} = -L = -n_x\Delta x/2,
\AD{\ldots}, x_{j[0]} = 0,
\AD{\ldots}, x_{j[n_x]} = n_x\Delta x/2 = L$; $ t_0 = 0, t_1 = \Delta t, \AD{\ldots}, t_{n_t} = n_t\Delta t = T$. 

The \AD{discretized} approximation of the solution is denoted by  $\tilde{p}(\mathbf{x},t)$.
Alternatively, we will use $\tilde{\mathbf{p}}_k = \tilde{p}(\mathbf{x}, t = k\Delta t)$ at some points to denote the vector of solutions. $\tilde{\mathbf{p}}_k$ is an $n_x^d \times 1$ vector.
The error introduced by this \AD{discretization} is denoted as $\epsilon_c$, 
\begin{equation}
    \big|\big|\tilde{p}(\mathbf{x}, t) - p(\mathbf{x}, t) \big| \big|_{\infty}\leq \epsilon_c \AD{,}
    \label{eqn: classical_approx_definition}
\end{equation}
the $c$ subscript here is used since \AD{discretization} produces an error that is entirely classical in nature. 
The \AD{discretization} method in both space and time that results in $\epsilon_c$ is described in Sec.~\ref{ssec: discr}.

For the quantum methods, we use Dirac notation such that a quantum state is represented as a ket, $|b\rangle$ and its Hermition conjugate is the bra, $\langle b|$. 
The approximate probability distribution $\tilde{p}(\mathbf{x}, t)$ is encoded in the quantum state 
\begin{equation}
        |\tilde{p}\rangle = \frac{1}{||\tilde{p}(\mathbf{x}, t)||_{2}}\sum_{(\mathbf{x},t)\in G} \tilde{p}(\mathbf{x},t)|\mathbf{x},t\rangle.
        \label{eqn: quantum_state}
    \end{equation}   
As we can only measure the above quantum state to a finite error,  the extracted approximated probability distribution $\tilde{\tilde{p}}(\mathbf{x}, t)$ is such that
     \begin{equation}
        ||\tilde{\tilde{p}}(\mathbf{x}, t) - \tilde{p}(\mathbf{x}, t)||_{\infty}\leq \epsilon_q\AD{.}
        \label{eqn: def_epsq}
    \end{equation}
The error in approximating the quantum state is thus $\epsilon_q,$ and the overall error is $\epsilon = \epsilon_c + \epsilon_q$ as in \AD{Eq.} (\ref{eqn: approx_definition}). 

\subsection{Solution strategy}
\label{ssec: discr}
All our methods rely on a system of linear equations built via \AD{discretization} 
of \AD{Eq.} (\ref{eqn: DDE}) using Taylor's theorem. The forward-time, centered-space \AD{discretization} method in one spatial and one temporal dimension results in     
\begin{equation}
    \frac{dp}{dt} = \frac{p(t + \Delta t) - p(t)}{\Delta t} - \frac{\Delta t}{2}\frac{d^2p(\xi)}{dt^2},
    \label{eqn: TTpt}
\end{equation}

\begin{equation}
    \frac{dp}{dx} = \frac{p(x + \Delta x) - p(x - \Delta x)}{2\Delta x} - \frac{\Delta x^2}{12}\bigg(\frac{d^3p(\xi')}{dx^3} + \frac{d^3p(\xi'')}{dx^3} \bigg),
    \label{eqn: TTpx}
\end{equation}

\begin{equation}
    \frac{d^2p}{dx^2} = \frac{p(x + \Delta x) + p(x - \Delta x) - 2p(x)}{\Delta x^2} - \frac{\Delta x^2}{24}\bigg(\frac{d^4p(\xi')}{dx^4} + \frac{d^4p(\xi'')}{dx^4} \bigg),
    \label{eqn: TTpxx}
\end{equation}
where $\xi \in [t, t + \Delta t], \xi' \in [x, x + \Delta x]$ and $ \xi'' \in [x - \Delta x, x]$. 
These can be rearranged into
\begin{equation}
    \bigg|\frac{dp}{dt}- \frac{p(t + \Delta t) - p(t)}{\Delta t}\bigg| \leq \frac{\Delta t}{2} \underset{(\mathbf{x}, t) \in \mathbb{R}^{d+1}}{\max}\bigg|\frac{d^2p(t)}{dt^2}\bigg|\AD{,} \label{eqn: pt disc}
\end{equation}

\begin{equation}
    \bigg|\frac{dp}{dx} - \frac{(p(x + \Delta x) - p(x - \Delta x))}{2\Delta x} \bigg| \leq \frac{\Delta x^2}{6} \underset{(\mathbf{x}, t) \in \mathbb{R}^{d+1}}{\max}\bigg|\frac{d^3p(x)}{dx^3}\bigg|\AD{,}
    \label{eqn: px disc}
\end{equation}
\begin{equation}
    \bigg|\frac{d^2p}{dx^2} - \frac{p(x + \Delta x) + p(x - \Delta x) - 2p(x)}{\Delta x^2} \bigg| \leq \frac{\Delta x^2}{12} \underset{(\mathbf{x}, t) \in \mathbb{R}^{d+1}}{\max}\bigg|\frac{d^4p(x)}{dx^4}\bigg|.\label{eqn: pxx disc}
\end{equation}
Then we can use the smoothness bounds from \AD{Eq.} (\ref{eqn: smoothness_bound}) to create a smoothness bound on $ \underset{(\mathbf{x}, t) \in \mathbb{R}^{d+1}}{\max}\big|\frac{\partial^2 p(\mathbf{x},t)}{\partial t^2}\big|$. 
This requires first differentiating \AD{Eq.} (\ref{eqn: DDE}) by $t$ and then substituting the smoothness bounds in $\mathbf{x}$ into the result.  Therefore, the bound on $  \underset{(\mathbf{x}, t) \in \mathbb{R}^{d+1}}{\max}\big|\frac{\partial^2 p(\mathbf{x},t)}{\partial t^2}\big|$ is   
\begin{equation}
    \underset{(\mathbf{x}, t) \in \mathbb{R}^{d+1}}{\max}\bigg|\frac{\partial^2 p(\mathbf{x},t)}{\partial t^2}\bigg| \leq d^2 \zeta(aL + D)^2.
\end{equation}
Considering the points in $G$ and using the inequalities in \AD{Eqs.} \AD{\eqref{eqn: pt disc}, \eqref{eqn: px disc} and \eqref{eqn: pxx disc}}, the full \AD{discretized} version of \AD{Eq.} (\ref{eqn: DDE}) on the grid $G$ is 
\begin{multline}
    \frac{\tilde{p}(\mathbf{x}, t + \Delta t) - \tilde{p}(\mathbf{x}, t)}{\Delta t} =
     \sum_{j = 1}^d \bigg ( \frac{a }{2\Delta x}(\tilde{p}(...,x_{j[i]} + \Delta x,...,t) - \tilde{p}(...,x_{j[i]} - \Delta x,...,t)) + \frac{D}{\Delta x^2}(\tilde{p}(...,x_{j[i]} + \Delta x,...,t)\\
    + \tilde{p}(...,x_{j[i]} - \Delta x,...,t) - 2\tilde{p}(\mathbf{x},t)),
    \bigg).
    \label{eqn: disc OU}
\end{multline}
where $i = [-n_x/2,  \cdots, n_x/2]$ and $x_{j[i]}$ denotes
position $[i]$ in dimension $j$. Likewise, $x_{j[i]} + \Delta x$, denotes the step forward by $\Delta x$ from the current position, $[i]$ in dimension $j$. 
\AD{Equation} (\ref{eqn: disc OU}) can be rearranged as \begin{align}
        \tilde{p}(\mathbf{x}, t + \Delta t)  &= \mathcal{L}\tilde{p}(\mathbf{x}, t)  \nonumber\\ \nonumber
      &=  	\bigg(1 - \frac{2d D\Delta t}{\Delta x^2}\bigg) \tilde{p}(\mathbf{x}, t) + \Delta t \sum_{j = 1}^d \bigg ( \frac{a }{2\Delta x}(\tilde{p}(...,x_{j[i]} + \Delta x,...,t) - \tilde{p}(...,x_{j[i]} - \Delta x,...,t))
        \\&+ \frac{D}{\Delta x^2}(\tilde{p}(...,x_{j[i]} + \Delta x,...,t)
    + \tilde{p}(...,x_{j[i]} - \Delta x,...,t))
        \bigg)
        \label{eqn: diff operator}
    \end{align}
to define $\mathcal{L}$ as a linear operator of dimension $n_x^d \times n_x^d$.
This \AD{discretization} enables us to build a matrix representation of an approximate solution to the DDE in \AD{Eq.}~\eqref{eqn: DDE} at all points in $G$ via  $\tilde{\mathbf{p}}_{k+1} =\mathcal{L}\tilde{\mathbf{p}}_{k}$ for $k = 1, 2,..., n_t$. Then the solution is formally given by solving the linear system
\begin{equation}
    \mathcal{A}\tilde{\mathbf{p}} =
    \begin{pmatrix} 
       \mathcal{I}_{n_x^d} & & & \\
        -\mathcal{L} & \mathcal{I}_{n_x^d} & & \\
        & \ddots& \ddots& \\
        & & -\mathcal{L} & \mathcal{I}_{n_x^d} 
    \end{pmatrix} 
    \begin{pmatrix} 
       \tilde{\mathbf{p}}_1 \\
        \tilde{\mathbf{p}}_2 \\
        \vdots \\ 
        \tilde{\mathbf{p}}_{n_t} 
    \end{pmatrix} =
    \begin{pmatrix} 
        \mathcal{L} \tilde{\mathbf{p}}_0\\
        0 \\ 
        \vdots \\
        0   
    \end{pmatrix}\AD{,}
    \label{eqn: syslineareqns}
\end{equation}
where $\mathcal{A}$ is an $n_x^dn_t \cross n_x^dn_t$ matrix. We will describe the steps to build the constituent matrices of $\mathcal{L}$ based on the \AD{discretization} given in \AD{Eq.}~\eqref{eqn: diff operator} in Sec.~\ref{sec: tech_ing}.

First, we use \AD{Eq.}~\eqref{eqn: diff operator} to define the size of the \AD{discretization} steps required  to achieve the approximation accuracy stated in \AD{Eq.} (\ref{eqn: approx_definition}), as shown in the following theorem.    
\begin{theorem}[Approximation up to $\infty$-norm error] If $\Delta t \leq  \Delta x^2/2dD$, \AD{then}
    \begin{equation}
        ||\tilde{\mathbf{p}}_{n_t} - \mathbf{p}_{n_t}||_\infty \leq n_t \frac{d \Delta t \zeta}{2}\bigg( d\Delta t (aL+D)^2 + \frac{\Delta x^2}{3} \Big(a L + \frac{D}{2}\Big) \bigg) = T \frac{d \zeta}{2}\bigg( d\Delta t (aL+D)^2 + \frac{\Delta x^2}{3} \Big(a L + \frac{D}{2}\Big) \bigg).
    \end{equation}
 
\label{thm: approx linf}
\end{theorem}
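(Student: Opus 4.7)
The plan is to set up an error recursion for the discretization error and then accumulate it over the $n_t$ time steps. Let $\mathbf{p}_k$ denote the vector of exact values $p(\mathbf{x},k\Delta t)$ on the spatial grid, and write $\mathbf{e}_k = \tilde{\mathbf{p}}_k - \mathbf{p}_k$. Substituting the exact solution into the update rule in Eq.~\eqref{eqn: diff operator} introduces a local truncation error $\bm{\tau}_k$ satisfying $\mathbf{p}_{k+1} = \mathcal{L}\mathbf{p}_k + \bm{\tau}_k$, from which $\mathbf{e}_{k+1} = \mathcal{L}\mathbf{e}_k - \bm{\tau}_k$. With $\mathbf{e}_0=0$, iterating and applying the triangle inequality gives $\|\mathbf{e}_{n_t}\|_\infty \leq \sum_{k=0}^{n_t-1}\|\mathcal{L}\|_\infty^{n_t-1-k}\|\bm{\tau}_k\|_\infty$, so the problem reduces to bounding $\|\bm{\tau}_k\|_\infty$ and showing $\|\mathcal{L}\|_\infty\leq 1$.

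First I would bound $\|\bm{\tau}_k\|_\infty$ by combining the three finite-difference inequalities Eqs.~\eqref{eqn: pt disc}--\eqref{eqn: pxx disc} with the smoothness assumptions. Multiplying Eq.~\eqref{eqn: pt disc} by $\Delta t$ gives a time contribution of $\tfrac{\Delta t^2}{2}\max|\partial_t^2 p|$, which by the derived bound $\max|\partial_t^2 p|\leq d^2\zeta(aL+D)^2$ becomes $\tfrac{d^2 \zeta (aL+D)^2\Delta t^2}{2}$. Each of the $d$ dimensions contributes a drift error $\Delta t\cdot a\cdot\tfrac{\Delta x^2}{6}\max|\partial_{x_j}^3 p|\leq \Delta t\cdot a\cdot\tfrac{\Delta x^2}{6}\zeta L$ from Eq.~\eqref{eqn: px disc} and a diffusion error $\Delta t\cdot D\cdot\tfrac{\Delta x^2}{12}\max|\partial_{x_j}^4 p|\leq \Delta t\cdot D\cdot\tfrac{\Delta x^2}{12}\zeta$ from Eq.~\eqref{eqn: pxx disc}, using Eq.~\eqref{eqn: smoothness_bound} to estimate the third and fourth spatial derivatives. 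Collecting and factoring out $d\Delta t\zeta/2$ yields $\|\bm{\tau}_k\|_\infty\leq \tfrac{d\Delta t\zeta}{2}\bigl(d\Delta t(aL+D)^2 + \tfrac{\Delta x^2}{3}(aL+D/2)\bigr)$, exactly the per-step form appearing in the theorem.

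Next I would establish $\ell_\infty$-stability, $\|\mathcal{L}\|_\infty\leq 1$. Reading the coefficients off Eq.~\eqref{eqn: diff operator}, each row of $\mathcal{L}$ has a diagonal entry $1-2dD\Delta t/\Delta x^2$ and, for each dimension $j$, two off-diagonal entries $\Delta t(\pm a/(2\Delta x) + D/\Delta x^2)$. The hypothesis $\Delta t\leq \Delta x^2/(2dD)$ keeps the diagonal entry nonnegative, and the $\pm a/(2\Delta x)$ contributions telescope between each forward/backward pair so that the signed row sum equals $1$; once all entries are nonnegative the absolute row sum equals $1$ as well, giving $\|\mathcal{L}\|_\infty\leq 1$. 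Combining with the recursion in the first paragraph then gives $\|\mathbf{e}_{n_t}\|_\infty\leq n_t\cdot\max_k\|\bm{\tau}_k\|_\infty$, which is precisely the stated bound after using $n_t\Delta t = T$. The main obstacle is the bookkeeping in the second paragraph: keeping careful track of the $d$-dependence from summing over dimensions, the factors of $L$ that arise because the smoothness bounds on odd-order derivatives carry extra powers of $L$, and the constants $1/6$ and $1/12$ that must combine to give the prefactor $(aL+D/2)/3$.
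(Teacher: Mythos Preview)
Your proposal is correct and follows essentially the same approach as the paper: bound the local truncation error using Eqs.~\eqref{eqn: pt disc}--\eqref{eqn: pxx disc} together with the smoothness assumptions, establish $\ell_\infty$-stability of $\mathcal{L}$ under the CFL condition $\Delta t\le \Delta x^2/(2dD)$, and accumulate the per-step error over $n_t$ steps. The paper phrases the stability step as ``$\mathcal{L}$ is stochastic'' rather than ``$\|\mathcal{L}\|_\infty\le 1$'', but the content is identical, and your explicit recursion $\mathbf{e}_{k+1}=\mathcal{L}\mathbf{e}_k-\bm{\tau}_k$ is just a slightly more systematic packaging of the paper's inductive unrolling.
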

The proof is given in \AD{Appendix} \ref{app: strategy} 
where we ensure $\mathcal{L}$ is stochastic. We do this by 
bounding $\Delta t \leq  \Delta x^2/2dD$. Based on the bound from \AD{Theorem} \ref{thm: approx linf}, we define the size of $\Delta x$ and $\Delta t$ in the following corollary.
\begin{corollary}
    To approximate $p(\mathbf{x},t)$ up to $\infty$-norm error in $\epsilon$, it is adequate to take
    \begin{equation}
        \Delta x \leq \sqrt{\frac{12 \epsilon D}{T d \zeta (3a^2L^2 + 8aDL + 4D^2)}} \quad\text{and}\quad \Delta t \leq \frac{6 \epsilon }{T d^2\zeta (3a^2L^2 + 8aDL + 4D^2)}\AD{,}
    \end{equation}
    which correspond to 
    \begin{equation}
        n_t = \frac{T^2 d^2\zeta (3a^2L^2 + 8aDL + 4D^2)}{6 \epsilon},
    \end{equation} 
    and 
    \begin{equation}
        n_x = \sqrt{\frac{T d \zeta L^2(3a^2L^2 + 8aDL + 4D^2)}{3 \epsilon D}}.
    \end{equation}
    \label{cor:dxdt bounds}
\end{corollary}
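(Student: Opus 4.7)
The plan is to eliminate $\Delta t$ from the bound in Theorem~\ref{thm: approx linf} by using the stability constraint $\Delta t \leq \Delta x^2/(2dD)$ at equality, reducing the problem to a single inequality in $\Delta x^2$, and then solving it directly. Splitting the error budget $\epsilon$ into two pieces (one for temporal, one for spatial truncation) would yield a looser bound; combining the two terms in one step gives the tight polynomial $3a^2L^2+8aDL+4D^2$ appearing in the statement, so I would do the combination first.

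Concretely, I would substitute $\Delta t = \Delta x^2/(2dD)$ into
\begin{equation*}
T\frac{d\zeta}{2}\bigg(d\Delta t (aL+D)^2 + \frac{\Delta x^2}{3}\Big(aL+\frac{D}{2}\Big)\bigg),
\end{equation*}
factor $\Delta x^2$ out of the parenthesis, and put the two resulting rational expressions $(aL+D)^2/(2D)$ and $(aL+D/2)/3$ over the common denominator $6D$. Expanding $3(aL+D)^2 + 2D(aL+D/2) = 3a^2L^2+8aDL+4D^2$ gives
\begin{equation*}
\|\tilde{\mathbf{p}}_{n_t}-\mathbf{p}_{n_t}\|_\infty \leq \frac{Td\zeta \Delta x^2}{12D}\,(3a^2L^2+8aDL+4D^2).
\end{equation*}

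Setting the right-hand side $\leq\epsilon$ and solving for $\Delta x^2$ yields the stated upper bound on $\Delta x$. Back-substituting into $\Delta t=\Delta x^2/(2dD)$ immediately gives the stated bound on $\Delta t$. Finally, using $n_x = 2L/\Delta x$ and $n_t = T/\Delta t$ from Section~\ref{sec: prelim} and simplifying produces the two expressions for $n_x$ and $n_t$ in the statement.

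There is no real obstacle here: the proof is elementary algebra once Theorem~\ref{thm: approx linf} is in hand. The only subtle point is recognizing that one should saturate the Courant-type condition $\Delta t = \Delta x^2/(2dD)$ rather than apportion $\epsilon$ between the two error contributions independently, since saturating the constraint both respects the stochasticity requirement used in Theorem~\ref{thm: approx linf} and produces the cleanest combined polynomial.
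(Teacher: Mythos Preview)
Your proposal is correct and follows exactly the approach in the paper: saturate $\Delta t=\Delta x^2/(2dD)$, substitute into the bound from Theorem~\ref{thm: approx linf}, set the result equal to $\epsilon$, and solve for $\Delta x^2$. The paper's proof is a two-line sketch of this; you have simply filled in the intermediate algebra (common denominator, the expansion yielding $3a^2L^2+8aDL+4D^2$, and the conversions to $n_x,n_t$), all of which check out.
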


\begin{proof}
    Take $ \Delta t = \Delta x^2/2 dD$ and set
    \begin{equation}
        ||\mathbf{\tilde{p}}_{n_t} - \mathbf{p}_{n_t}||_\infty \leq  T \frac{d \zeta}{2}\bigg( d\Delta t (aL+D)^2 + \frac{\Delta x^2}{3} \Big(a L + \frac{D}{2}\Big) \bigg)= \epsilon.
    \end{equation}
Substitute $\Delta t$ and rearrange for $\Delta x^2$. 
\end{proof} 
Throughout the rest of this paper we will use the values of $n_t$ and $n_x$ from Corollary \ref{cor:dxdt bounds}. 
Whil\AD{e} these inequalities must be met, the terms can be traded off against each other to achieve the necessary accuracy for a given problem.
Finally, the \AD{normalization} used is
\begin{equation}
    ||\mathbf{p}_0||_1 = \sum_{(\mathbf{x}_0, 0) \in G}p_0(\mathbf{x})  =1\AD{.} 
    \label{eqn: norm_p0}
\end{equation}
This captures  $\int_{[-L,L]^d} p_0(\mathbf{x})dx_1...dx_d = 1.$
Then by the stochasticity of $\mathcal{L}$ that we ensured during the proof of \AD{Theorem} \ref{thm: approx linf}, it is implied that $||\tilde{\mathbf{p}}_k||_1 = 1 $ for all $k$.

\section{Classical Algorithms}
\label{sec:Classical Methods}
\subsection{Technical ingredients}
\label{sec: tech_ing}

In this section we present the technical ingredients we use in the rest of Sec.~\ref{sec:Classical Methods}.
The main one is the condition number of the matrix $\mathcal{A}$ in \AD{Eq.} (\ref{eqn: syslineareqns}).  
It is key to determining the complexity of both the classical and quantum algorithms. 
Another is the singular values of $\mathcal{A}$ and the eigenvalues of certain constituent matrices of $\mathcal{A}.$

To determine the condition number of matrix $\mathcal{A}$, we first construct it from its constituent matrices. This involves revisiting the discretization in \AD{Eq.}~\eqref{eqn: diff operator} and representing each \AD{discretization} step using matrices.
To begin, we ignore the \AD{discretization} in time 
and consider only the \AD{discretization} of space in one dimension 
\begin{equation}
\label{eq:1d}
    \frac{d\Tilde{p}(x, t)}{dt} = \sum_{x = -L}^{x = L} \frac{a} {2\Delta x}\Big(\Tilde{p}(x + \Delta x, t) - \Tilde{p}(x - \Delta x,t)\Big) + \frac{D}{\Delta x^2}\Big(\Tilde{p}(x + \Delta x,t) + \Tilde{p}(x - \Delta x,t) - 2\Tilde{p}(x,t)\Big).
\end{equation}
This can be written as 
\begin{equation}
    \frac{d\tilde{\mathbf{p}}}{dt} = \mathcal{H} \tilde{\mathbf{p}}\AD{,}
    \label{eqn: 1D disc}
\end{equation}
where
\begin{equation}
    \mathcal{H} = 
    \begin{pmatrix}
        -2\frac{D}{\Delta x^2} & \frac{D}{\Delta x^2} + \frac{a}{2\Delta x} & 0 & \dots &0& \frac{D}{\Delta x^2} - \frac{a}{2\Delta x}\\
        \frac{D}{\Delta x^2} - \frac{a}{2\Delta x} & -2\frac{D}{\Delta x^2} & \ddots & & & 0\\
        0&\ddots &\ddots & & & \vdots\\
        \vdots& & & & & 0\\
        0 & & & & & \frac{D}{\Delta x^2} + \frac{a}{2\Delta x} \\
        \frac{D}{\Delta x^2} + \frac{a}{2\Delta x} & 0 & \dots &0 & \frac{D}{\Delta x^2} - \frac{a}{2\Delta x} & -2\frac{D}{\Delta x^2}    
    \end{pmatrix}.
    \label{eqn: matrix_H}
\end{equation}
Since periodic boundary conditions are assumed in Sec. \ref{ssec: discr}, $\mathcal{H}$  is a circulant matrix of dimension $n_x \times n_x$.

Next we consider the \AD{discretization} in $t$ and extend \AD{Eq.}~\eqref{eqn: 1D disc} to $d$ dimensions to give
\begin{equation}
    \mathcal{\mathcal{A}}\tilde{\mathbf{p}} =
    \begin{pmatrix} 
        \mathcal{I}_{n_x^d} & & & \\
        -(\mathcal{I}_{n_x^d} + \Delta t \mathcal{M}) & \mathcal{I}_{n_x^d} & & \\
        & \ddots& \ddots& \\
        & & -(\mathcal{I}_{n_x^d} + \Delta t \mathcal{M}) & \mathcal{I}_{n_x^d} 
    \end{pmatrix} 
    \begin{pmatrix} 
       \tilde{\mathbf{p}}_1 \\
        \tilde{\mathbf{p}}_2 \\
        \vdots \\ 
        \tilde{\mathbf{p}}_{n_t} 
    \end{pmatrix} =
    \begin{pmatrix} 
        \mathcal{L}\tilde{\mathbf{p}}_0 \\
        0 \\ 
        \vdots \\
        0   
    \end{pmatrix}\AD{,}
\end{equation}
 where
\begin{equation}
    \mathcal{M} = \sum_{j = 1}^d\mathcal{I}_{n_x}^{\otimes (j-1)}\otimes \mathcal{H} \otimes \mathcal{I}_{n_x}^{\otimes (d-j)}\AD{.}
    \label{eqn: matrix_M}
\end{equation}
Then the differential operator $\mathcal{L}$ from \AD{Eq.} (\ref{eqn: diff operator}) is
\begin{equation}
    \mathcal{L}= \mathcal{I}_{n_x^d} + \Delta t \mathcal{M}.
    \label{eqn: M to L}
\end{equation}
Finally, the full system of linear equations described in \AD{Eq.} (\ref{eqn: syslineareqns}) is denoted as 
\begin{equation}
    \mathcal{A} = \mathcal{T} \otimes \mathcal{L} - \Delta t \mathcal{I}_{n_t} \otimes \mathcal{M},
\end{equation} where 
 \begin{equation}
\mathcal{T} =  \left(\begin{matrix}
    1 & & & \\
    -1 & 1 & & \\
    &\ddots &\ddots & \\
    & & -1 & 1
\end{matrix}\right)
\end{equation}
 is an $n_t \times n_t$ matrix.     

To find the condition number of $\mathcal{A}$ we require its singular values, which can be found via their relation to the eigenvalues of its constituent matrices $\mathcal{M}$ and $\mathcal{L}$. The eigenvalues of $\mathcal{M}$ are denoted as $\mu_j$ and given in Lemma \ref{lem: eigen_M}. 
Then by \AD{Eq.} (\ref{eqn: M to L}) the eigenvalues of $\mathcal{L}$ are 
\begin{equation}
    l_j = 1 + \Delta t \mu_j.
    \label{eqn: eigen_L}
\end{equation}

\begin{lemma}
    The eigenvalues of $\mathcal{M}$ are $\{\mu_{j_1}+ ... + \mu_{j_d}: j_1,...,j_d \in \{0,1,...,n_x-1\}\}$, where
    \begin{equation}
        \mu_j = -4\frac{D}{\Delta x^2} \sin^2\left(\frac{\pi j}{n_x}\right) + \AD{i} \frac{a }{\Delta x} \sin\left(\frac{2\pi j}{n_x}\right)
        \label{eqn: eigen_M}
    \end{equation}
    and $\AD{i}$ denotes the imaginary unit. Moreover, $\mathcal{M}$ is \AD{diagonalize}d by the $d$\AD{th} tensor product of the Fourier transform.
    \label{lem: eigen_M}
\end{lemma}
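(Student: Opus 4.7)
The plan is to exploit the product structure of $\mathcal{M}$ in Eq.~\eqref{eqn: matrix_M} to reduce the eigenproblem to that of the one-dimensional circulant matrix $\mathcal{H}$, and then to diagonalize $\mathcal{H}$ with the discrete Fourier transform. Concretely, I would first observe that $\mathcal{H}$ in Eq.~\eqref{eqn: matrix_H} is generated by cyclic shifts of its first row, with only three nonzero coefficients $c_0 = -2D/\Delta x^2$, $c_1 = D/\Delta x^2 + a/(2\Delta x)$ and $c_{n_x-1} = D/\Delta x^2 - a/(2\Delta x)$ (a consequence of imposing periodic boundary conditions in Sec.~\ref{ssec: discr}). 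This makes $\mathcal{H}$ circulant, and hence unitarily diagonalized by the $n_x$-point DFT matrix $F$ with entries $F_{jk} = \omega^{jk}/\sqrt{n_x}$, where $\omega = e^{2\pi \mathrm{i}/n_x}$.

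Next, I would compute the eigenvalues by the standard circulant formula $\mu_j = \sum_k c_k \omega^{jk}$, so that only three terms survive. Using $\omega^j+\omega^{-j}=2\cos(2\pi j/n_x)$, $\omega^j - \omega^{-j} = 2\mathrm{i}\sin(2\pi j/n_x)$, together with the half-angle identity $1-\cos(2\theta)=2\sin^2\theta$, collapses the expression directly to Eq.~\eqref{eqn: eigen_M}. This step is purely mechanical; the only bookkeeping risk is matching the sign of the drift term to the cyclic labeling of $c_1$ versus $c_{n_x-1}$, which I would verify by checking on the action of $\mathcal{H}$ against Eq.~\eqref{eq:1d}.

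Third, I would lift the spectrum of $\mathcal{H}$ to that of $\mathcal{M}$ using the Kronecker structure. If $v_{j}$ is an eigenvector of $\mathcal{H}$ with eigenvalue $\mu_j$, then $v_{j_1}\otimes\cdots\otimes v_{j_d}$ is simultaneously an eigenvector of each summand $\mathcal{I}_{n_x}^{\otimes(k-1)}\otimes\mathcal{H}\otimes\mathcal{I}_{n_x}^{\otimes(d-k)}$, with eigenvalue $\mu_{j_k}$ on the $k$th summand, since $\mathcal{I}$ acts trivially on the other tensor factors. The $d$ summands therefore commute and share this common eigenbasis, and summation gives the eigenvalue $\mu_{j_1}+\cdots+\mu_{j_d}$ as claimed. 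Ranging the indices over $\{0,\dots,n_x-1\}^d$ yields $n_x^d$ orthonormal joint eigenvectors, which is the full dimension, accounting for all eigenvalues with multiplicity. The same argument shows that $F^{\otimes d}$ diagonalizes each summand simultaneously and hence diagonalizes $\mathcal{M}$.

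I do not expect a genuine obstacle here: the only subtle point is justifying that the simultaneous diagonalization $F^{\otimes d}$ captures all of $\mathcal{M}$ (not just a subspace), which follows from the dimension count above. Everything else is routine verification of the circulant eigenvalue formula and the tensor-sum spectral identity.
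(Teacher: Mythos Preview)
Your proposal is correct and follows essentially the same route as the paper: identify $\mathcal{H}$ as circulant, apply the standard circulant eigenvalue formula, and simplify with half-angle identities. Your labeling of $c_1$ and $c_{n_x-1}$ is swapped relative to the paper's convention, but since you also use $\omega^{jk}$ rather than the paper's $\omega^{-jk}$, the two cancel and you reach the same $\mu_j$; your caveat about checking the drift sign against Eq.~\eqref{eq:1d} is exactly the right safeguard. If anything, your treatment is more complete than the paper's: the appendix proof only computes the one-dimensional $\mu_j$ and never explicitly justifies the $d$-dimensional claim, whereas your Kronecker-sum argument (common eigenbasis $v_{j_1}\otimes\cdots\otimes v_{j_d}$, eigenvalues add, dimension count gives completeness) fills that gap and also makes the diagonalization by $F^{\otimes d}$ transparent.
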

We provide a proof in \AD{Appendix} \ref{app: eigenvalues}.

Given the eigenvalues of $\mathcal{M}$ and $\mathcal{L}$ we can find the singular values of $\mathcal{A}$ using the $d$\AD{th} tensor power of the $n_x \times n_x$ Fourier transform $\mathcal{F}$. This follows from
\begin{equation}
    (\mathcal{I}_{n_t} \otimes \mathcal{F}^{\dag})\mathcal{A}(\mathcal{I}_{n_t}\otimes \mathcal{F}) = 
    \sum_{j=0}^{n_x-1}\mathcal{A}_j\otimes |j \rangle\langle j |
    \label{eqn: A transformation}
\end{equation}
showing that the singular values of $\mathcal{A}$ are the collection of singular values of 
\begin{equation}
    \mathcal{A}_j = (1 + \Delta t \mu_j) \mathcal{T} -\Delta t \mu_j \mathcal{I}_{n_t}
\end{equation}
for all $j \in [0, n_x-1].$
$\mathcal{A}_j$ is a nonsingular $n_t \times n_t$ matrix. 
To find the operator norm and condition number of the matrix $\mathcal{A}_j$, we need its maximum and minimum singular values, $\sigma_j$. $\sigma_j = \sqrt{\alpha_j}$, where $\alpha_j$ are the eigenvalues of $\mathcal{A}_j\,\mathcal{A}_j^{\dag}$. 
The eigenvalues of $\mathcal{A}$ and $\mathcal{A}_j$ are equivalent, as the transformation on the \AD{left-hand side} of \AD{Eq.} (\ref{eqn: A transformation}) is a unitary. 
Therefore, we need to find the eigenvalues of 
\begin{equation}
    \mathcal{A}_j\mathcal{A}_j^{\dag} = \begin{pmatrix}
        1 & -l_j^{\dag} & 0 & ... & 0 \\
        -l_j & 1 + |l_j|^2  & -l_j^{\dag} &  \\
        0 &  -l_j & 1 + |l_j|^2  & -l_j^{\dag} & \\
        \vdots & & -l_j & \ddots & \vdots \\
         & & \ddots & \ddots & -l_j^{\dag}  \\
        0 &  & ... & -l_j  & 1 + |l_j|^2 
    \end{pmatrix}\AD{,}
\end{equation}
where $l_j = 1 + \Delta t\mu_j = 1 - \Delta t \bigg(\frac{4 D}{\Delta x^2}\sin^2\left(\frac{\pi j}{n_x}\right) - \frac{ai}{\Delta x}\sin\left(\frac{2\pi j}{n_x}\right)\bigg).$

\begin{lemma}
    The eigenvalues of $\mathcal{A}_j\,\mathcal{A}_j^{\dag}$ take the form
    \begin{equation}
        \alpha_j = 1 + |l_j|^2 + 2|l_j|\cos \theta = \bigg( \frac{\sin \theta}{\sin (n_t \theta)}\bigg)^2\AD{,}
    \end{equation}
    where $\theta $ is defined by
    \begin{equation}
       |l_j| \sin (n_t \theta) + \sin((n_t+1)\theta) = 0
    \end{equation}
    and $\theta \neq k \pi$ for $k \in \mathbb{N}$.
    \label{lem: eigen_AAdag}
\end{lemma}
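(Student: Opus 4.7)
The plan is to reduce the eigenvalue problem $\mathcal{A}_j\mathcal{A}_j^{\dagger}\mathbf{v}=\alpha_j\mathbf{v}$ to a scalar three-term linear recurrence with tailored boundary conditions, solve that recurrence by a sinusoidal Ansatz, and then read off both the transcendental equation for $\theta$ and the closed-form value of $\alpha_j$ from the two endpoint constraints. First I would write the eigenvalue equation row by row: the interior rows $2\le k\le n_t-1$ give the bulk recurrence $-l_jv_{k-1}+(1+|l_j|^2)v_k-l_j^{\dagger}v_{k+1}=\alpha_j v_k$, while the top-left entry $1$ (rather than $1+|l_j|^2$) and the absent superdiagonal in the final row correspond precisely to extending the bulk recurrence to $k=1$ and $k=n_t$ subject to the fictitious values $v_0=l_j^{*}v_1$ and $v_{n_t+1}=0$. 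This packaging of the anomalous rows as boundary data is the crucial conceptual step.

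Next I would gauge away the phase of $l_j$ by setting $v_k=u_k e^{ik\arg l_j}$, which reduces the recurrence to the real form $|l_j|(u_{k+1}+u_{k-1})=(1+|l_j|^2-\alpha_j)u_k$ with boundary conditions $u_0=|l_j|u_1$ and $u_{n_t+1}=0$. Substituting the claimed parametrization $\alpha_j=1+|l_j|^2+2|l_j|\cos\theta$ reduces the characteristic equation to $r^2+2\cos\theta\, r+1=0$, so the general solution can conveniently be written as $u_k=F(-1)^k\sin((n_t+1-k)\theta)$, which already satisfies the right-hand boundary. Enforcing the left-hand condition $u_0=|l_j|u_1$ then gives directly $\sin((n_t+1)\theta)=-|l_j|\sin(n_t\theta)$, which is the stated quantization equation.

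Finally, to obtain $\alpha_j=(\sin\theta/\sin(n_t\theta))^2$ I would solve the quantization equation for $|l_j|$, substitute into $\alpha_j=1+|l_j|^2+2|l_j|\cos\theta$, and simplify the numerator over the common denominator $\sin^2(n_t\theta)$. Expanding $\sin((n_t+1)\theta)=\sin(n_t\theta)\cos\theta+\cos(n_t\theta)\sin\theta$ and collecting causes the cross terms to cancel, leaving $\sin^2\theta$ after one Pythagorean identity. The restriction $\theta\neq k\pi$ is exactly the case $\sin(n_t\theta)=0$, which makes numerator and denominator vanish simultaneously and must be excluded. The main obstacle I anticipate is translating the anomalous top and bottom rows into the correct boundary data $v_0=l_j^{*}v_1$ and $v_{n_t+1}=0$: a single misplaced sign or missing conjugate would flip the quantization relation from a sum into a difference and break the final trigonometric collapse. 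Verifying on $n_t=2$ (where $\mathcal{A}_j\mathcal{A}_j^{\dagger}$ is a $2\times2$ matrix whose eigenvalues can be read off by hand) would be a quick sanity check before committing to the general argument.
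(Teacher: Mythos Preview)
Your proposal is correct and follows the same underlying strategy as the paper: both set up the eigenvalue problem for $\mathcal{A}_j\mathcal{A}_j^{\dagger}$ as a three-term linear recurrence with perturbed endpoint equations, parametrize the eigenvalue as $1+|l_j|^2+2|l_j|\cos\theta$, and extract the quantization condition $|l_j|\sin(n_t\theta)+\sin((n_t+1)\theta)=0$. The paper, however, simply identifies its matrix with the template in Yueh's and Losonczi's tridiagonal-matrix papers (with $\alpha=|l_j|^2$, $\beta=0$, $a=-l_j$, $b=1+|l_j|^2$, $c=-l_j^{\dagger}$) and quotes the result without carrying out the recurrence solution or the trigonometric collapse to $(\sin\theta/\sin(n_t\theta))^2$.

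Your version is therefore more self-contained: the phase-gauging step $v_k=u_k e^{ik\arg l_j}$ explicitly reduces the complex tridiagonal problem to a real symmetric recurrence, the choice $u_k=F(-1)^k\sin((n_t+1-k)\theta)$ builds in the right-hand boundary from the start, and you actually perform the algebraic simplification the paper omits. Your identification of the boundary data $v_0=l_j^{*}v_1$, $v_{n_t+1}=0$ is correct and equivalent to the paper's formulation (which instead keeps $u_0=0$ but carries the anomalous $|l_j|^2 u_1$ term on the right-hand side of the first equation). Both lead to the same quantization relation.
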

We provide the proof in \AD{Appendix} \ref{app: eigenvalues}.

To find the condition number of $\mathcal{A}$ we need the maximum and minimum $\sigma_j = \sqrt{\alpha_j}$. 
\begin{theorem}
    Condition number of $\mathcal{A}$ is
    \[ \kappa_\mathcal{A} = 
        \begin{cases}
            \Theta(n_t) \ \text{if}\ \dfrac{a^2 T}{2n_tD} \leq 1, \\ \\
            \Theta \bigg(\sqrt{n_t^2 + \dfrac{n_ta^2T}{D}}\bigg) \ \text{if} \ \dfrac{a^2 T}{2n_tD} > 1. 
        \end{cases}
    \]
    Furthermore, 
    \[||\mathcal{A}|| =
    \begin{cases}
        \Theta(1) \ \text{if}\ \dfrac{a^2 T}{2n_tD} \leq 1, \\ \\
        \Theta\bigg(\sqrt{\dfrac{a^2 T}{n_tD}}\bigg) \ \text{if}\ \dfrac{a^2 T}{2n_tD} > 1
    \end{cases}
    \]
    and $||\mathcal{A}^{-1}|| = \Theta(n_t)$.
    \label{thm: kappa}
\end{theorem}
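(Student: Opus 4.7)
The plan is to exploit the block-diagonalization in Eq.~\eqref{eqn: A transformation}, which implies that the singular values of $\mathcal{A}$ are the union, over Fourier indices $j$, of the singular values of the $n_t\!\times\!n_t$ blocks $\mathcal{A}_j$ (extended to $d$-tuples in the multidimensional case). Consequently $\|\mathcal{A}\| = \max_j\sqrt{\alpha_{\max}(j)}$ and $\|\mathcal{A}^{-1}\| = \max_j 1/\sqrt{\alpha_{\min}(j)}$, where $\alpha_{\max}(j)$ and $\alpha_{\min}(j)$ are the extreme eigenvalues of $\mathcal{A}_j\mathcal{A}_j^{\dag}$ given by Lemma~\ref{lem: eigen_AAdag}. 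The task then splits into (i) finding the mode that maximizes $|l_j|$, which controls $\|\mathcal{A}\|$, and (ii) finding the mode that minimizes $\alpha_{\min}$, which controls $\|\mathcal{A}^{-1}\|$.

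For step (i), I would write $|l_j|^2 = (1 - u s^2)^2 + v^2 s^2(1-s^2)$ with $u := 4D\Delta t/\Delta x^2$, $v := 2a\Delta t/\Delta x$, and $s := \sin(\pi j/n_x)$, and then maximize over $s^2\in[0,1]$. The derivative in $s^2$ vanishes at an interior point iff $v^2 > 2u$; otherwise the function is nonincreasing on $[0,1]$. Using the saturated stability relation $\Delta x^2 = 2dD\Delta t$, the crossover $v^2 = 2u$ translates into $a^2T/(2n_tD) = 1$, which is exactly the theorem's threshold. In Case~1 one obtains $\max_j|l_j|\le 1$, so $\|\mathcal{A}\| = \max_j(1+|l_j|) = \Theta(1)$. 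In Case~2 the interior maximum yields $\max_j|l_j|^2 - 1 = (v^2-2u)^2/(4(v^2-u^2)) = \Theta(a^2T/(n_tD))$ when $v^2\gg u$, giving $\|\mathcal{A}\| = \Theta(\sqrt{a^2T/(n_tD)})$, as claimed.

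For step (ii) I would use $\alpha_j = 1 + |l_j|^2 + 2|l_j|\cos\theta$ from Lemma~\ref{lem: eigen_AAdag}, minimized at the admissible root $\theta^{\star}$ closest to $\pi$. The mode $j=0$ satisfies $\mu_0=0$ and hence $|l_0|=1$, so the transcendental equation $\sin(n_t\theta)+\sin((n_t+1)\theta)=0$ factors via sum-to-product as $2\sin\!\left(\tfrac{(2n_t+1)\theta}{2}\right)\cos(\theta/2)=0$, whose root nearest $\pi$ is $\theta^{\star} = 2\pi n_t/(2n_t+1) = \pi - \pi/(2n_t+1)$. This gives $\alpha_{\min}(0) = 4\cos^2(\theta^{\star}/2) = 4\sin^2(\pi/(2(2n_t+1))) = \Theta(1/n_t^2)$, so $\|\mathcal{A}_0^{-1}\| = \Theta(n_t)$. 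For modes with $|l_j|$ bounded away from $1$, the direct estimate $\|\mathcal{A}_j^{-1}\| = O(1/|1-|l_j||)$ yields a smaller bound; for nearby modes with $|l_j|\approx 1$, a continuity argument in the implicit equation for $\theta^{\star}$ shows $\alpha_{\min}(j) = \Omega(1/n_t^2)$. Combining, $\|\mathcal{A}^{-1}\| = \Theta(n_t)$ in both cases.

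Multiplying gives $\kappa_\mathcal{A} = \Theta(n_t)$ in Case~1 and $\kappa_\mathcal{A} = \Theta(n_t\sqrt{a^2T/(n_tD)}) = \Theta(\sqrt{n_t^2 + n_t a^2T/D})$ in Case~2, where the $n_t^2$ term under the square root reflects the $\|\mathcal{A}^{-1}\|$ contribution and enforces a continuous match with Case~1 at the threshold $v^2\asymp 2u$. The main obstacle I expect is the uniform lower bound on $\alpha_{\min}(j)$ over all $j$: as $|l_j|$ departs from $1$ the root $\theta^{\star}$ of the transcendental equation drifts off $\pi$, and one must verify via a monotonicity/implicit-function argument that neither the departure of $|l_j|$ from $1$ nor the shift in $\theta^{\star}$ conspires to produce an $\alpha_{\min}(j) = o(1/n_t^2)$ at some other mode. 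Establishing this uniformity is the delicate part of the proof.
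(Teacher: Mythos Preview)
Your treatment of $\|\mathcal{A}\|$ is essentially the paper's: both maximize $|l_j|$ over Fourier modes, identify the crossover $a^2T/(2n_tD)=1$, and read off $\sigma_{\max}=\Theta(1)$ or $\Theta\big(\sqrt{a^2T/(n_tD)}\big)$ accordingly. Your parametrization via $u,v,s$ is slightly cleaner than the paper's direct substitution, but the content is the same.

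For $\|\mathcal{A}^{-1}\|$ the routes diverge. You work with the form $\alpha_j = 1 + |l_j|^2 + 2|l_j|\cos\theta$, solve the transcendental equation explicitly at $j=0$ (where $|l_0|=1$) to get $\alpha_{\min}(0)=4\sin^2\!\big(\pi/(2(2n_t+1))\big)=\Theta(1/n_t^2)$, and then must argue uniformity over the remaining modes---the obstacle you correctly flag. The paper instead uses the \emph{other} closed form in Lemma~\ref{lem: eigen_AAdag}, namely $\alpha_j = \big(\sin\theta/\sin(n_t\theta)\big)^2$, and bounds this ratio directly: on $\theta\in(0,\pi/(2n_t)]$ the small-angle inequalities $\sin\theta\ge 2\theta/\pi$ and $\sin(n_t\theta)\le n_t\theta$ give $|\sin\theta/\sin(n_t\theta)|\ge 2/(\pi n_t)$, and the symmetry of $\sin(n_t\theta)$ plus the monotonicity of $\sin\theta$ extend this to all admissible $\theta$. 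This bound is automatically uniform in $j$, so the mode-by-mode verification you worry about is bypassed entirely. What your approach buys is an explicit witness that $\sigma_{\min}=O(1/n_t)$ actually is attained (at $j=0$), which the paper's proof asserts but does not exhibit; what the paper's approach buys is that your ``delicate part'' disappears.
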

We have included both results here for completeness but typically $\kappa_\mathcal{A} = \Theta(n_t)$ can be used since $T = n_t\Delta t$ which renders it unlikely that $a^2\Delta t/2D >1.$ We have provided the proof in \AD{Appendix} \ref{app: cond_num_A}.

\subsection{Linear equations method}
\label{ssec: lin_eqns}

In this section we describe the first classical method of solving the system of linear equations in \AD{Eq.}~\eqref{eqn: syslineareqns}. One approach for solving sparse systems of linear equations is the conjugate gradient method \cite{Shewchuk1994AnPain}. This method can solve a system of $N$ linear equations, with at most $s$ unknowns each, and a corresponding matrix $\mathcal{A}$ with condition number $\kappa_\mathcal{A}$ up to accuracy $\delta$ in the energy norm $||\cdot||_\mathcal{A}$ in time $O(s\sqrt{\kappa_\mathcal{A}}N\log(1/\delta))$. A flowchart and summary of the complexity analysis can be found in Fig. \ref{fig:classical_linear_flow}.

\begin{figure}[!ht]
\centering
\includegraphics[]{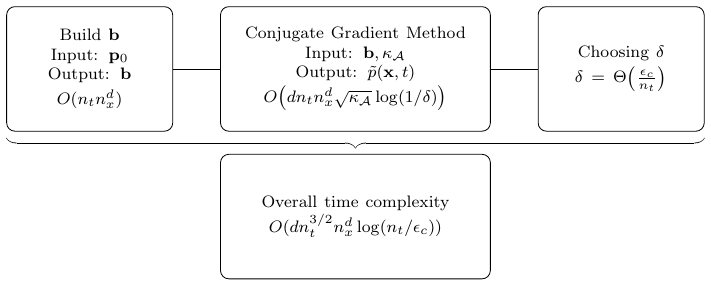}
\caption{A flowchart demonstrating the contributing factors to the overall time complexity of the classical linear equations method described by \AD{Theorem} \ref{thm: Lineqn}. The complexity for the conjugate gradient method is provided in Ref.~\cite{Shewchuk1994AnPain}.}
    \label{fig:classical_linear_flow}
\end{figure}

\begin{theorem}[Classical linear equations method] There is a classical algorithm that outputs an approximate solution $\tilde{p}(\mathbf{x}, t)$ to \AD{Eq.}~\eqref{eqn: DDE} via the system of linear equations $\mathcal{A}\tilde{\mathbf{p}} = \mathbf{b}$ from \AD{Eq.}~\eqref{eqn: syslineareqns} such that $||\tilde{p}(\mathbf{x}, t) - p(\mathbf{x}, t)||_{\infty} \leq \epsilon_c$ for all $(\mathbf{x},t) \in G$ in time
    \begin{equation}
        O\left(d n_t^{3/2}n_x^d \log\left(\frac{n_t}{\epsilon_c}\right)\right)= O\bigg(\frac{d^{4+d/2}T^{3 + d/2}}{D^{d/2}} \bigg(\frac{\zeta(aL + D)^2}{\epsilon_c }\bigg)^{(3 + d)/2}\log\bigg(\frac{T d \zeta (aL + D)^2}{\epsilon_c L^d}\bigg)\bigg).
    \end{equation}
    Suppressing the logarithmic terms this is 
    \begin{equation}
         \tilde{O}(d n_t^{3/2}n_x^d) = \tilde{O}\bigg(\frac{d^{4+d/2}T^{3 + d/2}}{D^{d/2}} \Big(\frac{\zeta(aL + D)^2}{\epsilon_c}\Big)^{(3 + d)/2}\bigg).
    \end{equation}
    \label{thm: Lineqn}
\end{theorem}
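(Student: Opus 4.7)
The plan is to apply the conjugate gradient complexity bound $O(s\sqrt{\kappa_{\mathcal{A}}}N\log(1/\delta))$ of Ref.~\cite{Shewchuk1994AnPain} to the linear system of Eq.~\eqref{eqn: syslineareqns}, then substitute the scalings already established in Sec.~\ref{sec: tech_ing} and Corollary~\ref{cor:dxdt bounds}.

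First I would identify the three conjugate gradient parameters. The total system size is $N = n_t n_x^d$, namely the number of time slices times the spatial block dimension. The per-row sparsity is $s = O(d)$: each row of $\mathcal{L} = \mathcal{I}_{n_x^d} + \Delta t\, \mathcal{M}$ couples only to its $2d$ nearest spatial neighbours plus itself, as is apparent from Eq.~\eqref{eqn: diff operator} and the Kronecker-sum structure of $\mathcal{M}$ in Eq.~\eqref{eqn: matrix_M}, while the block-subdiagonal coupling in $\mathcal{A}$ contributes only one further entry per row. The condition number obeys $\sqrt{\kappa_{\mathcal{A}}} = O(\sqrt{n_t})$ in the regime $a^2 T/(2 n_t D) \leq 1$ by Theorem~\ref{thm: kappa}; this is the physically relevant regime, as already noted immediately below that theorem.

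Next I would convert the native energy-norm accuracy of conjugate gradient into the $\infty$-norm accuracy demanded by Eq.~\eqref{eqn: classical_approx_definition}. Combining $||x||_\infty \leq ||x||_2$ with $||x||_2 \leq ||x||_{\mathcal{A}}/\sqrt{\sigma_{\min}(\mathcal{A})}$, and using $\sigma_{\min}(\mathcal{A}) = 1/||\mathcal{A}^{-1}|| = \Theta(1/n_t)$ from Theorem~\ref{thm: kappa}, it suffices to choose $\delta = \mathrm{poly}(\epsilon_c,1/n_t)$. The $\ell^1$-normalisation of $\tilde{\mathbf{p}}_0$ guaranteed by Eq.~\eqref{eqn: norm_p0}, together with the stochasticity of $\mathcal{L}$ that preserves it, ensures that $||\mathbf{b}||_2$ enters at worst polynomially and so does not spoil this choice. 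The entire conversion therefore contributes only a factor of $\log(n_t/\epsilon_c)$, matching the logarithm in the statement.

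Multiplying the ingredients gives $s\sqrt{\kappa_{\mathcal{A}}}N = d\cdot\sqrt{n_t}\cdot n_t n_x^d = d n_t^{3/2} n_x^d$, which yields the first form of the complexity. The second form follows by substituting $n_t$ and $n_x$ from Corollary~\ref{cor:dxdt bounds} and retaining the dominant factor $(aL+D)^2$ inside the square root. The main obstacle is the error-norm conversion: conjugate gradient controls the $\mathcal{A}$-energy norm whereas the theorem demands an $\infty$-norm bound, so care is needed to verify that the factors $1/\sigma_{\min}(\mathcal{A})$ and $||\mathbf{b}||_2$ appear polynomially in $n_t$ and $n_x$ and therefore only logarithmically in the final running time; all remaining steps are algebraic substitutions using ingredients already proved.
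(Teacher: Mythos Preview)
Your proposal is correct and follows essentially the same route as the paper: identify $N=n_tn_x^d$, $s=O(d)$, $\kappa_{\mathcal{A}}=\Theta(n_t)$ from Theorem~\ref{thm: kappa}, apply the conjugate gradient bound $O(s\sqrt{\kappa_{\mathcal{A}}}N\log(1/\delta))$, convert the energy-norm error to the $2$-norm (hence $\infty$-norm) via $\|\tilde{\mathbf{y}}-\mathbf{y}\|_2 \leq \|\mathcal{A}^{-1}\|^{1/2}\|\tilde{\mathbf{y}}-\mathbf{y}\|_{\mathcal{A}}$ with $\|\mathcal{A}^{-1}\|=\Theta(n_t)$, and then substitute Corollary~\ref{cor:dxdt bounds}. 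Your remark that $\|\mathbf{b}\|_2$ could in principle enter through the initial residual is a point the paper glosses over, but as you note it is at worst polynomial in $n_t,n_x$ and so is absorbed into the logarithm.
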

\begin{proof}
    Using Corollary \ref{cor:dxdt bounds} and \AD{Theorem} \ref{thm: kappa}, we achieve a classical \AD{discretization} accuracy $\epsilon_c$ in the $\infty$-norm with a system of $N = O(n_tn_x^d)$ linear equations, containing $O(d)$ variables and a condition number $\kappa_\mathcal{A} = \Theta(n_t)$.
    
    We can calculate $\bf{b}$ from the \AD{right-hand} side of \AD{Eq.}~\eqref{eqn: syslineareqns} in time $O(dn_x^d)$ by multiplying $\mathbf{p}_0$ and $\mathcal{L}$.
     In the following step it can be seen that this is negligible. Employing the conjugate gradient method the system of linear equations in \AD{Eq.}~\eqref{eqn: syslineareqns} can be solved with accuracy $\delta$ in the energy norm in time $O(d n_t^{3/2}n_x^d \log(1/\delta))$ \cite{Shewchuk1994AnPain}. 
    The dependence on $1/\delta$ is logarithmic so using any other norm within reason makes little difference to the complexity bound~\cite{Linden2022QuantumEquation}.
    For example,
    \begin{equation}
        ||\tilde{\mathbf{y}}-\mathbf{y}||_2=||\mathcal{A}^{-1/2}\, \mathcal{A}^{1/2}(\tilde{\mathbf{y}}-\mathbf{y})||_2 \leq ||\mathcal{A}^{-1/2}||\ ||\mathcal{A}^{1/2}(\tilde{\mathbf{y}}-\mathbf{y})||_2=||\mathcal{A}^{-1}||^{1/2}||\tilde{\mathbf{y}}-\mathbf{y}||_{\mathcal{A}}
    \end{equation}
    by triangle inequality. Using this, Theorem \ref{thm: kappa} and $\delta = \Theta(\epsilon_c/n_t)$ we achieve $\epsilon_c$ accuracy in the $2$-norm (and therefore the $\infty$-norm). This gives an overall complexity of $O(d n_t^{3/2}n_x^d \log(n_t/\epsilon_c))$. Inserting the expressions for $n_t$ and $n_x$  from Corollary \ref{cor:dxdt bounds} results in complexity
    \begin{equation}
        O\bigg(\frac{d^{4+d/2}T^{3 + d/2}}{D^{d/2}} \bigg(\frac{\zeta(aL + D)^2}{\epsilon_c }\bigg)^{(3 + d)/2}\log\bigg(\frac{T d \zeta (aL + D)^2}{\epsilon_c L^d}\bigg)\bigg)
    \end{equation}
    as stated.
    Suppressing the logarithmic terms this is 
    \begin{equation}
         \tilde{O}(d n_t^{3/2}n_x^d) = \tilde{O}\bigg(\frac{d^{4+d/2}T^{3 + d/2}}{D^{d/2}} \Big(\frac{\zeta(aL + D)^2}{\epsilon_c}\Big)^{(3 + d)/2}\bigg).
    \end{equation}
\end{proof}
This method is an application of Ref.~\cite[\AD{Theorem} 5]{Linden2022QuantumEquation} and is known to be applicable to a broader range of PDEs.
 The next method takes advantage of our forward-time discretization, and provides a significantly simpler and more efficient solution.

\subsection{Time-evolution method}
\label{ssec: time_evolution}
For problems using forward-time discretization, the simplest linear solution method is a straightforward time evolution: where we apply the operator $\mathcal{L}$ to $\mathbf{p}_0$, $n_t$ times. This approach is significantly simpler and more efficient than solving the full system of linear equations. See Fig. \ref{fig: time_flow} for a complexity analysis summary.
\begin{figure}[!ht]
\centering
\includegraphics[]{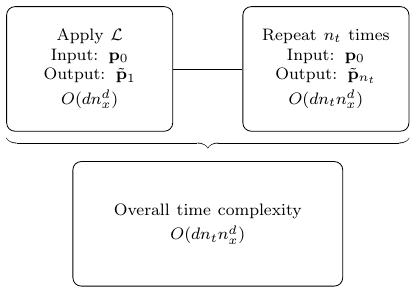}
    \caption{A flowchart demonstrating the contributing factors to the overall time complexity of the classical \AD{time-evolution} method described by \AD{Theorem} \ref{thm: timestepping}.}
    \label{fig: time_flow}
\end{figure}

\begin{theorem}[Classical time-evolution method] There is a classical algorithm that produces an approximate solution $\tilde{p}(\mathbf{x}, t)$ to \AD{Eq.}~\eqref{eqn: DDE} such that $||\tilde{p}(\mathbf{x}, t) - p(\mathbf{x}, t)||_{\infty} \leq \epsilon_c$ for all $(\mathbf{x}, t) \in G$ in time 
    \begin{equation}
        O(dn_tn_x^d) = O\bigg(\frac{d^{d/2 + 3}T^{d/2 + 2}}{D^{d/2}}\bigg(\frac{\zeta(aL + D)^2}{\epsilon_c}\bigg)^{d/2+1}\bigg).
    \end{equation}
    \label{thm: timestepping}
\end{theorem}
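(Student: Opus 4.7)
The plan is to exploit the fact that time-evolution of the discretized PDE amounts to repeatedly applying the linear operator $\mathcal{L}$ defined in Eq.~\eqref{eqn: M to L} to the initial vector $\mathbf{p}_0$. Since the approximation error in the $\infty$-norm is already controlled by Theorem~\ref{thm: approx linf} through the choice of $\Delta x$ and $\Delta t$, the task reduces to a pure runtime accounting: (i) count the cost of one application of $\mathcal{L}$, (ii) multiply by $n_t$ since we apply $\mathcal{L}$ iteratively to advance from $t=0$ to $t=T$, and (iii) substitute the values of $n_t$ and $n_x$ from Corollary~\ref{cor:dxdt bounds} to express everything in terms of the physical parameters $d, T, L, a, D, \zeta, \epsilon_c$.

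For step (i), I would note that $\mathcal{M}$ in Eq.~\eqref{eqn: matrix_M} is a sum of $d$ terms, each of which acts as a circulant $\mathcal{H}$ on one coordinate and as identity on the others. Consequently $\mathcal{L} = \mathcal{I}_{n_x^d} + \Delta t\,\mathcal{M}$ has at most $2d+1$ nonzero entries per row (one diagonal entry, plus two off-diagonals from each of the $d$ copies of $\mathcal{H}$). Thus applying $\mathcal{L}$ to a length-$n_x^d$ vector costs $O(d\, n_x^d)$ arithmetic operations, all of which are elementary under the cost assumption in Sec.~\ref{ssec: statement}. For step (ii), performing $n_t$ such matrix-vector products gives the overall runtime $O(d\,n_t\,n_x^d)$, which establishes the first form of the complexity in the theorem. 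The accuracy guarantee is inherited directly from Corollary~\ref{cor:dxdt bounds}, because the vector produced after $n_t$ applications of $\mathcal{L}$ is exactly $\tilde{\mathbf{p}}_{n_t}$, the discretized solution whose $\infty$-norm error against $\mathbf{p}_{n_t}$ is at most $\epsilon_c$ by construction.

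For step (iii), I would insert $n_t = O\bigl(T^2 d^2 \zeta(aL+D)^2/\epsilon_c\bigr)$ and $n_x^d = O\bigl((Td\zeta L^2(aL+D)^2/\epsilon_c D)^{d/2}\bigr)$ from Corollary~\ref{cor:dxdt bounds}, absorbing the polynomial $3a^2L^2+8aDL+4D^2$ into $(aL+D)^2$ up to constants. Collecting $d^3 \cdot d^{d/2}$, $T^2 \cdot T^{d/2}$, and $(\zeta(aL+D)^2/\epsilon_c)^{1+d/2}$, with the $D^{-d/2}$ factor from $n_x^d$, yields the displayed form
\begin{equation*}
O\!\left(\frac{d^{d/2+3}\,T^{d/2+2}}{D^{d/2}}\left(\frac{\zeta(aL+D)^2}{\epsilon_c}\right)^{d/2+1}\right).
\end{equation*}

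There is no genuine conceptual obstacle here; the whole argument is structurally simpler than Theorem~\ref{thm: Lineqn} because we bypass the system in Eq.~\eqref{eqn: syslineareqns} and never invoke the condition number $\kappa_{\mathcal{A}}$ from Theorem~\ref{thm: kappa}. The only points requiring care are (a) justifying the $O(d)$ sparsity of $\mathcal{L}$ from its Kronecker-sum structure in Eq.~\eqref{eqn: matrix_M} rather than from direct inspection of Eq.~\eqref{eqn: diff operator}, and (b) keeping the bookkeeping tidy when collapsing the Corollary~\ref{cor:dxdt bounds} expressions, so that the final exponents on $d$, $T$, $\zeta$ and $\epsilon_c$ come out to $d/2+3$, $d/2+2$, $d/2+1$ and $d/2+1$ respectively. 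No logarithmic factors appear, since iterative time-stepping avoids the $\log(1/\delta)$ overhead of the conjugate-gradient solver used in Theorem~\ref{thm: Lineqn}.
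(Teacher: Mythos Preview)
Your proposal is correct and follows essentially the same route as the paper's own proof: apply $\mathcal{L}$ to $\mathbf{p}_0$ a total of $n_t$ times at cost $O(dn_x^d)$ per application, then substitute the values of $n_t$ and $n_x$ from Corollary~\ref{cor:dxdt bounds}. Your write-up is in fact more detailed than the paper's, which simply asserts the $O(dn_x^d)$ cost per step without spelling out the $(2d+1)$-sparsity argument from the Kronecker-sum structure.
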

\begin{proof}
   We apply the operator $\mathcal{L}$ as defined in \AD{Eq.} (\ref{eqn: diff operator}) $n_t$ times to the initial vector $\mathbf{p}_0$. Each application can be performed in time $O(dn_x^d)$, \AD{and} therefore all required $\tilde{\mathbf{p}}_{i}$ can be calculated in time $O(dn_tn_x^d)$. Using the bounds from Corollary \ref{cor:dxdt bounds} for $n_t$ and $n_x$ gives the claimed result.
\end{proof}
We have shown the \AD{time-evolution} method in \AD{Theorem} \ref{thm: timestepping} is more efficient than the linear equations method in \AD{Theorem}~\ref{thm: Lineqn} by $\Tilde{O}(\sqrt{n_t})$.
However, there is no efficient quantum equivalent to \AD{Theorem} \ref{thm: timestepping} as is shown by \AD{Theorem} \ref{thm: QTM}. 
The quantum linear equations method in Sec.~\ref{ssec: q_lin_eqns} will thus be compared to the method in \AD{Theorem}~\ref{thm: Lineqn}.

\subsection{Random walk}
An alternative solution method involves using a random walk to solve the DDE in \AD{Eq.}~\eqref{eqn: DDE}. This approach exploits the stochastic nature of operator $\mathcal{L}$.
We utilise the coupling from the past (CFTP) method~\cite{Propp1998HowGraph} to obtain samples of $\tilde{p} (\mathbf{x},t)$. Starting from the initial probability distribution $\mathbf{p}_0$, we iterate the random walk $n_t$ times to generate samples from the distributions at each subsequent time step.
However, to achieve the accuracy required by \AD{Eq.}~\eqref{eqn: approx_definition} an exact sample of the probability distribution is needed (using the definition of exact sampling found in Ref. \cite{Mackay1995InformationAlgorithms}).  We demonstrate the overall complexity in the flowchart in Fig. \ref{fig: Classical_RW}.

\begin{figure}[!ht]
\centering
\includegraphics[]{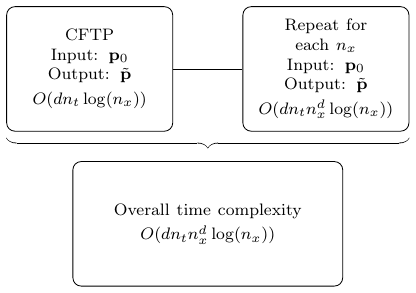}
    \caption{A flowchart demonstrating the contributing factors to the overall time complexity of the classical \AD{random-walk} method described by \AD{Theorem} \ref{thm: timestepping}. Here CFTP represents the coupled from the past method~\cite{Propp1998HowGraph}.}
    \label{fig: Classical_RW}
\end{figure}

\begin{theorem}[Random walk]
    There exists a classical algorithm that outputs an exact sample from the distribution $\tilde{\mathbf{p}}_k$ such that $||\tilde{p} (\mathbf{x},t) -p(\mathbf{x},t)||_{\infty} \leq \epsilon_c$ for all $k = 0,...,n_t $ in time 
    \begin{equation}
        O(n_tn_x^dd\log n_x) = O\bigg(\frac{d^{d/2 + 3}T^{d/2 + 2}}{D^{d/2}}\bigg(\frac{\zeta(aL + D)^2}{\epsilon_c}\bigg)^{d/2+1}\log\bigg(\frac{Td\zeta(aL + D)^2}{\epsilon_c D}\bigg)\bigg).
    \end{equation}
    Suppressing the logarithmic terms this is  
    \begin{equation}
        \tilde{O}(dn_tn_x^d) = \tilde{O}\bigg(\frac{d^{3+d/2}T^{2+d/2}\zeta^{1+d/2}}{\epsilon_c^{(d/2+1)}D^{d/2}}(aL + D)^{2+d}\bigg).
        \label{eqn: comp_random_walk}
    \end{equation}
    \label{thm: exact_sampling}
\end{theorem}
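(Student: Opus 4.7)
The plan is to realize the discretized evolution operator $\mathcal{L}$ as the transition matrix of a Markov chain on the $n_x^d$-point grid and then produce an exact sample from $\tilde{\mathbf{p}}_k=\mathcal{L}^k\mathbf{p}_0$ by simulating the chain. Theorem~\ref{thm: approx linf} already established that, with the timestep restriction $\Delta t\leq\Delta x^2/2dD$ baked into Corollary~\ref{cor:dxdt bounds}, every entry of $\mathcal{L}$ in Eq.~\eqref{eqn: diff operator} is non-negative and its columns sum to one, so $\mathcal{L}$ defines a valid Markov chain. I would therefore start by reading off the chain explicitly: from any grid state the walker stays put with probability $1-2dD\Delta t/\Delta x^2$ and otherwise moves by $\pm\Delta x$ in one of the $d$ coordinate directions with the probabilities prescribed by the off-diagonal entries of $\mathcal{L}$. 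Sampling a single transition is then an $O(d)$ operation over the $2d+1$ local outcomes.

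Next, I would invoke the coupling-from-the-past construction of Ref.~\cite{Propp1998HowGraph} to guarantee that the resulting samples are genuinely exact rather than approximate. CFTP requires running coupled copies of the chain from each of the $n_x^d$ candidate initial states with shared randomness and reading off the common coalesced state; combined with the known initial distribution $\mathbf{p}_0$, which is turned into an exact-sampling primitive by computing its CDF on the grid in $O(n_x^d)$ time and inverting it in $O(d\log n_x)$ via binary search, one obtains a single exact sample from $\tilde{\mathbf{p}}_k$ for each $k=1,\ldots,n_t$ along the coupled trajectory. Every coupled update step costs $O(d)$ work at each of the $n_x^d$ states, and $n_t$ such updates must be carried out to reach the final time $T$.

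The dominant cost is thus $O(n_t\cdot n_x^d\cdot d)$ for the coupled forward evolution, augmented by the $O(\log n_x)$ factor arising from locating each transition outcome in the cumulative distribution, yielding the claimed $O(n_t n_x^d d\log n_x)$ runtime. Substituting $n_t$ and $n_x$ from Corollary~\ref{cor:dxdt bounds} converts this into the explicit problem-parameter expression in the theorem statement, while dropping logarithms gives Eq.~\eqref{eqn: comp_random_walk}.

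The main obstacle in writing the proof out cleanly is separating the two distinct sources of error: the classical discretization error $\epsilon_c$, which is inherited verbatim from Theorem~\ref{thm: approx linf} and controls how close $\tilde{\mathbf{p}}_k$ is to the continuous $p(\mathbf{x},t)$; and the sampling error, which one must argue is exactly zero by virtue of using a genuinely exact sampler such as CFTP rather than an MCMC approximation. Once these are disentangled, the $\infty$-norm approximation guarantee follows directly from the discretization analysis, and no further Chernoff- or concentration-style argument is required to produce a single exact sample of the chain.
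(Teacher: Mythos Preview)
Your proposal is correct and follows essentially the same route as the paper: realize $\mathcal{L}$ as a column-stochastic Markov transition matrix, invoke the coupling-from-the-past procedure of Ref.~\cite{Propp1998HowGraph} with coupled copies from all $n_x^d$ grid states, and read off the $O(n_t n_x^d d\log n_x)$ cost before substituting Corollary~\ref{cor:dxdt bounds}. The only cosmetic difference is that the paper attributes the $\log n_x$ factor to the quoted CFTP runtime $O(H\log n_x^d)$ with hitting time $H=n_t$, whereas you attribute it to binary search in the local cumulative distribution; both accountings land on the same bound.
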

\begin{proof}
We use the CFTP method to obtain an efficient exact sampling. Since for any probability distribution we can consider a random variable $X$ of the probability distribution. 
Then this method allows us to use $\mathcal{L}$ to map the random variable $X$ to a final distribution $\tilde{\mathbf{p}}_k$. This mapping runs in time $O(H\log n_x^d)$ where $H$ is the maximum hitting time. The hitting time is defined as the maximum time required to move from any state $k$ to state $j$,
which in our case is $H = n_t$. 

Crucially, exact sampling necessitates we initiate the sampling process from each of the $n_x^d$ points at $t = 0$ in $G$. 
Consequently, the overall time complexity for generating a complete exact sample is $O(n_tn_x^dd\log n_x)$. Substituting in the values for $n_t$ and $n_x$ from Corollary \ref{cor:dxdt bounds} yields the shown result. 
\end{proof}
The method's requirement to sample from all $n_x^d$ points in space to produce an exact sampling renders it inefficient. Consequently, its $\epsilon_c$ dependence mirrors that of the time-evolution method in \AD{Theorem} \ref{thm: timestepping}. Moreover, considering the scaling of each variable, the overall efficiency in \AD{Eq.}~\eqref{eqn: comp_random_walk} exhibits worse performance than the time-evolution method in \AD{Theorem} \ref{thm: timestepping}. 
If only the expected value, not the full distribution, is needed, \AD{then} a sampling method such as a random walk would likely be more efficient.
This was demonstrated for the integral of the solution to a heat equation~\cite{Linden2022QuantumEquation} which required fewer samples than the method presented in \AD{Theorem} \ref{thm: exact_sampling}.

\subsection{\AD{Diagonalization} by Fourier transform}

While the \AD{time-evolution} method in Sec. \ref{ssec: time_evolution} exhibits the most efficient complexity scaling thus far, a further improvement is possible when we are only interested in the final time result. Diagonalizing $\mathcal{L}$ using the FFT allows for the calculation of multiple time steps simultaneously. This approach requires only knowledge of the eigenvalues, $l_j$, of $\mathcal{L}$. The method and complexity for computing these eigenvalues are detailed in the following lemma.

\begin{lemma}
    For any $\tau \in {0,...,n_t}$, and any $\delta >0$, all of the eigenvalues of $\mathcal{L}^\tau$ can be computed up to accuracy $\delta$ in time $O(dn_x^d + n_x\log(\tau/\delta))$.
    \label{lemma: eigen_L}
\end{lemma}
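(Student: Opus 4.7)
The plan is to leverage Lemma~\ref{lem: eigen_M}, which characterises the $n_x^d$ eigenvalues of $\mathcal{M}$ as the sums $\mu_{j_1}+\cdots+\mu_{j_d}$, with $\mu_j$ available in closed form from \AD{Eq.}~\eqref{eqn: eigen_M}. Since $\mathcal{L}=\mathcal{I}_{n_x^d}+\Delta t\,\mathcal{M}$, the eigenvalues of $\mathcal{L}^\tau$ are $\bigl(1+\Delta t\sum_i\mu_{j_i}\bigr)^\tau$ indexed by $(j_1,\ldots,j_d)\in\{0,\ldots,n_x-1\}^d$, and I would compute them in three phases whose costs add to the claimed bound.

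First, I would evaluate the $n_x$ base values $\mu_0,\ldots,\mu_{n_x-1}$ directly from the trigonometric formula in \AD{Eq.}~\eqref{eqn: eigen_M}; under the constant-time real arithmetic assumption of Sec.~\ref{ssec: statement} each costs $O(1)$, so this phase is $O(n_x)$. Second, I would sweep the $n_x^d$ tuples $(j_1,\ldots,j_d)$ and, for each, form $l_{j_1,\ldots,j_d}=1+\Delta t\sum_i\mu_{j_i}$ from the precomputed base values; at $O(d)$ arithmetic operations per tuple this is $O(dn_x^d)$, and with a lexicographic sweep that updates only the innermost changed summand it can be brought down further, though the naive count is already within budget.

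Third, I would raise each eigenvalue to the $\tau$-th power to accuracy $\delta$. Repeated squaring produces $l^\tau$ in $O(\log\tau)$ multiplications, and propagating rounding through these squarings while demanding final accuracy $\delta$ forces a working precision of $O(\log(\tau/\delta))$ per multiplication, which is where the $\log(\tau/\delta)$ factor of the claimed bound originates. The main obstacle, and the step I expect to require the most care, is reducing the number of high-precision exponentiations from the naive $n_x^d$ down to the $n_x$ that the lemma claims. My route would be to precompute once, to precision $\delta$ and at cost $O(n_x\log(\tau/\delta))$, a table of $\tau$-th powers indexed by scalars in one-to-one correspondence with the $n_x$ base values $\mu_j$, and then to show that every composite $\bigl(1+\Delta t\sum_i\mu_{j_i}\bigr)^\tau$ can be assembled from this table within the $O(dn_x^d)$ enumeration loop without any additional high-precision exponentiation. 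Establishing this assembly identity cleanly, and verifying that its error budget does not spoil the final $\delta$-accuracy after combination across $d$ factors, is the delicate analytic step; once in place, the three phase costs sum to $O(dn_x^d+n_x\log(\tau/\delta))$ as asserted.
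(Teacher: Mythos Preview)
Your three-phase plan tracks the right structure, but the final phase rests on an identity that does not exist. The quantity $(1+\Delta t\sum_i\mu_{j_i})^\tau$ simply does not factor into a product of per-coordinate pieces determined by the individual $\mu_{j_i}$, so no precomputed table of $n_x$ ``$\tau$-th powers'' can be assembled into the $n_x^d$ composite eigenvalues by $O(d)$ arithmetic each. The ``delicate analytic step'' you flag is therefore not delicate but impossible, and without it your argument produces $O(n_x^d\log(\tau/\delta))$, not the claimed $O(dn_x^d+n_x\log(\tau/\delta))$.

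The paper locates the $\log(\tau/\delta)$ factor in a completely different place. Under the cost model of Sec.~\ref{ssec: statement} only additions and multiplications are unit-cost; the trigonometric functions in \AD{Eq.}~\eqref{eqn: eigen_M} are not elementary and must be Taylor-expanded. Computing $\sin(\pi j/n_x)$ and $\sin(2\pi j/n_x)$ to accuracy $O(\delta/\tau)$ takes $O(\log(\tau/\delta))$ terms, and there are only $n_x$ distinct arguments, which is where the $n_x\log(\tau/\delta)$ precomputation cost originates. The paper then uses the bound $|l_j|\le 1$ (established from $\Delta t=\Delta x^2/2dD$) to argue that $\delta/\tau$ accuracy in $l_j$ propagates to $\delta$ accuracy in $l_j^\tau$; your write-up never invokes this bound, yet it is what makes the error analysis linear in $\tau$ rather than exponential. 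Once the $n_x$ trig values are in hand, each of the $n_x^d$ eigenvalues of $\mathcal{L}$ is a sum of $d$ precomputed reals, costing $O(d)$ each in exact arithmetic---that is the $O(dn_x^d)$ term. (The paper's own proof is somewhat loose here, concluding with $O(n_x^d\log(\tau/\delta))$ rather than the sharper split in the lemma statement, but the intended decomposition is the one just described and is how the lemma is invoked later in \AD{Theorem}~\ref{thm: QFT}.)
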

\begin{proof}
   From Eqns. (\ref{eqn: matrix_M}) and \eqref{eqn: M to L}, 
       \begin{equation}
        \mathcal{L} = \mathcal{I}_{n_x}^{\otimes d} + \Delta t \sum_{j = 1}^{d} \mathcal{I}_{n_x}^{\otimes (j-1)}\otimes \mathcal{H} \otimes \mathcal{I}_{n_x}^{\otimes (d-j)}.
        \label{eqn: matrixL}
    \end{equation}
    Then using Eqns. \eqref{eqn: eigen_L} and \eqref{eqn: eigen_M}, the eigenvalues of $\mathcal{L}$ are
    \begin{equation}
        l_j = 1 - \frac{4Dd\Delta t}{\Delta x^2}\sin^2\bigg(\frac{\pi j}{n_x}\bigg) + \AD{i} \frac{a d\Delta t }{\Delta x}\sin\bigg(\frac{2\pi j}{n_x}\bigg)
        \label{eqn: eigen_L_full}
    \end{equation}
    for $j\in{0,...,n_x-1}$. Using Corollary \ref{cor:dxdt bounds}, we choose $\Delta t$ and $\Delta x$ such that $\Delta t = \Delta x^2/2dD$. Then $|l_j| \in [0,1]$ 
     while $a\Delta x /2D \leq  1$ which as discussed in Sec. \ref{sec: tech_ing} is a likely bound.
    If each eigenvalue of $\mathcal{L}$ is computed up to accuracy $\delta'$ then to accomplish an accuracy of $\delta$ for the corresponding eigenvalue of $\mathcal{L}^{\tau}$, \AD{then} it is sufficient to take $\delta' = \delta/\tau$. 
    As, given the approximation $\tilde{l} = l \pm \delta'$, so $|l^{\tau} - \tilde{l}^{\tau}|\leq \tau\delta'$. Therefore, each eigenvalue only needs to be computed up to accuracy $O(\delta/\tau)$. 
    This accuracy is achieved by Taylor expansion of the trigonometric functions up to $O(\log(\tau/\delta))$ terms.
    There are $n_x^d$ eigenvalues of $\mathcal{L}$ and so the complexity of computing all of the eigenvalues is $O(n_x^d\log(\tau/\delta))$.
\end{proof}
The first step in this method is the FFT. Using the \AD{precalculated} eigenvalues from Lemma~\ref{lemma: eigen_L}, we can implement the diagonal matrix $\Lambda^{n_t}$, thereby evolving the solution through multiple time steps with a single calculation. 
This method is described in the theorem below and summarized in the flowchart in Fig. \ref{fig: FFT_flow}.

\begin{figure}[!ht]
\centering
\includegraphics[]{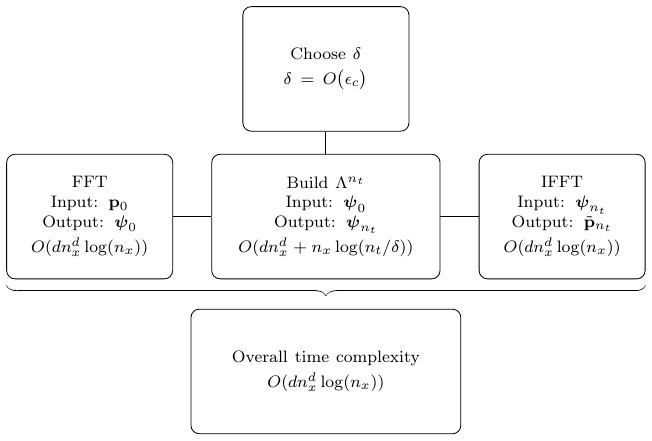}
    \caption{A flowchart demonstrating the contributing components to the overall time complexity of the classical \AD{time-stepping} method in \AD{Theorem} \ref{thm: Classical_FFT}. The complexity for the FFT (and its inverse) is as described in Ref. \cite{Frigo2005TheFFTW3}. $\Lambda$ denotes the diagonal matrix made up of the eigenvalues of $\mathcal{L}$ and the complexity is derived in Lemma \ref{lemma: eigen_L}.}
    \label{fig: FFT_flow}
\end{figure}

\begin{theorem}[Classical \AD{diagonalization} by fast Fourier transform] There is a classical algorithm that outputs an approximate solution $\tilde{p}(\mathbf{x}, t)$ such that $||\tilde{p}(\mathbf{x}, t) - p(\mathbf{x}, t)||_{\infty} \leq \epsilon_c$ for all $(\mathbf{x}, T)\in G$ at final time $T$, in time
    \begin{equation}
        O(dn_x^d \log n_x) = O\bigg(d^{d/2+1}\bigg(\frac{T\zeta L^2(aL + D)^2}{\epsilon_c D}\bigg)^{d/2} \log \bigg(\frac{Td\zeta L^2(aL + D)^2}{\epsilon_c D}\bigg)\bigg).
    \end{equation}
    For clarity, suppressing logarithmic terms this is 
    \begin{equation}
        \tilde{O}(dn_x^d) = \tilde{O}\bigg(\frac{d^{d/2+1}(T\zeta)^{d/2}L^d(aL + D)^d}{\epsilon_c^{d/2} D^{d/2}}\bigg).
    \end{equation}
    \label{thm: Classical_FFT}
\end{theorem}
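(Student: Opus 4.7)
\textbf{Proof proposal for Theorem \ref{thm: Classical_FFT}.} The strategy is to compute $\tilde{\mathbf{p}}_{n_t} = \mathcal{L}^{n_t} \mathbf{p}_0$ by exploiting the diagonalization of $\mathcal{L}$ established in Sec.~\ref{sec: tech_ing}. By Lemma~\ref{lem: eigen_M} and Eq.~\eqref{eqn: M to L}, $\mathcal{L} = (\mathcal{F}^{\otimes d})^\dagger \Lambda \mathcal{F}^{\otimes d}$ where $\Lambda$ is the diagonal matrix of eigenvalues $l_j$ from Eq.~\eqref{eqn: eigen_L_full}. Hence $\mathcal{L}^{n_t} = (\mathcal{F}^{\otimes d})^\dagger \Lambda^{n_t} \mathcal{F}^{\otimes d}$, and the algorithm is simply: (i) apply the $d$-dimensional FFT to $\mathbf{p}_0$; (ii) multiply componentwise by the precomputed entries of $\Lambda^{n_t}$; (iii) apply the inverse $d$-dimensional FFT.

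The plan for the complexity count is to tally each stage and to verify that the eigenvalue precomputation and its truncation error do not dominate. The $d$-dimensional FFT on $n_x^d$ points runs in time $O(dn_x^d \log n_x)$ by applying one-dimensional FFTs along each axis~\cite{Frigo2005TheFFTW3}, and the inverse is identical. The componentwise multiplication costs $O(n_x^d)$. By Lemma~\ref{lemma: eigen_L} with $\tau = n_t$, all eigenvalues of $\Lambda^{n_t}$ can be computed to accuracy $\delta$ in time $O(dn_x^d + n_x\log(n_t/\delta))$, which is absorbed into $O(dn_x^d \log n_x)$ once $\log(n_t/\delta)$ is logarithmic in the problem parameters.

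The next step is to pin down the required eigenvalue precision. Writing $\tilde{\Lambda}^{n_t}$ for the approximation with $|\tilde{l}_j^{n_t} - l_j^{n_t}| \le \delta$, and recalling that the FFT is unitary so preserves the $2$-norm, the error vector after step (iii) has $2$-norm at most $\delta\, \|\mathcal{F}^{\otimes d}\mathbf{p}_0\|_2 = \delta\, \|\mathbf{p}_0\|_2 \le \delta$, since Eq.~\eqref{eqn: norm_p0} gives $\|\mathbf{p}_0\|_1=1$ and hence $\|\mathbf{p}_0\|_2 \le 1$. Because $\|\cdot\|_\infty \le \|\cdot\|_2$, choosing $\delta = \Theta(\epsilon_c)$ makes the eigenvalue-truncation contribution to $\|\tilde{\mathbf{p}}_{n_t} - \mathbf{p}_{n_t}\|_\infty$ negligible relative to the discretization contribution already bounded by Theorem~\ref{thm: approx linf} and Corollary~\ref{cor:dxdt bounds}. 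The factor $\log(n_t/\delta) = \log(n_t/\epsilon_c)$ is logarithmic in $T$, $\zeta$, $L$, $D$, $a$, $d$, $1/\epsilon_c$, and is thus absorbed in $\tilde{O}(\cdot)$.

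Finally, substituting the value $n_x = \sqrt{Td\zeta L^2(3a^2L^2 + 8aDL + 4D^2)/(3\epsilon_c D)}$ from Corollary~\ref{cor:dxdt bounds} into the dominant term $O(dn_x^d \log n_x)$ yields the stated complexity. The main thing to watch is the error-propagation step: even though $\mathcal{L}$ is not unitary, applying $\mathcal{L}^{n_t}$ to $\mathbf{p}_0$ is stable because stochasticity (established in the proof of Theorem~\ref{thm: approx linf}) implies $\|\mathcal{L}\| \le 1$ in a suitable norm, so perturbing the eigenvalues by $\delta$ perturbs $\tilde{\mathbf{p}}_{n_t}$ by $O(\delta)$ in the $\infty$-norm as argued above. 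No stage other than the two FFTs reaches $O(dn_x^d \log n_x)$, so the bound follows.
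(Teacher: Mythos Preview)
Your proposal is correct and follows essentially the same approach as the paper: diagonalize $\mathcal{L}$ via the $d$-dimensional FFT, apply $\Lambda^{n_t}$ componentwise using Lemma~\ref{lemma: eigen_L}, invert the FFT, and bound the eigenvalue-truncation error by $\delta\|\mathbf{p}_0\|_2\le\delta\|\mathbf{p}_0\|_1=\delta$ with $\delta=\Theta(\epsilon_c)$ before substituting $n_x$ from Corollary~\ref{cor:dxdt bounds}. The only cosmetic difference is that the paper records the eigenvalue-multiplication step as also reaching $O(dn_x^d\log n_x)$ (so it ``contributes equally'' to the FFT), whereas you say it is absorbed; either way the final bound is identical.
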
 
\begin{proof}
    Lemma \ref{lemma: eigen_L} demonstrated that $\mathcal{L}$ is a sum of circulant matrices operating on $d$ independent dimensions. 
   Therefore, $\mathcal{L}$ is \AD{diagonalize}d by the $d$\AD{th} tensor power of the discrete Fourier transform. The following expression is used to approximately compute $\mathbf{\tilde{p}}_i$:
    \begin{equation}
        \tilde{\mathbf{p}}_i = \mathcal{L}^i\mathbf{p}_0 = (\mathcal{F}^{\otimes d})^{-1}\Lambda^i\mathcal{F}^{\otimes d}\mathbf{p}_0\AD{,}
    \end{equation}
    \AD{w}here $\Lambda$ is the diagonal matrix whose entries are the eigenvalues of 
    $\mathcal{L}$, and $\mathcal{F}$ is the discrete Fourier transform. 

We first construct $\mathbf{p}_0$ in $O(n_x^d)$ time. Then we apply the multidimensional FFT to $\mathbf{p}_0$ to yield an intermediary vector $\boldsymbol{\psi}_0$ in $O(dn_x^d\log n_x)$ time \cite{Frigo2005TheFFTW3}. Subsequently, we multiply $\boldsymbol{\psi}_0$ by the eigenvalues of $\mathcal{L}^i$, computed to accuracy $\delta$ using Lemma~\ref{lemma: eigen_L} in time $O( n_x^d\log(\tau/\delta))$. We achieve this by building the diagonal matrix $\tilde{\Lambda}^i$ such that $||\tilde{\Lambda}^i - \Lambda^i|| \leq \delta$. Then
    \begin{equation}
        ||(\mathcal{F}^{\otimes d})^{-1}\tilde{\Lambda}^i\mathcal{F}^{\otimes d}\mathbf{p}_0 - (\mathcal{F}^{\otimes d})^{-1}\Lambda^i\mathcal{F}^{\otimes d}\mathbf{p}_0||_2 \leq ||\tilde{\Lambda}^i - \Lambda^i|| ||\mathbf{p}_0||_2 \leq \delta ||\mathbf{p}_0||_1 = \delta,
    \end{equation}
    where as in \AD{Eq.} \eqref{eqn: norm_p0}, $||\mathbf{p}_0||_1 = 1$. Therefore, it is sufficient to take $\delta = \epsilon_c$ and $\tau = n_t$. Then the total complexity of this step is
    \begin{equation}
        O\bigg(n_x^d\log \bigg(\frac{n_t}{\epsilon_c}\bigg)\bigg) = O(n_x^d(\log n_t + \log (1/\epsilon_c))).
        \label{eqn: fft_complex}
    \end{equation}
    We note that from Corollary \ref{cor:dxdt bounds} that
        \begin{equation}
            n_t = \frac{n_x^2 TdL^2}{2D},
    \end{equation} 
    so $\log n_t = O(\log n_x)$ as well as $n_x \propto 1/\sqrt{\epsilon_c}$. Then we can simplify \AD{Eq.}~\eqref{eqn: fft_complex} to $O(n_x^d(d\log n_x))$. 
    
    Therefore, we have shown that the dominant factor determining the method's overall complexity is a combination of the FFT and the eigenvalue computation. Both contribute equally to the overall time complexity.
    Inserting the values for $n_x$ from Corollary \ref{cor:dxdt bounds} results in
    \begin{equation}
        O(dn_x^d \log n_x) = O\bigg(d^{d/2+1}\bigg(\frac{T\zeta L^2(aL + D)^2}{\epsilon_c D}\bigg)^{d/2} \log \bigg(\frac{Td\zeta L^2(aL + D)^2}{\epsilon_c D}\bigg)\bigg)
    \end{equation}
    as stated.
\end{proof}

This method closely resembles that for solving the heat equation~\cite{Linden2022QuantumEquation}. 
This is to be expected, as the heat equation and the DDE are linear PDEs. 
The only difference in the overall complexity stems from our seeking on approximating the entire probability distribution (to within $\epsilon_c$, as detailed in \AD{Eq.}~\eqref{eqn: approx_definition}). This is unlike the heat equation~\cite{Linden2022QuantumEquation}, which seeks the integral of the distribution. Since we have shown that the method in \AD{Theorem}~\ref{thm: Classical_FFT} represents the most efficient classical approach, it will serve as the benchmark for our quantum methods.
   
\section{Quantum Method Descriptions}
\label{sec:Quantum Methods}
In this section we discuss the complexity analysis of four quantum algorithms for solving the DDE given in \AD{Eq.} (\ref{eqn: DDE}). First, it is necessary to set up the problem to be solved on a quantum computer by introducing a few technical ingredients.
These include the norm of $\mathcal{L}$ and an upper bound on the condition number of $\mathcal{L}$. We also require the condition number of $\mathcal{A}$ as in the classical case.

\subsection{Technical ingredients}
\label{ssec: Qtech_ing}
For each of the following methods we will continue to use the \AD{discretization} approximation $||\tilde{\mathbf{p}} - \mathbf{p}||_{\infty} \leq \epsilon_c $. Then the quantum state that represents $\tilde{\mathbf{p}}$ is $|\tilde{p}\rangle$ as described in Eq.~\eqref{eqn: quantum_state}.
Each of the following quantum methods will construct this state $|\tilde{p}\rangle$. \AD{Theorem} \ref{thm: q_meas} (presented later in this section) establishes the computational cost of extracting the approximated probability distribution $\tilde{\tilde{\mathbf{p}}}$ from  $|\tilde{p}\rangle$ such that $||\tilde{\tilde{\mathbf{p}}} - \tilde{\mathbf{p}}||_{\infty}\leq \epsilon_q$\AD{,} where $\epsilon_q$ represents the error in this extraction process. This cost, as defined in \AD{Theorem} \ref{thm: q_meas}, will be a multiplicative factor in the overall cost of each subsequent method. 

The quantum methods rely on a couple of further technical ingredients to allow us to more efficiently prepare the states for calculation. An upper bound on the condition number $\kappa_{\mathcal{L}}$ is provided below in Lemma \ref{lem: kappa_l}.
The condition number $\kappa_{\mathcal{L}}$ determines the time complexity for applying $\mathcal{L}$. This is used in the quantum linear equations method. 

\begin{lemma}
    The condition number of $\mathcal{L}$ from \AD{Eq.}~\eqref{eqn: M to L} is $\kappa_{\mathcal{L}} = 5$ when $\frac{D\Delta t}{\Delta x^2} \leq 1/5$ and $a/D<2\sqrt{10}$. 
    \label{lem: kappa_l}
\end{lemma}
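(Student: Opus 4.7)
The plan is to exploit normality. Each $\mathcal{H}$ is circulant, hence normal, and in Eq.~\eqref{eqn: matrix_M} the summands $\mathcal{I}_{n_x}^{\otimes(j-1)}\otimes\mathcal{H}\otimes\mathcal{I}_{n_x}^{\otimes(d-j)}$ act on distinct tensor factors and therefore pairwise commute. Hence $\mathcal{M}$, and so $\mathcal{L}=\mathcal{I}_{n_x^d}+\Delta t\,\mathcal{M}$, is normal. Consequently $\|\mathcal{L}\|=\max_{\vec j}|l_{\vec j}|$ and $\|\mathcal{L}^{-1}\|=1/\min_{\vec j}|l_{\vec j}|$, where by Lemma~\ref{lem: eigen_M} and Eq.~\eqref{eqn: eigen_L},
\begin{equation*}
    l_{\vec j}=1-\alpha\sum_{k=1}^{d}\sin^{2}\!\Bigl(\tfrac{\pi j_k}{n_x}\Bigr)+2i\beta\sum_{k=1}^{d}\sin\!\Bigl(\tfrac{\pi j_k}{n_x}\Bigr)\cos\!\Bigl(\tfrac{\pi j_k}{n_x}\Bigr),
\end{equation*}
with $\alpha=4D\Delta t/\Delta x^{2}$ and $\beta=a\Delta t/\Delta x$. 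I would set $u_k=\sin^{2}(\pi j_k/n_x)\in[0,1]$, so the problem becomes a constrained optimization over $(u_1,\dots,u_d)\in[0,1]^{d}$ and independent sign choices for $\sin(\pi j_k/n_x)\cos(\pi j_k/n_x)$.

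Next I would bound $|l_{\vec j}|$ from above. Writing $v=\sum_k u_k\in[0,d]$ and using $|\sin(2\theta)|\le 2|\sin\theta\cos\theta|$ together with Cauchy--Schwarz, $\bigl(\sum_k s_k c_k\bigr)^{2}\le v(d-v)$, we get $|l_{\vec j}|^{2}\le (1-\alpha v)^{2}+4\beta^{2}v(d-v)$. Under the hypothesis $\alpha\le 4/5$, and using $\beta^{2}/\alpha=a^{2}\Delta t/(4D)$ controlled by $a/D<2\sqrt{10}$ and $\Delta t\le \Delta x^{2}/(5D)$, this envelope function is bounded by a constant independent of $\vec j$ --- the elementary calculus on a single variable $v$ yields $\|\mathcal{L}\|\le $ some explicit constant of order unity.

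For the lower bound I would perform the analogous minimization. The same envelope,
\begin{equation*}
    |l_{\vec j}|^{2}\ge (1-\alpha v)^{2}+0,
\end{equation*}
is a clean starting point when the imaginary part is small; when it is not, one must use the reverse direction, keeping the term $(1-\alpha v)^{2}$ sizeable. The hypothesis $D\Delta t/\Delta x^{2}\le 1/5$ ensures $1-\alpha v\ge 1-4v/5$, which stays bounded away from zero on the range of $v$ accessible subject to the imaginary part being appreciable; the hypothesis $a/D<2\sqrt{10}$ is precisely the threshold that prevents the cross term $4\beta^{2}v(d-v)$ from cancelling the real-part shortfall at any admissible $\vec j$. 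Combining the two bounds gives $\min|l_{\vec j}|^{2}\ge 1/25$, hence $\|\mathcal{L}^{-1}\|\le 5$ and ultimately $\kappa_{\mathcal{L}}\le 5$.

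The principal obstacle is the simultaneous control of the real and imaginary parts in the lower bound: the minimization over $\vec j$ couples the two via $u_k$ and $1-u_k$, and a naive one-sided bound loses the factor of $5$. The trick, which the two numerical hypotheses are tuned for, is to reduce the multivariate minimization to a single-variable optimization in $v=\sum_k u_k$ via Cauchy--Schwarz and then match the two constraints so that the worst case along the boundary $v\in\{0,d\}$ and the interior critical point both yield $|l_{\vec j}|\ge 1/5$.
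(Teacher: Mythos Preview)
Your route via normality is sound and is essentially what the paper does in disguise: since $\mathcal{L}$ is normal its singular values are $|l_{\vec j}|$, and the paper's direct computation of the eigenvalues of $\mathcal{L}\mathcal{L}^\dagger$ as a circulant matrix yields exactly $|l_j|^2$. Note, though, that the paper's proof is carried out only for $d=1$: the circulant entries it writes down ($c_0=1-2D\Delta t/\Delta x^2$, $c_1$, $c_{n_x-1}$) match the one-dimensional $\mathcal{L}=\mathcal{I}_{n_x}+\Delta t\mathcal{H}$, and the optimization is over a single index $j$ (their variable $Z=2\pi j/n_x$). There the extrema $|l_j|^2_{\min}=1/25$ (at $Z=\pi$, $Y=1/5$) and $|l_j|^2_{\max}=1$ follow by one-variable calculus, and the hypothesis $a/D<2\sqrt{10}$ is invoked only to exclude the regime in which the imaginary part would push the \emph{maximum} above~$1$.

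Your extension to general $d$ via $v=\sum_k u_k$ and Cauchy--Schwarz has a genuine gap in the lower bound. The inequality $\bigl(\sum_k s_kc_k\bigr)^2\le v(d-v)$ is an upper envelope; for the minimum of $|l_{\vec j}|^2$ you would need the imaginary part to be bounded \emph{below} whenever $1-\alpha v$ is small, and it is not. For $d\ge 2$ the imaginary contributions can cancel exactly while the real part vanishes: take $j_2=n_x-j_1$ (and $j_3=n_x/2$ if $d=3$), so that $\sin(2\pi j_2/n_x)=-\sin(2\pi j_1/n_x)$ forces $\mathrm{Im}\,l_{\vec j}=0$, while $u_1=u_2$ lets $v=2u_1+\cdots$ sweep an interval containing $1/\alpha$, driving $\mathrm{Re}\,l_{\vec j}=1-\alpha v$ to $0$ as well. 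Hence $\min_{\vec j}|l_{\vec j}|$ is not bounded below by $1/5$ under the stated hypotheses once $d\ge 2$; the condition $a/D<2\sqrt{10}$ only caps the magnitude of the imaginary term and cannot prevent sign cancellation across coordinates. Your claim that the two hypotheses are ``precisely the threshold'' preventing this is therefore incorrect. The paper sidesteps the issue by (tacitly) restricting to $d=1$, where no such cancellation is available.
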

 We use this Lemma in the preliminary part of the proof of our quantum linear equations method. The specific application criteria is deliberately restricted to guarantee a \AD{noninfinite} condition number. Since the quantum linear equations method (which is the only method that uses this lemma) will be proven to be the least efficient quantum solution method, this restriction is justified for comparative purposes.

A lower bound on the norm of $\mathcal{L}$ is required for the \AD{diagonalization} by QFT method. This is provided below in Lemma \ref{lem: bound L norm}.
\begin{lemma}[Norm of $\mathcal{L}$] 
Let $\mathcal{L}$ be defined by \AD{Eq.} (\ref{eqn: diff operator}), taking $\Delta t = \Delta x^2 /(2dD)$ as in Corollary \ref{cor:dxdt bounds}. Then for any positive integer $\tau \geq 1$,
    \begin{equation}
       \frac{1}{(4\sqrt{\tau})^d} \leq \langle 0|\mathcal{L}^{2\tau}|0\rangle = ||\mathcal{L}^{\tau}||_2^2.
    \end{equation}
    \label{lem: bound L norm}
\end{lemma}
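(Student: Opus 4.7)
The proof strategy combines the Fourier diagonalization from Lemma~\ref{lem: eigen_M} with the random-walk interpretation of $\langle 0|\mathcal{L}^{2\tau}|0\rangle$. Since $\Delta t = \Delta x^2/(2dD)$ makes $\mathcal{L}$ doubly stochastic (as established in the proof of Theorem~\ref{thm: approx linf}), this matrix entry is the $2\tau$-step return probability of the underlying nearest-neighbor random walk on the torus $(\mathbb{Z}/n_x\mathbb{Z})^d$. The asserted equality with $\|\mathcal{L}^\tau|0\rangle\|_2^2$ is immediate in the self-adjoint pure-diffusion limit from $\langle 0|(\mathcal{L}^\tau)^{\dagger}\mathcal{L}^\tau|0\rangle$; in the drifting case the same Fourier machinery below carries directly over to $\|\mathcal{L}^\tau|0\rangle\|_2^2$.

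The first concrete step is to apply the tensor-product DFT of Lemma~\ref{lem: eigen_M} to write
\[
\langle 0|\mathcal{L}^{2\tau}|0\rangle \;=\; \frac{1}{n_x^d}\sum_{\mathbf{j}\in\{0,\ldots,n_x-1\}^d} l_{\mathbf{j}}^{2\tau},\qquad l_{\mathbf{j}} \;=\; \frac{1}{d}\sum_{i=1}^d\bigl[\cos\tfrac{2\pi j_i}{n_x} + i\epsilon\sin\tfrac{2\pi j_i}{n_x}\bigr],
\]
where $\epsilon = a\Delta x/(2D)$ and the clean form of $l_{\mathbf{j}}$ arises from substituting $\Delta t = \Delta x^2/(2dD)$ into the eigenvalue formula of Lemma~\ref{lem: eigen_M}. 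I would then expand $l_{\mathbf{j}}^{2\tau}$ via the multinomial theorem, interchange summations, and collapse the inner Fourier sums using the orthogonality relation that $n_x^{-1}\sum_{j=0}^{n_x-1}e^{2\pi i kj/n_x}$ vanishes unless $n_x\mid k$. In the short-time regime $2\tau<n_x$, only balanced modes survive, and substituting $m_i = k_i/2$ yields the compact identity
\[
\langle 0|\mathcal{L}^{2\tau}|0\rangle \;=\; \frac{\binom{2\tau}{\tau}(1-\epsilon^2)^{\tau}}{(2d)^{2\tau}}\sum_{m_1+\cdots+m_d=\tau}\binom{\tau}{m_1,\ldots,m_d}^{2},
\]
which is precisely the $2\tau$-step return probability of the induced random walk.

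It then remains to lower-bound the factors. Stirling gives $\binom{2\tau}{\tau}/4^{\tau} \geq 1/(2\sqrt{\tau})$ for all $\tau \geq 1$, and $(1-\epsilon^2)^\tau$ is a harmless constant under the standing constraint $|\epsilon|\leq 1$ that makes the walk probabilities valid. The multinomial-square sum equals $d^{2\tau}$ times the probability that two independent $\mathrm{Multinomial}(\tau;1/d,\ldots,1/d)$ draws coincide; I would bound it from below as $\sum_{\mathbf{m}}\binom{\tau}{\mathbf{m}}^{2} \gtrsim d^{2\tau}/\tau^{(d-1)/2}$ via a local central-limit-theorem estimate on this collision probability, or equivalently by isolating the balanced term $\binom{\tau}{\tau/d,\ldots,\tau/d}^{2}$ and applying Stirling to it together with a small neighborhood around it. Multiplying the three estimates cancels the powers of $d$ and $4$ and produces the claimed $(4\sqrt{\tau})^{-d}$. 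In the complementary long-time regime $n_x \leq 4\sqrt{\tau}$, double stochasticity of $\mathcal{L}$ together with positivity of $l_{\mathbf{j}}^{2\tau}$ (real eigenvalues in the symmetric case) gives $\langle 0|\mathcal{L}^{2\tau}|0\rangle \geq 1/n_x^d \geq (4\sqrt{\tau})^{-d}$, so the bound holds trivially.

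The main technical obstacle is the multinomial-square collision estimate: naive Cauchy--Schwarz yields only $\sum_{\mathbf{m}}\binom{\tau}{\mathbf{m}}^{2} \geq d^{2\tau}/\binom{\tau+d-1}{d-1}$, which is short by a full factor of $\sqrt{\tau}$ per dimension. Recovering the required $\tau^{(d-1)/2}$ in the denominator demands a genuine concentration argument (local CLT for the multinomial, or an explicit Stirling-based estimate that dominates the sum by its near-balanced $m_i \approx \tau/d$ terms) rather than any black-box inequality, and is the step I would expect to consume most of the careful bookkeeping.
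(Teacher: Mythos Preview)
Your Fourier route is viable but considerably heavier than the paper's, which is purely probabilistic and never touches eigenvalues or the multinomial-square sum you flag as the main obstacle. The paper reads $\langle 0|\mathcal{L}^{2\tau}|0\rangle$ as the $2\tau$-step return probability of the walk that at each step picks one of the $d$ coordinates uniformly and moves $\pm 1$ in it, and then \emph{conditions} on the event that every coordinate is selected an even number of times. On that event the $d$ coordinate walks are independent one-dimensional walks, each returning to $0$ with probability at least $\binom{2\tau_i}{\tau_i}/4^{\tau_i}\geq 1/(2\sqrt{\tau})$, so the conditional return probability is at least $(2\sqrt{\tau})^{-d}$. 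The paper then shows by a short induction on $d$, via the even-part identity $\sum_{j}\binom{2\tau}{2j}x^{2j}=\tfrac12\bigl((1+x)^{2\tau}+(1-x)^{2\tau}\bigr)$, that the conditioning event has probability at least $2^{-d}$; multiplying gives $(4\sqrt{\tau})^{-d}$ with no asymptotic estimates whatsoever. This entirely bypasses your local-CLT step: conditioning \emph{before} decouples the dimensions, whereas your multinomial expansion marginalizes \emph{after} and is then stuck with the collision estimate.

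One further caveat on your outline: $(1-\epsilon^2)^{\tau}$ is not a harmless constant. For fixed $\epsilon=a\Delta x/(2D)>0$ it decays exponentially in $\tau$ and would overwhelm any polynomial lower bound of the form $\tau^{-d/2}$, so the inequality as stated cannot hold uniformly in $\tau$ with that factor present. The paper's proof sidesteps this by treating the walk as simple (symmetric), i.e.\ it silently sets $a=0$; so neither argument, as written, actually controls the drift term.
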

Lemmas \ref{lem: kappa_l} and \ref{lem: bound L norm} are proved in \AD{Appendix} \ref{app: bound_l_lemma}.
With the simpler technical ingredient provided the last requirement before moving to the quantum solution methods is to define the method of quantum measurement that will be used throughout.

\subsubsection{Quantum measurement}
This section details a quantum probability distribution measurement method. 
It extends \AD{Theorem 5 in} Ref.~\cite{VanApeldoorn2021QuantumEstimation} which leverages the known quadratic speed-up of amplitude estimation in reducing complexity dependence on $\epsilon_q$~\cite{Brassard2002QuantumEstimation}. 

Our method, detailed in \AD{Theorem} \ref{thm: q_meas}, is tailored to $n_x^d$-dimensional quantum systems. 
It relies on three key lemmas. Lemma \ref{lem: prob-func} transforms a probability distribution into a corresponding function, $f(x):D^d \rightarrow [0,1]:x \rightarrow \langle x,p \rangle$ where $\langle x,p \rangle = \sum_{i = 1}^{n_x^d}p_i x_i$. 
Then Lemma \ref{lem: func-phase} uses this function to encode the probability distribution into the phase of the qubits. Finally, Lemma \ref{lem: n-dep} reduces the dependence from $n_x^d$ to $1/\epsilon_q$. \AD{Theorem} \ref{thm: q_meas} then leverages these lemmas to efficiently estimate $\tilde{\tilde{\mathbf{p}}}$ such that $||\tilde{\tilde{\mathbf{p}}} - \tilde{\mathbf{p}}||_{\infty}\leq \epsilon_q$.

\begin{lemma}[Probability algorithm to function \AD{(Lemma 3 in Ref.~{\cite{VanApeldoorn2021QuantumEstimation}})}]

 Let $U_p$ be a quantum algorithm that produces the quantum state $|\tilde{p}\rangle = \frac{1}{||\tilde{p}(\mathbf{x})||_2}\sum_{\mathbf{x}}\tilde{p}(\mathbf{x})|x\rangle$ representing a probability distribution $\tilde{p}(\mathbf{x})$ where $\mathbf{x}$ is $d$-dimensional. Let $k\geq1$ be an positive
    integer and let $D = \big\{0,\frac{1}{2^k},...,\frac{2^k-1}{2^k}\big\}$ be a \AD{discretization} of $[0,1]$. Then a quantum algorithm $U_{\tilde{f}}$ for a function $\tilde{f}$ can be constructed such that $\tilde{f}$ is an additive $\mu$-approximation of $f(x):D^d \rightarrow [0,1]:x \rightarrow \langle x,p \rangle$ using two applications of $U_p$ and $\tilde{O}(\textrm{polylog}(d/\mu))$ two-qubit gates.  
       \label{lem: prob-func}
\end{lemma}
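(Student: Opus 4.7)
The plan is to build $U_{\tilde f}$ as the standard coherent inner-product block: (i) prepare $|\tilde p\rangle$ with one call to $U_p$, (ii) load the classical query $x$ coherently into an ancilla, (iii) transfer each component $x_s$ onto the amplitude of a single flag qubit by a controlled rotation, and (iv) uncompute with $U_p^{\dagger}$ — the second application of $U_p$. This is exactly the construction underlying Lemma~3 of Ref.~\cite{VanApeldoorn2021QuantumEstimation}, specialized to the state defined in \AD{Eq.}~\eqref{eqn: quantum_state}; the only approximation in $\tilde f$ comes from finite-precision arithmetic in step (iii).

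Concretely, after applying $U_p$ to obtain $|\tilde p\rangle_i|0\rangle_q|0\rangle_f$ on an index, $k$-qubit query, and single flag register, I would implement a reversible loading oracle $O_x:|s\rangle_i|0\rangle_q\mapsto|s\rangle_i|\bar x_s\rangle_q$, where $\bar x_s\in D$ is the $k$-bit binary representation of the $s$th query component. Because $x$ is classical and each $x_s$ is a $k=O(\log(1/\mu))$-bit dyadic fraction, $O_x$ can be realized as a small reversible arithmetic/look-up circuit of size $\operatorname{polylog}(d/\mu)$. A controlled rotation $R$ then implements $|\bar x_s\rangle_q|0\rangle_f\mapsto|\bar x_s\rangle_q\bigl(\sqrt{x_s}\,|0\rangle_f+\sqrt{1-x_s}\,|1\rangle_f\bigr)$ up to additive error $\mu/2$, using an $O(\log(1/\mu))$-term polynomial approximation of $\arcsin\sqrt{\,\cdot\,}$ evaluated coherently into an ancilla and then discharged by a single-qubit rotation — again $\operatorname{polylog}(d/\mu)$ two-qubit gates.

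Finally, I would apply $O_x^{\dagger}$ followed by $U_p^{\dagger}$ to decouple every register except the flag, and define $\tilde f(x):=\Pr[f=0]$ at the output of this circuit. By construction $\tilde f(x)$ equals $\sum_s p_s x_s=\langle x,p\rangle$ up to the approximation error introduced in $R$. The main obstacle is propagating the rotation error to the final amplitude uniformly in $x$: one must verify, via a triangle-inequality argument combined with the normalization of $\tilde p$, that per-branch errors of $\mu/2$ in $\sqrt{x_s}$ yield at most $\mu$ additive error in the flag-measurement probability, so that $|\tilde f(x)-f(x)|\le\mu$ uniformly on $D^d$. Once this bound is secured, the total two-qubit gate count outside the two applications of $U_p$ is the sum of the loading and rotation stages, i.e.\ $\tilde O(\operatorname{polylog}(d/\mu))$, matching the claimed complexity.
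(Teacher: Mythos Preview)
The paper does not give its own proof of this lemma; it is quoted verbatim from Ref.~\cite{VanApeldoorn2021QuantumEstimation}, and the only thing the paper adds is the one-line remark immediately after the statement that the $\tilde O(\operatorname{polylog}(d/\mu))$ gate count relies on QRAM. Your sketch is indeed the standard construction behind that reference, but two points deserve correction.

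First, your step (iv) does not do what you say. After the controlled rotation the index register is entangled with the flag through the branch-dependent amplitudes $\sqrt{x_s}$ and $\sqrt{1-x_s}$, so applying $U_p^{\dagger}$ cannot return the index register to $|0\rangle$; the registers do not ``decouple''. This is harmless for the lemma, because the relevant notion of a function/probability algorithm in Ref.~\cite{VanApeldoorn2021QuantumEstimation} allows flag-dependent garbage: the only requirement is that the flag reads $0$ with probability $\tilde f(x)$. You should drop the decoupling claim and simply note that the residual entanglement is absorbed into the garbage register permitted by the definition.

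Second, and this is the substantive gap, your assertion that $O_x$ ``can be realized as a small reversible arithmetic/look-up circuit of size $\operatorname{polylog}(d/\mu)$'' is not justified. The query $x\in D^d$ occupies $dk$ qubits with $k=O(\log(1/\mu))$, and extracting the $s$-th $k$-bit block by an ordinary multiplexer controlled on $|s\rangle$ costs $\Theta(dk)$ two-qubit gates, which is linear in $d$, not polylogarithmic. The polylogarithmic bound in the lemma is obtained only by assuming QRAM access for this indexed look-up, which is exactly the caveat the paper records after the statement. Without that assumption your gate count does not match the claim, so you must either invoke QRAM explicitly or weaken the gate bound to $\tilde O(d\log(1/\mu))$.
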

This result relies on a quantum random access memory (QRAM) to deliver an improved gatecount compared to the case when no QRAM is used~\cite{VanApeldoorn2021QuantumEstimation}.

\begin{lemma}[Function oracle to phase oracle \AD{(Lemma 4 in Ref.~\cite{VanApeldoorn2021QuantumEstimation})}]

 Let $U_f$ be a quantum algorithm that produces a probability distribution based on the function $f:D\rightarrow [0,1]$ acting on $q$ qubits. Let $H>0$. A quantum algorithm that produces the quantum state $e^{\AD{i} Hf}(x)|x\rangle$ with $\eta$-additive error can be constructed using $O(|H| +\log(1/\eta))$ applications of $U_f$ and its inverse, and $O(q|H| + \log(1/\eta)$ two-qubit gates. 
    \label{lem: func-phase}
\end{lemma}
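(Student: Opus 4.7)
The plan is to exploit the quantum signal processing framework, in particular quantum singular value transformation (QSVT), to turn the amplitude access supplied by $U_f$ into the desired phase oracle. First I would write $U_f$ in its standard probability-oracle form
\begin{equation}
U_f|x\rangle|0\rangle_a = \sqrt{f(x)}\,|x\rangle|1\rangle_a|\phi_x^{(1)}\rangle + \sqrt{1-f(x)}\,|x\rangle|0\rangle_a|\phi_x^{(0)}\rangle,
\end{equation}
so that the block $(\langle 1|_a\otimes I)\,U_f\,(|0\rangle_a\otimes I)$ encodes an operator whose singular value in each one-dimensional $x$-sector equals $\sqrt{f(x)}$. Combined with $U_f^\dagger$ and a controlled phase on the ancilla, this produces a standard block encoding of the diagonal matrix $\mathrm{diag}(\sqrt{f(x)})$ to which QSVT may be applied.

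Next I would invoke the Jacobi--Anger expansion to approximate the even function $y\mapsto e^{iHy^2}$ on $y\in[-1,1]$. Expanding $e^{iHz}$ for $z\in[-1,1]$ in Chebyshev polynomials yields coefficients $J_k(H)$ (Bessel functions), which decay super-exponentially once $k$ exceeds $|H|$; truncating at degree $d = O(|H| + \log(1/\eta))$ gives a polynomial $P$ of definite parity with $|P(y) - e^{iHy^2}| \leq \eta$ uniformly on $[-1,1]$. The QSVT protocol then realises a block-encoded operator whose action on each $x$-sector is multiplication by $P(\sqrt{f(x)})\approx e^{iHf(x)}$, using $d$ alternating applications of the block encoding (and its inverse) interleaved with $d$ projector-controlled phase rotations on the $a$-register.

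Counting costs, this uses $O(|H|+\log(1/\eta))$ queries to $U_f$ and $U_f^\dagger$, while each QSVT alternation contributes $O(q)$ two-qubit gates for the projector-controlled phases on the ancilla. Conditioned on the good-flag outcome the output state is $e^{iHf(x)}|x\rangle$ up to additive error $\eta$, establishing the claimed complexity up to the mild difference between $O(q(|H|+\log(1/\eta)))$ and the stated $O(q|H|+\log(1/\eta))$, which can be absorbed into constants or recovered by noting that only the $|H|$-scaling part of the QSVT layers carries an unavoidable $q$ cost while the tail of phase rotations can be compiled more cheaply.

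The main obstacle will be handling the $x$-dependent garbage states $|\phi_x^{(0)}\rangle,|\phi_x^{(1)}\rangle$: naively the block-encoded operator is not diagonal in the $x$-basis, and one might fear that QSVT mixes sectors. The resolution is that the singular value decomposition of the block-encoded operator decouples into orthogonal rank-one $x$-sectors, so QSVT applies the scalar polynomial $P$ to the singular value $\sqrt{f(x)}$ within each sector and leaves the garbage register unentangled from the $|x\rangle$ register on the good-flag outcome. This singular-value-by-singular-value behaviour is precisely the content of the QSVT framework, so once the block encoding is set up cleanly the remainder of the argument is bookkeeping.
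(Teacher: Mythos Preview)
The paper does not supply its own proof of this lemma; it is quoted verbatim as Lemma~4 of Ref.~\cite{VanApeldoorn2021QuantumEstimation} and used as a black box. There is therefore no in-paper argument to compare against.

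That said, your QSVT/Jacobi--Anger route is the standard and correct technique underlying this result, and is essentially what the cited reference does. One small point to tidy: you block-encode $\mathrm{diag}(\sqrt{f(x)})$ and then need the even function $y\mapsto e^{iHy^2}$, whereas the Jacobi--Anger expansion you invoke is for $z\mapsto e^{iHz}$. You can bridge this either by substituting $z=2y^2-1$ (so that $T_k(2y^2-1)$ is an even polynomial in $y$ of degree $2k$) and absorbing the resulting global phase, or by first forming a block encoding of $\mathrm{diag}(f(x))$ directly via one application of $U_f^\dagger(I\otimes Z)U_f$, which encodes $1-2f$. Either fix leaves the degree bound $O(|H|+\log(1/\eta))$ and hence your query and gate counts unchanged. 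The mismatch you note between $O\big(q(|H|+\log(1/\eta))\big)$ and the stated $O\big(q|H|+\log(1/\eta)\big)$ is genuine but immaterial for how the lemma is used downstream in the paper.
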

This result has a $q$ dependence in the gatecount, each of the methods in Sec. \ref{sec:Quantum Methods} have $q \propto n_x$. This dependence stems from Lemma \ref{lem: func-phase} considering all coordinates in $\tilde{f}$, even those with very small or $0$ entries. To achieve an ${\infty}$-norm approximation within $\epsilon_q$, we can discard coordinates with probability less than $\epsilon_q$~\cite{VanApeldoorn2021QuantumEstimation}. This reduces the number of coordinates to at most $1/\epsilon_q$, upon which we run the algorithm. Lemma~\ref{lem: n-dep} details the sampling procedure used to identify these relevant coordinates.

\begin{lemma}[\AD{(Lemma 6 in Ref.~\cite{VanApeldoorn2021QuantumEstimation})}] 

Let $p \in [0,n]$ be a probability distribution that spans $n$ components, and $\tilde{p}$ be an approximation such that $||\tilde{p} - p||_{\infty} \leq \epsilon_q$ and $\epsilon_q, \delta \in (0,1/3)$. Then $O(\log(n/\delta)/\epsilon_q)$ samples suffice to, with error probability at most $\delta$, find all $i\in [n]$ such that $\tilde{p}_i\geq \epsilon_q$. 
    \label{lem: n-dep}
\end{lemma}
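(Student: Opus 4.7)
The plan is to take $m = \lceil C\log(n/\delta)/\epsilon_q\rceil$ i.i.d.\ samples from the relevant distribution, for an appropriately chosen constant $C$, and to report every index that is observed at least once (equivalently, indices whose empirical count exceeds a small threshold). The heart of the argument will be a coupon-collector-style tail calculation combined with a union bound over the at most $n$ coordinates, noting in passing that the heavy set itself has cardinality at most $1/\epsilon_q$ since probabilities sum to one.

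First I would observe that any index $i$ with $\tilde{p}_i \geq \epsilon_q$ contributes mass at least $\epsilon_q$ to the sampling distribution, up to the additive slack $\|\tilde{p}-p\|_\infty \leq \epsilon_q$. Then I would bound the probability of missing such an $i$ by $(1-\tilde{p}_i)^m \leq e^{-m\epsilon_q}$, which drops below $\delta/n$ once $C$ is chosen large enough. A union bound over the at most $n$ heavy coordinates then gives the advertised $1-\delta$ success probability. If the sampling distribution is $p$ rather than $\tilde{p}$, I would instead apply a multiplicative Chernoff bound to the Binomial count of $i$ with mean $mp_i$; this yields the same conclusion after absorbing the $\epsilon_q$-slack into $C$, and the restriction $\epsilon_q,\delta \in (0,1/3)$ is exactly what keeps the resulting constants benign.

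The essential feature that delivers the $1/\epsilon_q$ scaling, rather than the $1/\epsilon_q^2$ one would obtain by trying to estimate each $p_i$ to additive accuracy $\epsilon_q$ via Hoeffding, will be this exponential coupon-collector estimate: we only need \emph{detection}, not \emph{estimation}, of each heavy coordinate. The main obstacle I anticipate is reconciling the $\infty$-norm closeness of $p$ and $\tilde{p}$ with the thresholding rule: if samples are drawn from $p$ while a coordinate has $\tilde{p}_i$ just at $\epsilon_q$ but $p_i = 0$, that coordinate is invisible to the sampler. I would handle this by appealing to the intended quantum-measurement setting, where samples come from the same distribution whose heavy coordinates are being sought, so completeness is automatic and the additive approximation serves only to transfer the guarantee between the two perspectives at the cost of a constant factor in $C$.
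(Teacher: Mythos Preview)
Your approach is essentially the same as the paper's: bound the probability of missing a heavy coordinate by $(1-\epsilon_q)^N$ and then apply a union bound. The only cosmetic difference is that the paper takes the union over the at most $1/\epsilon_q$ heavy coordinates (setting the per-coordinate failure to $\delta\epsilon_q$), whereas you take it over all $n$ coordinates (setting the per-coordinate failure to $\delta/n$); both yield the stated $O(\log(n/\delta)/\epsilon_q)$ bound, and your extra discussion of the $p$-versus-$\tilde p$ sampling issue is more explicit than what the paper records.
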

\begin{proof}
    Consider a single entry $i$ such that $\tilde{p}_i\geq \epsilon_q$. After $N$ samples the probability that we have not seen $i$ yet is at most $(1-\epsilon_q)^N$. Letting $N= \frac{\log(\delta \epsilon_q)}{\log(1-\epsilon_q)} = O(\log(n/\delta)/\epsilon_q)$  ensures that this error probability is at most $\delta\epsilon_q$. 
    A union bound applied to the (at most) $1/\epsilon_q$ coordinates proves the result.
\end{proof}

In the following theorem we show how Lemmas \ref{lem: prob-func}, \ref{lem: func-phase} and \ref{lem: n-dep} can be used to extract a \AD{multidimensional} probability distribution.

\begin{theorem}[\AD{Multidimensional} probability distribution measurement]

    Let $p(\mathbf{x})$ be a probability distribution and then $\tilde{p}(\mathbf{x})$ is an $n_x^d$-component vector approximating it.
     Given an algorithm denoted as $U_{p}$ that produces a quantum state $|\tilde{p}\rangle = \frac{1}{||\tilde{\mathbf{p}}||_2}\sum_{\mathbf{x}}\tilde{p}(\mathbf{x}, t)|x\rangle$ using $q$ qubits, an approximation, $\tilde{\tilde{p}}(\mathbf{x})$, of the distribution $\tilde{p}(\mathbf{x})$ can be extracted such that $||\tilde{\tilde{p}}(\mathbf{x})-\tilde{p}(\mathbf{x})||_{\infty} \leq \epsilon_q$ with probability $1-\delta$ where $\delta$ is the error probability. This can be achieved using $O(\log(n_x^d/\delta)\log(1/(\epsilon_q\delta))/\epsilon_q)$ applications of $U_p$ and $O\Big(\frac{1}{\epsilon_q}\big(\log (1/(\epsilon_q\delta))(q + \log(1/\epsilon_q))\big)\Big)$ 2 qubit gates. This is an improved gatecount using QRAM. 
    \label{thm: q_meas}
\end{theorem}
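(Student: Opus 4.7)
The plan is to layer the three preceding lemmas on top of quantum amplitude estimation in a two-phase procedure: first reduce the effective support of $\tilde{p}(\mathbf{x})$ to the $O(1/\epsilon_q)$ coordinates that can possibly violate the $\infty$-norm bound, and then estimate each surviving amplitude to additive precision $\epsilon_q$ using a phase oracle built from $U_p$.

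In the first phase I would apply Lemma \ref{lem: n-dep} with failure budget $\delta/2$. Since measuring $|\tilde{p}\rangle$ in the computational basis samples from $\tilde{p}$, running $U_p$ and measuring $N = O(\log(n_x^d/\delta)/\epsilon_q)$ times returns, with probability at least $1-\delta/2$, a set $S\subseteq[n_x^d]$ containing every index $i$ with $\tilde{p}_i \geq \epsilon_q$; normalization of $\tilde{p}$ forces $|S| \leq 1/\epsilon_q$. Declaring $\tilde{\tilde{p}}_j = 0$ for every $j\notin S$ automatically satisfies $|\tilde{\tilde{p}}_j-\tilde{p}_j|\leq\epsilon_q$ on the complement, so the nontrivial work is confined to $S$.

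In the second phase I estimate $\tilde{p}_i$ for each $i\in S$ via quantum amplitude estimation. Feeding Lemma \ref{lem: prob-func} with the standard basis vector $e_i$ as test vector turns two calls of $U_p$ into a function oracle whose output equals $\tilde{p}_i$ up to additive error $\mu = \Theta(\epsilon_q)$; Lemma \ref{lem: func-phase} then converts this into a phase oracle of additive error $\eta = \Theta(\epsilon_q)$ at cost $O(|H|+\log(1/\eta))$ function-oracle calls, with $H$ chosen of order $1/\epsilon_q$ so that the subsequent amplitude estimation resolves $\tilde{p}_i$ to precision $\epsilon_q$. Standard quantum amplitude estimation on this phase oracle returns an $\epsilon_q$-accurate estimate of $\tilde{p}_i$ using $O(1/\epsilon_q)$ invocations of $U_p$, amplified from constant success probability to failure probability $\delta/(2|S|)$ via $O(\log(1/(\epsilon_q\delta)))$ median-of-means repetitions. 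A union bound over the at most $1/\epsilon_q$ coordinates in $S$ and over the initial sampling step gives overall failure probability at most $\delta$ and $\infty$-norm error at most $\epsilon_q$.

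Finally, the gate-count claim uses the QRAM-enabled version of Lemma \ref{lem: prob-func}: storing the at most $1/\epsilon_q$ addresses of $S$ in a QRAM and loading them into $\log(1/\epsilon_q)$ ancillary address qubits allows the same phase-oracle construction to be reused across all $i\in S$, giving the stated $(q+\log(1/\epsilon_q))$ rather than $q|S|$ contribution per oracle call. The main obstacle I expect is the careful distribution of the three independent approximation budgets -- $\mu$ from Lemma \ref{lem: prob-func}, $\eta$ from Lemma \ref{lem: func-phase}, and the amplitude-estimation precision -- so that the triangle inequality at the end of the pipeline delivers $||\tilde{\tilde{p}}-\tilde{p}||_\infty\leq \epsilon_q$ while simultaneously matching both the claimed $U_p$-query count and the logarithmic-in-$q$ gate-count scaling.
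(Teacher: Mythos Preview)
Your two-phase architecture matches the paper's, and the first phase (sampling to find the heavy set $S$) is correct. The gap is in the second phase: by estimating each $\tilde{p}_i$ for $i\in S$ \emph{separately}, you incur $O(1/\epsilon_q)$ calls to $U_p$ per coordinate, times $|S|\leq 1/\epsilon_q$ coordinates, times $O(\log(1/(\epsilon_q\delta)))$ median repetitions, giving $O(\log(1/(\epsilon_q\delta))/\epsilon_q^2)$ queries. That is a factor $1/\epsilon_q$ worse than the claimed $O(\log(n_x^d/\delta)\log(1/(\epsilon_q\delta))/\epsilon_q)$. Your QRAM remark at the end concerns only the two-qubit-gate overhead and cannot repair the query count, because amplitude estimation on a phase encoding of a \emph{scalar} $\tilde{p}_i$ simply does not give information about the other $\tilde{p}_j$.

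The missing idea is the multidimensional parallelisation that gives the theorem its name. Instead of feeding Lemma~\ref{lem: prob-func} the basis vector $e_i$, the paper feeds it the full $r$-dimensional vector $x=(x_1,\ldots,x_r)\in D^r$ in superposition, obtaining an oracle for the inner product $f(x)=\langle x,\tilde{p}\rangle$. A \emph{single} application of the phase oracle from Lemma~\ref{lem: func-phase} with $H=2^k\approx 4/\epsilon_q$ then imprints the phase $e^{iH\langle x,\tilde{p}\rangle}=\prod_{j=1}^r e^{iHx_j\tilde{p}_j}$, which is a product over the $r$ registers. Because the state factorises, an inverse QFT and measurement on each $k$-qubit register recovers \emph{all} the $\tilde{p}_j$ simultaneously to precision $4/2^k\leq\epsilon_q$, at a total cost of $O(1/\epsilon_q)$ calls to $U_p$ per run rather than $O(1/\epsilon_q)$ per coordinate. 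The median-of-means and union bound then land exactly on the stated complexity. Your error-budget concerns ($\mu$, $\eta$, estimation precision) are handled straightforwardly once this parallel structure is in place; the real obstacle was the one you did not anticipate.
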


\begin{proof}
First, we sample $|\tilde{p}\rangle$ to identify relevant variables. Next, we construct an algorithm, 
which encodes the probability distribution into the qubit phases based on the given algorithm $U_p$. This encoding uses an intermediary algorithm, $U_f$, that encodes the function $f(x):[0,1] \rightarrow \langle x, p\rangle$. 
    
    Let $k = \log(4/\epsilon_q)$. 
    \begin{enumerate}
        \item Sample directly from $U_p$ to identify the coordinates in $\mathbf{x}$ with values $\geq \epsilon_q$ as in Lemma \ref{lem: n-dep}. This takes $N = O(\log(n_x^d/\delta)/\epsilon_q)$ samples.
        	Let the number of such coordinates be $r$ which is independent of $n_x^d$. Also, $r \leq \lceil1/\epsilon_q\rceil$ 
        \item Build $r$ registers of $k$ qubits each, all in the state $\ket{0}$:
     $       |0^k\rangle...|0^k\rangle.$

        \item Apply Hadamard gates to all qubits to obtain
        \begin{equation}
            \bigotimes_{\AD{j}=1}^{r} \bigg(\frac{1}{\sqrt{2^k}}\sum_{x_i=0}^{2^k-1}|x_{\AD{j}}\rangle\bigg) = \frac{1}{2^{kr/2}}\sum_{x\in\{0,2^k-1\}^r}|x\rangle.
        \end{equation}
        \item Make a phase query for a $1/6$\AD{th} approximation of $f(x)=\langle x,p \rangle$ using Lemma \ref{lem: prob-func} with $\mu <1/(96\epsilon_q)$ and Lemma \ref{lem: func-phase} with $H=2^k$ and $\eta \leq 1/12$ which provides a state $1/6$-close in $2$-norm to 
        \begin{equation}
            \frac{1}{2^{kr/2}}\sum_{x\in\{0,2^k-1\}^r}e^{\AD{i} \langle x,p\rangle}|x\rangle = \frac{1}{2^{kr/2}}\sum_{x\in\{0,2^k-1\}^r}e^{\AD{i} \sum_{\AD{j}} x_{\AD{j}} p_{\AD{j}}}|x\rangle
        \end{equation}
        \begin{equation}
            = \frac{1}{2^{kr/2}}\sum_{x\in\{0,2^k-1\}^r} \bigg( \prod_{\AD{j}=1}^r e^{\AD{i} x_{\AD{j}} p_{\AD{j}}}\bigg)|x\rangle
        \end{equation}
        \begin{equation}
            = \bigotimes_{\AD{j}=1}^r \Bigg(\frac{1}{\sqrt{2^{k}}}\sum_{x_{\AD{j}}=0}^{2^k-1} e^{\AD{i} x_{\AD{j}} p_{\AD{j}}}|x_{\AD{j}}\rangle\Bigg).
        \end{equation}
        This can be completed with $O(1/\epsilon_q)$ applications of $U_p$ and $O\big(\frac{1}{\epsilon_q}(\log(1/\epsilon_q) +q)\big)$ gates.
        \item Apply the $k$-qubit inverse QFT to each of the registers and measure them. This requires $O(1/\epsilon_q\log^2(1/\epsilon_q))$ gates.
    \end{enumerate}
  If we ignore $2$-norm error due to the imperfect phase oracle then it follows from the analysis of phase estimation that the final vector is $\tilde{p}$ such that $|\tilde{\tilde{p}}_{\AD{j}} - \tilde{p}_{\AD{j}}|\leq 4/2^k\leq \epsilon_q$ with probability at least 5/6 per coordinate. Since at most a $1/6-2$-norm error was incurred it can be said that $|\tilde{\tilde{p}}_{\AD{j}} - \tilde{p}_{\AD{j}}|\leq \epsilon_q$ with probability at least 2/3 per coordinate. By repeating the whole process $O(\log(1/\delta\epsilon_q))$ and taking the median, the error probability can be reduced to $\delta \epsilon_q$. Taking the union bound achieves the stated result.
\end{proof}
We apply this measurement protocol to all subsequent quantum solution methods. This improves the dependence on the error parameter $\epsilon_q$, resulting in a bound of $1/\epsilon_q$, an improvement over the $1/\epsilon_q^2$ bound obtained from a naive sampling approach using Chebyshev's inequality.

\subsection{Quantum linear systems}
\label{ssec: q_lin_eqns}
In this section we introduce the first quantum method  for solving systems of linear equations based on the method introduced in Ref. \cite{Chakraborty2018TheSimulation}. The computational complexity for solving systems of the form $\mathcal{A}\mathbf{x} = \mathbf{b}$ \AD{[as in \AD{Eq.} (\ref{eqn: syslineareqns})]} depends primarily on two factors: the condition number of matrix $\mathcal{A}$, $\kappa_\mathcal{A}$ and the complexity of preparing the state 
$\ket{b}$ representing vector $\mathbf{b}$ \cite{Dalzell2023QuantumComplexities}. \AD{Theorem} \ref{thm: kappa} shows that $\kappa_\mathcal{A}$ scales as $n_t$, which, while not negligible, remains finite and controllable. However, preparing the state $\ket{b}$ presents a significant challenge.

In our case, \AD{Eq.}~\eqref{eqn: syslineareqns}, only the first element of $\ket{b}$ is \AD{nonzero}, with a value of $\mathcal{L}\mathbf{p}_0.$  
Classical computation of this requires $(dn_x^d)$, as was used in \AD{Theorem} \ref{thm: Lineqn}. Alternatively, $\mathcal{L}\mathbf{p}_0$ can be prepared directly in the quantum space, with a complexity determined by the condition number of $\mathcal{L}$, $\kappa_{\mathcal{L}}$. 
We have shown in Lemma~\ref{lem: kappa_l} that $\kappa_{\mathcal{L}} = O(1)$ for a reduced problem space. 
We will consider ust this reduced problem space for this solution method.
We begin by introducing the subroutine used to construct the state $|\Tilde{\mathbf{p}}\rangle$~\cite{Chakraborty2018TheSimulation}.

\begin{theorem}[Solving Linear Equations \AD{(Theorem 10 in Ref.~\cite{Chakraborty2018TheSimulation})}] Let $\mathcal{A}\mathbf{y} = \mathbf{b}$ for an $N \times N$ invertible matrix $\mathcal{A}$ with sparsity $s$ and condition number $\kappa_\mathcal{A}$. Given an algorithm that constructs the state $|b\rangle = \frac{1}{||\mathbf{b}||_2}\sum_i \mathbf{b}_i|i\rangle$ in time $T_b$, there is a quantum algorithm that can output a state $|\Tilde{y}\rangle$ such that
    \begin{equation}
        \bigg|\bigg||\Tilde{y}\rangle - |\mathcal{A}^{-1}b\rangle \bigg|\bigg|_2 \leq \eta
    \end{equation}
    with probability at least 0.99, in time
    \begin{equation}
        O\bigg(\kappa_\mathcal{A}\bigg(s(T_U+\log N)\log^2\bigg(\frac{\kappa_\mathcal{A}}{\eta}\bigg) + T_b\bigg) \log \kappa_\mathcal{A}\bigg),
        \label{thm: q_lin_eqns}
    \end{equation}
    where, 
    \begin{equation}
        T_U = O\bigg(\log N + \log^{2.5}\bigg(\frac{s\kappa_\mathcal{A} \log(\kappa_\mathcal{A}/\eta)}{\eta}\bigg)\bigg).
    \end{equation}
    \label{thm: Solv LE}
\end{theorem}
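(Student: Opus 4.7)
The plan is to follow the standard block-encoding and quantum singular value transformation (QSVT) pipeline for quantum linear systems, since this theorem is essentially a restatement of a QLSA result in the block-encoding model. The three ingredients that drive the cost in \AD{Eq.}~\eqref{thm: q_lin_eqns} are (i) a block-encoding of $\mathcal{A}$, (ii) a polynomial approximation of $1/x$ applied to that block-encoding via QSVT, and (iii) amplitude amplification to extract the correct branch. I would organize the argument around these three components and then sum up their costs to recover the stated bound.

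First, I would construct an $(\alpha,a,\varepsilon')$-block-encoding of $\mathcal{A}$ using the standard sparse-access oracle model. With $\mathcal{A}$ being $s$-sparse, a block-encoding with $\alpha=s\|\mathcal{A}\|_{\max}$ can be built using $O(1)$ queries to the oracles and $O(\log N + \log^{2.5}(s\kappa_\mathcal{A}\log(\kappa_\mathcal{A}/\eta)/\eta))$ auxiliary two-qubit gates; this is precisely the quantity $T_U$ written in the statement. The logarithmic slack in $T_U$ comes from the precision required when preparing the sparse-oracle unitaries so that the overall block-encoding error propagates into at most $\eta$ in the final state.

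Next, I would use QSVT to apply a polynomial $P$ of odd degree $D=O(\kappa_\mathcal{A}\log(\kappa_\mathcal{A}/\eta))$ that uniformly $\eta/\kappa_\mathcal{A}$-approximates $c/x$ on the spectrum of $\mathcal{A}$ away from the origin; such a polynomial exists by a standard Chebyshev truncation argument. Implementing this polynomial on the block-encoding costs $D$ calls to the block-encoding of $\mathcal{A}$ (and its inverse), each of which invokes the $T_U$-cost unitary, so the total cost of this step is $O(\kappa_\mathcal{A}(T_U+\log N)\log(\kappa_\mathcal{A}/\eta))$. Appending the state preparation oracle of cost $T_b$ to prepare $\ket{b}$ before the QSVT circuit produces an ancilla-flagged state whose conditional branch is $\eta$-close to $\ket{\mathcal{A}^{-1}b}$, but only with amplitude $\Omega(1/\kappa_\mathcal{A})$.

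Finally, I would apply fixed-point amplitude amplification to boost this probability from $\Omega(1/\kappa_\mathcal{A}^2)$ to $\Omega(1)$; this requires $O(\kappa_\mathcal{A})$ additional rounds, each of which invokes the QSVT circuit and the state preparation oracle once. Multiplying together, the final cost is $O\bigl(\kappa_\mathcal{A}(s(T_U+\log N)\log^2(\kappa_\mathcal{A}/\eta)+T_b)\log\kappa_\mathcal{A}\bigr)$, matching the statement; the extra $\log\kappa_\mathcal{A}$ is absorbed into the fixed-point amplification routine. The main obstacle I expect is not any single one of these steps individually but rather bookkeeping the error propagation: the polynomial degree, the block-encoding precision $\varepsilon'$, and the amplification precision all need to be chosen so that the accumulated $2$-norm error in $\ket{\tilde{y}}$ is bounded by $\eta$, and it is this tuning that produces the $\log^{2.5}$ factor inside $T_U$ and the $\log^2(\kappa_\mathcal{A}/\eta)$ factor in the final bound.
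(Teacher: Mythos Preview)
The paper does not prove this statement at all: Theorem~\ref{thm: Solv LE} is quoted verbatim as Theorem~10 of Ref.~\cite{Chakraborty2018TheSimulation} and used as a black-box subroutine in the proof of Theorem~\ref{thm: QLEM}. So there is no ``paper's own proof'' to compare your proposal against.

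That said, your sketch is a faithful outline of how the cited reference establishes the result. The block-encoding via sparse-access oracles, the QSVT implementation of an odd polynomial approximant to $1/x$ of degree $O(\kappa_\mathcal{A}\log(\kappa_\mathcal{A}/\eta))$, and the variable-time amplitude amplification to boost the success probability are exactly the three layers in Chakraborty--Gily\'en--Jeffery. One point worth tightening: the $\log\kappa_\mathcal{A}$ factor in \AD{Eq.}~\eqref{thm: q_lin_eqns} and the second power in $\log^2(\kappa_\mathcal{A}/\eta)$ come specifically from the \emph{variable-time} amplitude amplification of Ambainis (not fixed-point amplification), which interleaves the QSVT circuit at geometrically growing precision levels; ordinary or fixed-point amplification would give a cleaner $O(\kappa_\mathcal{A})$ multiplicative overhead but would not reproduce the exact logarithmic structure stated. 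If you want your sketch to match the stated bound precisely rather than up to polylogarithmic factors, you should invoke variable-time amplification explicitly.
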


In \AD{Theorem} \ref{thm: Solv LE} we present an algorithm for solving the DDE. We implement this algorithm in \AD{Theorem} \ref{thm: QLEM}, and summarise its complexity in Fig. \ref{fig: quantum_linear}.

\begin{figure}[!ht]
\centering
\includegraphics[]{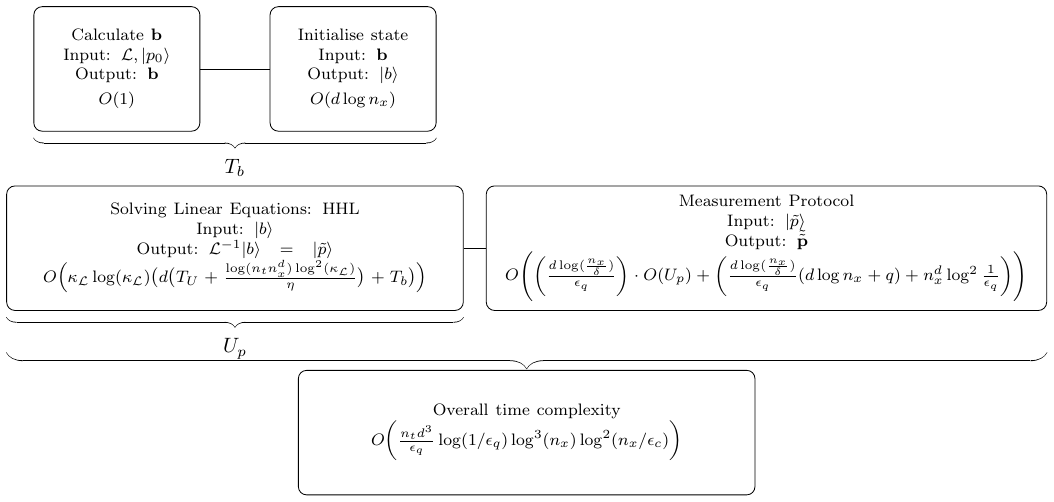}
\caption{A flowchart demonstrating the contributing factors to the overall time complexity of the quantum linear equations solution method for the DDE in \AD{Theorem} \ref{thm: QLEM}. The \AD{``solving linear equations''} complexity is based on \AD{Theorem} \ref{thm: Solv LE} and the complexity of the \AD{``measurement protocol''} is from \AD{Theorem} \ref{thm: q_meas}.}
    \label{fig: quantum_linear}
\end{figure}

\begin{theorem}[Quantum linear equations method]
  There is a quantum algorithm that approximates $p(\mathbf{x},t)$ in the case when $d \geq 3$ so $\frac{D\Delta t}{\Delta x^2} \leq 1/5$ by \AD{Theorem} \ref{thm: approx linf} and $a/D<2\sqrt{10}$. The approximation achieves $\Tilde{\Tilde{p}}(\mathbf{x}, t)$ such that $||\Tilde{\Tilde{p}}(\mathbf{x}, t) - p(\mathbf{x}, t)||_{\infty} \leq \epsilon_c +\epsilon_q $ for all $(\mathbf{x},t) \in G$ with $99\% $ success probability in time
  \begin{multline}
        O\bigg(\frac{n_td^3}{\epsilon_q}\log(1/\epsilon_q)\log^3(n_x)\log^2(n_x/\epsilon_c)\bigg) = \\ 
       O\bigg(\frac{d^5 T^2 \zeta (aL + D)^2}{\epsilon_q\epsilon_c}\log(1/\epsilon_q)\log^3\Big(\frac{Td\zeta L^2 (aL + D)^2}{\epsilon_c D}\Big) \log^2\Big(\frac{Td\zeta L^2(aL + D)^2}{\epsilon_c^3 D}\Big)\bigg).
    \end{multline} 
    Suppressing the logarithmic terms, this is
    \begin{equation}
            \Tilde{O}\bigg(\frac{d^5T^2 \zeta (aL + D)^2}{\epsilon_q\epsilon_c} \bigg)\AD{.}
    \end{equation}
    \label{thm: QLEM}
\end{theorem}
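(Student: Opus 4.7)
The plan is to combine the discretization that underlies Eq.~\eqref{eqn: syslineareqns} with the quantum linear-system algorithm of Theorem \ref{thm: Solv LE}, and then to feed the resulting state into the extraction protocol of Theorem \ref{thm: q_meas} to obtain $\tilde{\tilde{p}}(\mathbf{x},t)$.

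First I would fix $n_x$ and $n_t$ via Corollary \ref{cor:dxdt bounds} so that the pure classical discretization error is at most $\epsilon_c$. The matrix $\mathcal{A}$ in Eq.~\eqref{eqn: syslineareqns} then has dimension $N = n_x^d n_t$, sparsity $s = O(d)$ inherited from the nearest-neighbour stencil that defines $\mathcal{L}$, and condition number $\kappa_\mathcal{A} = \Theta(n_t)$ by Theorem \ref{thm: kappa}; the branch $a^2 T/(2 n_t D) \leq 1$ is automatic for the chosen $\Delta t$, so only the first case of that theorem is relevant.

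Second I would prepare the right-hand side $|b\rangle \propto |\mathcal{L}\mathbf{p}_0\rangle$. Since $\mathbf{p}_0$ is classically specified and $\mathcal{L}$ is $O(d)$-sparse with $\kappa_\mathcal{L} = O(1)$ by Lemma \ref{lem: kappa_l}, a block-encoding of $\mathcal{L}$ applied to a cheaply prepared $|p_0\rangle$ followed by a constant number of amplitude amplification rounds produces $|b\rangle$ in a time $T_b$ that is polylogarithmic in the problem parameters. This is exactly where the hypotheses $d \geq 3$ and $a/D < 2\sqrt{10}$ in the theorem statement enter: since $\Delta t = \Delta x^2/(2dD)$ gives $D\Delta t/\Delta x^2 = 1/(2d)$, the bound $1/(2d) \leq 1/5$ required by Lemma \ref{lem: kappa_l} forces $d \geq 3$.

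Third I would invoke Theorem \ref{thm: Solv LE} with tolerance $\eta$ chosen so that the $2$-norm error on the output state translates to an $\infty$-norm error well inside $\epsilon_c$; because $\eta$ enters only polylogarithmically, a choice of order $\epsilon_c/n_t$ is essentially free. The dominant factor after suppressing logs is $\kappa_\mathcal{A}\cdot s\cdot \log N = \tilde{O}(n_t \cdot d \cdot d\log n_x) = \tilde{O}(n_t d^2)$. Wrapping this in the extraction procedure of Theorem \ref{thm: q_meas} multiplies the cost by $\tilde{O}(\log(n_x^d)/\epsilon_q) = \tilde{O}(d/\epsilon_q)$ calls to the state-preparation oracle, so the total time is $\tilde{O}(n_t d^3/\epsilon_q)$; substituting $n_t$ from Corollary \ref{cor:dxdt bounds} reproduces the stated bound.

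The hard part I expect is \emph{error bookkeeping}: three distinct errors, namely the discretization $\epsilon_c$, the QLSA tolerance $\eta$, and the extraction tolerance $\epsilon_q$, must be propagated through two separate norm conversions ($2$-norm to $\infty$-norm on the discretized solution, and quantum amplitudes to classical probabilities through Theorem \ref{thm: q_meas}) while keeping every resulting overhead logarithmic so it hides inside $\tilde{O}(\cdot)$. A closely related subtlety is ensuring that the $|b\rangle$ preparation does not become the bottleneck: without the $\kappa_\mathcal{L}=O(1)$ bound of Lemma \ref{lem: kappa_l}, $T_b$ would reintroduce a $\kappa_\mathcal{L}$ factor and spoil the clean $d^3/\epsilon_q$ scaling, which is exactly why the theorem restricts to the parameter regime in which that lemma applies.
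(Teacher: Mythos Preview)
Your proposal is correct and follows essentially the same route as the paper: set up the discretized linear system with the parameters of Corollary~\ref{cor:dxdt bounds}, prepare $|b\rangle$ cheaply by applying $\mathcal{L}$ to $|p_0\rangle$ using Lemma~\ref{lem: kappa_l} (which is precisely what forces the $d\ge 3$ and $a/D<2\sqrt{10}$ restrictions), invoke Theorem~\ref{thm: Solv LE} with $\kappa_\mathcal{A}=\Theta(n_t)$ and $s=O(d)$, and then wrap everything in the extraction protocol of Theorem~\ref{thm: q_meas}. Your explicit derivation of the $d\ge 3$ constraint from $D\Delta t/\Delta x^2 = 1/(2d)\le 1/5$ is a nice touch that the paper leaves implicit, and your identification of the error bookkeeping and the $T_b$ bottleneck as the delicate points is exactly right.
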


\begin{proof}
We use Corollary \ref{cor:dxdt bounds} and \AD{Theorem} \ref{thm: kappa}, to achieve a classical \AD{discretization} accuracy of $\epsilon_c$ in the $\infty$-norm by a system of $N = O(n_tn_x^d)$ linear equations. The system has a condition number $\kappa_\mathcal{A} = \Tilde{O}(n_t)$ as defined in \AD{Theorem} \ref{thm: kappa} and sparsity $O(d)$. Then \AD{Theorem} \ref{thm: Solv LE} can be used to solve the system of linear equations.

 First, we construct the initial quantum state $|b\rangle$ on the right-hand side of \AD{Eq.} (\ref{eqn: syslineareqns}). This provides $T_b$ for \AD{Theorem} \ref{thm: Solv LE}. We begin by constructing the state $|p_0\rangle$ and assume that the marginals of $\mathbf{p}_0$ (and its powers) can be computed efficiently,  allowing this construction to be completed in time $O(d\log n_x)$. 
 Next, we apply $\mathcal{L}$ to this state, using the linear combination of unitaries \cite{Childs2015QuantumPrecision}. This step takes $\Tilde{O}(\kappa_{\mathcal{L}})$ time. Lemma~\ref{lem: kappa_l} shows that in the restricted regime we have specified $\kappa_{\mathcal{L}} = O(1)$.
Therefore, the \AD{right-hand} side of the system of linear equations can be built in time $O(d\log n_x)$, this acts as $T_b$ from \AD{Theorem} \ref{thm: Solv LE} and will be shown to be negligible. 

Using \AD{Theorem} \ref{thm: Solv LE}, there is a quantum algorithm that can produce a state $|\Tilde{p}\rangle$ such that $|||\Tilde{p}\rangle - \mathcal{A}^{-1}|\mathcal{L} p_0\rangle||_2 \leq \epsilon_c$  in time 
    \begin{equation}
        O\bigg(n_t\log n_t\Big(d\log n_x + d\log^2 \Big(\frac{n_t}{\epsilon_c}\Big)\Big(\log(n_tn_x^d) + \log^{2.5}\Big(\frac{n_td\log(n_t/\epsilon_c)}{\epsilon_c}\Big)\Big)\Big)\bigg)\AD{,} 
    \end{equation}
    \begin{equation}
        = O\big(n_td\log n_t\log^2(n_t/\epsilon_c)\log(n_tn_x^d)\big)\AD{,} 
    \end{equation}
    since 
    $\log n_tn_x^d \gg \log^{2.5}\Big(\frac{n_td\log(n_t/\epsilon_c)}{\epsilon_c}\Big)$.
     
    Then we use \AD{Theorem} \ref{thm: q_meas} to extract the probability distribution $\Tilde{\Tilde{\mathbf{p}}}$ from the state $|\Tilde{p}\rangle$, with error probability $\delta = 0.01$. This requires repeating the steps above $O(\log(n_x^d/\delta) \log(1/(\epsilon_q\delta))/\epsilon_q)$\AD{, w}hich uses an additional $O\big(1/\epsilon_q\big(\log (1/(\epsilon_q\delta))(q + \log(1/\epsilon_q))\big)\big)$ gates. In this case $q = O(\log n_tn_x^d)$. Therefore, we have shown that the overall complexity of the final distribution is
    \begin{equation}
        O\bigg(\frac{n_td}{\epsilon_q}\log(n_x^d)\log(1/\epsilon_q)\log n_t\log^2(n_t/\epsilon_c)\log(n_tn_x^d)+ 1/(\epsilon_q)\log(n_tn_x^d/\epsilon_q)\bigg) 
    \end{equation}
   \AD{and} it is clear that the gate complexity is negligible. Furthermore $\log n_t = O(\log n_x)$ so this can be reduced to
    \begin{equation}
         O\bigg(\frac{n_td^3}{\epsilon_q}\log(1/\epsilon_q)\log^3(n_x)\log^2(n_x/\epsilon_c)\bigg).
        \label{eqn: comp_lin_eqns}
    \end{equation}
    Then we substitute in the values for $n_t$ and $n_x$ from Corollary \ref{cor:dxdt bounds} resulting in an overall complexity of 
    \begin{equation}
            O\bigg(\frac{d^5 T^2 \zeta (aL + D)^2}{\epsilon_q \epsilon_c}\log(1/\epsilon_q)\log^3\Big(\frac{Td\zeta L^2 (aL + D)^2}{\epsilon_c D}\Big) \log^2\Big(\frac{Td\zeta L^2(aL + D)^2}{\epsilon_c^3 D}\Big)\bigg). 
    \end{equation} 
    For clarity, this is
    \begin{equation}
            \Tilde{O}\bigg(\frac{d^5T^2 \zeta (aL + D)^2}{\epsilon_q \epsilon_c} \bigg) 
    \end{equation}
    when we suppress logarithmic terms. 

    We consider the spatial complexity of this method during the application of \AD{Theorem} \ref{thm: q_meas}. In the first step we use $q = \log n_t n_x^d$ qubits, while the subsequent steps use $1/\epsilon_q(\AD{\log}(1/\epsilon_q)+q)$ qubits. Therefore, the overall spatial complexity is $O\bigg((1+1/\epsilon_q)\Big(d\log\Big(\frac{Td\zeta L^2 (aL+D)^2}{\epsilon_c D}\Big)\Big) + 1/\epsilon_q \log(1/\epsilon_q)\bigg) = \Tilde{O}(d/\epsilon_q)$.
\end{proof}
This is the only quantum method that provides a solution for all time steps, albeit with a predictable efficiency trade-off.
On inspection of the results of \AD{Theorem}~\ref{thm: QLEM} and \AD{Theorem}~\ref{thm: timestepping} there is a quantum advantage in time efficiency when using the quantum method. 
In \AD{Theorem} \ref{thm: QLEM} we have demonstrated that, for all controllable variables, quantum scaling surpasses the classical scaling in \AD{Theorem} \ref{thm: timestepping} provided we bound the variable $\epsilon_q$. Therefore, quantum advantage is achieved when 
\begin{equation*}
    1/\epsilon_q \lesssim \Tilde{O}\Bigg(\frac{d^{d/2-2}T{d/2}}{L^{d^2/2-d}}\Big(\frac{\zeta(aL +D)^2}{\epsilon_c}\Big)^{d/2}\Bigg)\AD{,}
\end{equation*}
while maintaining the overall error $\epsilon$. The $\epsilon_c$ dependence of the quantum linear equations method is \AD{well known}, and our results match those found in the literature \cite{Berry2014High-orderEquations, Jin2022TimeEquations, Montanaro2015QuantumMethods}. We have included this result for comparison with the following methods in Sec. \ref{ssec:qrw} and Sec. \ref{ssec: QFT} will demonstrate an improved $\epsilon$ dependence.

\subsection{Quantum time evolution}
\label{ssec: qtm}
We present here the quantum \AD{time-evolution} method described in Ref.~\cite{Over2025QuantumOperator} as a quantum equivalent to the classical method presented in \AD{Theorem}~\ref{thm: timestepping}. To evolve a solution through time with repeated applications of the operator $\mathcal{L}$ directly would not provide any computational advantage. An alternative is to evolve the solution using Hamiltonian simulation. For the DDE this can be achieved using a linear combination of unitaries to combine the Hamiltonian simulation of the drift term with a shift operator for the diffusion term~\cite{Brearley2024QuantumSimulation, Over2025QuantumOperator}. We stated at the end of Sec.~\ref{ssec: time_evolution} that no efficient quantum method exists. We will prove that statement in \AD{Theorem} \ref{thm: QTM} and summarise the complexity in Fig. \ref{fig: QTM}.

\begin{figure}
\centering
\includegraphics[]{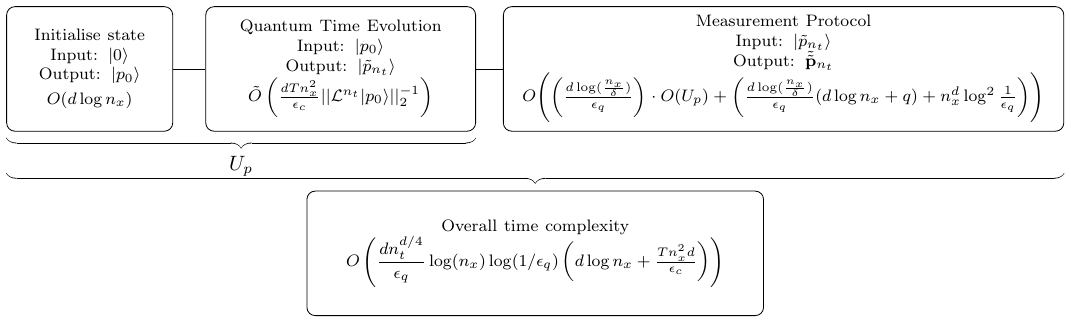}
\caption{The figure shows the complexities for the quantum \AD{time-evolution} algorithm from \AD{Theorem} \ref{thm: QTM}. The complexity of the \AD{``quantum time evolution''} is from Ref.~\cite{Over2025QuantumOperator} and the complexity of the \AD{``measurement protocol''} is from \AD{Theorem} \ref{thm: q_meas}.}
    \label{fig: QTM}
\end{figure}

\begin{theorem}
     There is a quantum algorithm that estimates $p(\mathbf{x})$ such that $||\Tilde{\Tilde{p}}(\mathbf{x}, T) - p(\mathbf{x}, T)||_{\infty} \leq \epsilon_c +\epsilon_q $ for all $(\mathbf{x}) \in G$ at a fixed $T=n_t\Delta t$ with $99\% $ success probability in time
    \begin{multline}
        O\left( \dfrac{dn_t^{d/4}}{\epsilon_q}\log(n_x) \log(1/\epsilon_q) \left(d\log n_x + \frac{Tn_x^2 d}{\epsilon_c}\right)\right)\\
        = O\left( \dfrac{d^{d/2+1} T^{d/2}\zeta^{d/4}(aL+D)^{d/2}}{\epsilon_q\epsilon_c^{d/4}}\log\left(\frac{Td\zeta L^2(aL+D)^2}{\epsilon_c D}\right) \log(1/\epsilon_q)\right.\\
        \left.
        \left(d\log \left(\frac{Td\zeta L^2(aL+D)^2}{\epsilon_c D}\right) + \frac{d^2T^2\zeta L^2(aL+D)^2}{\epsilon_c^2 D} \right) \right).
    \end{multline}
    For clarity, when logarithmic terms are suppressed this is
    \begin{equation}
        \Tilde{O}\left( \dfrac{d^{d/2+3} T^{d/2+2}\zeta^{d/4+1}L^2(aL+D)^{d/2}}{\epsilon_q\epsilon_c^{d/4+2}}\right).
    \end{equation}

    \label{thm: QTM}
\end{theorem}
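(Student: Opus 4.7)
The plan is to assemble three ingredients: (i) the classical discretization constants from Corollary \ref{cor:dxdt bounds}, (ii) the quantum time-evolution subroutine of Ref.~\cite{Over2025QuantumOperator}, which implements $\mathcal{L}^{n_t}$ by combining Hamiltonian simulation of the diffusion Laplacian with a linear combination of unitaries for the drift shift, and (iii) the measurement protocol of Theorem \ref{thm: q_meas}. Fixing $n_x$ and $n_t$ so that $\|\tilde{\mathbf{p}}_{n_t}-\mathbf{p}_{n_t}\|_\infty\le\epsilon_c$, the target is to produce the normalized state $|\tilde{p}\rangle\propto \mathcal{L}^{n_t}|p_0\rangle$ to $2$-norm error at most $\epsilon_c$, then apply the measurement protocol with accuracy $\epsilon_q$ to obtain $\tilde{\tilde{\mathbf{p}}}$ satisfying the stated $\epsilon_c+\epsilon_q$ bound in the $\infty$-norm.

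First I would prepare $|p_0\rangle$ in $O(d\log n_x)$ gates using the efficient-marginals assumption, then apply the construction of Ref.~\cite{Over2025QuantumOperator} to obtain a block-encoding of $\mathcal{L}^{n_t}$ to $\epsilon_c$-accuracy. A single invocation of this block-encoding costs $\tilde{O}(d\log n_x + Tn_x^2 d/\epsilon_c)$: the first term is the cost of state access and LCU overhead, while the second reflects the Hamiltonian-simulation cost for a $d$-dimensional Laplacian on a grid of spacing $\Delta x = 2L/n_x$ over total time $T=n_t\Delta t$ with simulation error $\epsilon_c$.

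The main obstacle is correcting for the subnormalization of $\mathcal{L}^{n_t}|p_0\rangle$, since $\mathcal{L}$ is contractive by the stochasticity argument used in Theorem \ref{thm: approx linf}. Lemma \ref{lem: bound L norm} controls this precisely: $\|\mathcal{L}^{n_t}\|_2 \ge 1/(4\sqrt{n_t})^{d/2}$, so amplitude amplification costs $O(n_t^{d/4})$ repetitions of the block-encoding to produce $|\tilde{p}\rangle$ with constant success probability. This factor is what drives the $n_t^{d/4}$ dependence in the final scaling and is responsible for the method being less efficient than its classical counterpart in Theorem \ref{thm: timestepping}. A subtle point to verify is that the Hamiltonian-simulation error $\epsilon_c$ propagates through amplitude amplification without overwhelming the lower bound on $\|\mathcal{L}^{n_t}|p_0\rangle\|_2$; choosing the block-encoding accuracy as a fixed polynomial factor smaller than the Lemma \ref{lem: bound L norm} bound suffices and only adds logarithmic overhead.

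Finally I would invoke Theorem \ref{thm: q_meas} with the amplitude-amplified state-preparation unitary as $U_p$ and error probability $\delta=0.01$, taking $q=O(d\log n_x)$. This requires $O(\log(n_x^d)\log(1/\epsilon_q)/\epsilon_q)$ calls to $U_p$ together with $\tilde{O}(d/\epsilon_q)$ auxiliary gates, the latter being negligible. Multiplying the amplitude-amplification factor $n_t^{d/4}$, the per-call cost $d\log n_x + Tn_x^2 d/\epsilon_c$, and the measurement factor $\log(n_x)\log(1/\epsilon_q)/\epsilon_q$ yields the first displayed bound. Substituting the values of $n_x$ and $n_t$ from Corollary \ref{cor:dxdt bounds} and collecting logarithms then produces the explicit and $\tilde{O}$-suppressed forms in the theorem statement.
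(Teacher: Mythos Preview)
Your proposal is correct and follows essentially the same route as the paper: prepare $|p_0\rangle$ in $O(d\log n_x)$, invoke the Over et al.\ time-marching block-encoding at cost $\tilde{O}(Tn_x^2 d/\epsilon_c)$, absorb the subnormalization via Lemma~\ref{lem: bound L norm} to get the $n_t^{d/4}$ amplification overhead, apply Theorem~\ref{thm: q_meas} with $\delta=0.01$, and substitute Corollary~\ref{cor:dxdt bounds}. One minor descriptive inaccuracy worth fixing: in the construction of Ref.~\cite{Over2025QuantumOperator} it is the \emph{advection} (drift) term whose central-difference discretization is antisymmetric and hence handled by Hamiltonian simulation, while the \emph{diffusion} Laplacian is the part implemented via the shift operator and LCU---you have these roles swapped, though this does not alter the complexity analysis.
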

\begin{proof}
    We start by producing the state $\ket{p_0}$, which can be completed in time $O(d\log n_x)$ as shown in \AD{Theorem} \ref{thm: QLEM}. We can then compute the evolution of the solution using the quantum algorithm proposed in Ref.~\cite{Over2025QuantumOperator}. This breaks down the matrix $\mathcal{L}$ into an advection-like component and a shift operator that implements the diffusion. The advection component is represented as a Hamiltonian simulation~\cite{Brearley2024QuantumSimulation} and is then linearly combined with the corrective shift operator using a linear combination of unitaries. 
    This process is repeated $n_t$ times, which can be completed in $\Tilde{O}(Tn_x^2 d/\epsilon_c)$~\cite{Over2025QuantumOperator}. The probability of success of this method is $||\mathcal{L}^{n_t}|p_0\rangle||_2^{-1}$. We know that $||\mathcal{L}^{n_t}|p_0\rangle||_2^{-1} \leq (4\sqrt{n_t})^{d/2}$ using Lemma \ref{lem: bound L norm}. Therefore, the complexity of producing the state $\ket{\Tilde{p}_{n_t}}$ is 
    \begin{equation}
        O\left(\left(d\log n_x + \frac{Tn_x^2 d}{\epsilon_c}\right)(4\sqrt{n_t})^{d/2} \right).
    \end{equation}

    Then we use \AD{Theorem} \ref{thm: q_meas} to extract the probability distribution $\Tilde{\Tilde{\mathbf{p}}}$ from the state $|\Tilde{p}\rangle$, with error probability $\delta = 0.01$. This requires repeating the steps above $O(\log(n_x^d/\delta) \log(1/(\epsilon_q\delta))/\epsilon_q)$ times, and uses an additional $O\big(1/\epsilon_q\big(\log (1/(\epsilon_q\delta))(q + \log(1/\epsilon_q))\big)\big)$ gates. In this case $q = O(d\log n_x + \log d)$~\cite{Over2025QuantumOperator}. Therefore, we have shown that the overall complexity of the final distribution is
    \begin{equation}
        O\left( \dfrac{\log(n_x^d) \log(1/\epsilon_q)}{\epsilon_q} \left(d\log n_x + \frac{Tn_x^2 d}{\epsilon_c}\right)(4\sqrt{n_t})^{d/2} +  \left(\dfrac{\log (1/\epsilon_q)(q + \log(1/\epsilon_q))}{\epsilon_q}\right)\right)
    \end{equation}
   \AD{and} it is clear that the gate complexity is negligible. Therefore, we reduce the complexity to
   \begin{equation}
        O\left( \dfrac{dn_t^{d/4}}{\epsilon_q}\log(n_x) \log(1/\epsilon_q) \left(d\log n_x + \frac{Tn_x^2 d}{\epsilon_c}\right)\right).
    \end{equation}

   Then we substitute in the values for $n_t$ and $n_x$ from Corollary \ref{cor:dxdt bounds} resulting in an overall complexity of
   \begin{multline}
        O\left( \dfrac{d^{d/2+1} T^{d/2}\zeta^{d/4}(aL+D)^{d/2}}{\epsilon_q\epsilon_c^{d/4}}\log\left(\frac{Td\zeta L^2(aL+D)^2}{\epsilon_c D}\right) \log(1/\epsilon_q)\right.\\
        \left.\left(d\log \left(\frac{Td\zeta L^2(aL+D)^2}{\epsilon_c D}\right) + \frac{d^2T^2\zeta L^2(aL+D)^2}{\epsilon_c^2 D}\right)\right).
    \end{multline}
    Or when we suppress logarithmic terms
    \begin{equation}
        \Tilde{O}\left( \dfrac{d^{d/2+3} T^{d/2+2}\zeta^{d/4+1}L^2(aL+D)^{d/2}}{\epsilon_q\epsilon_c^{d/4+2}}\right).
    \end{equation}
\end{proof}

In this method we evolve the solution with time using subsequent applications of a shifted Hamiltonian simulation algorithm achieved by a linear combination of unitaries~\cite{Over2025QuantumOperator}. We present this method as a quantum counterpart to the classical time stepping method presented in Sec.~\ref{ssec: time_evolution}. We can see that the time complexity of this method is very similar to the classical method. There is some improvement on scaling with $\zeta$, however, the scaling with $d$ and $T$ is the same and the scaling with $\epsilon_c$ and $L$ are worse. Consequently, when the new term $\epsilon_q$ is included, the quantum method offers no advantage. There is also a noisy intermediate scale quantum algorithm proposed by Ref.~\cite{Tiwari2025AlgorithmicMethod} to solve PDEs by time stepping using the quantum lattice Boltzmann method. This provides a solution at all time steps by measuring and reloading the state at each step. While interesting, it is not comparable within our framework since our framework has no corresponding attribute for the collision operator required by this method.

\subsection{Quantum random walk}
\label{ssec:qrw}
Similar to the classical approach, we present a more efficient method for directly generating the quantum state corresponding to the distribution of a random walk at time $T = n_t\Delta t$. This method employs a quantum version of the \AD{random-walk} algorithm, resulting in the state $|\Tilde{p}_{n_t}\rangle$.

\begin{theorem}[\AD{(Theorem 1 in Ref.~\cite{Apers2018QuantumTesting})}]

 Given a symmetric Markov chain with transition matrix $\mathcal{L}$ and a quantum state $|\psi_0\rangle$, there is an algorithm which produces an approximate state $|\Tilde{\psi}_k\rangle$ such that
    \begin{equation}
        \bigg|\bigg| |\Tilde{\psi}_k \rangle - \frac{\mathcal{L}^k|\psi_0\rangle}{||\mathcal{L}^k|\psi_0\rangle||_2} \bigg|\bigg|_2 \leq \eta
    \end{equation}
    using 
    \begin{equation}
        O\bigg(||\mathcal{L}^k|\psi_0\rangle||_2^{-1}\sqrt{k\log(1/(\eta ||\mathcal{L}^k|\psi_0\rangle||_2))}\bigg)
    \end{equation}
    steps of the quantum walk corresponding to $\mathcal{L}$.
    \label{thm: Quantum_walk}
\end{theorem}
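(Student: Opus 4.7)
The plan is to combine a Szegedy-style quantum walk with polynomial approximation of the map $x \mapsto x^k$ and then apply fixed-point amplitude amplification. First I would build the Szegedy walk unitary $W$ associated to the symmetric transition matrix $\mathcal{L}$. By the standard Szegedy construction, one step of $W$ can be implemented at the cost of a single classical walk step, and the eigenvalues of $\mathcal{L}$ are embedded as $\lambda_j = \cos\theta_j$ where $e^{\pm i\theta_j}$ are eigenphases of $W$. In particular, $W$ provides a block-encoding of $\mathcal{L}$ on an enlarged register containing $|\psi_0\rangle \otimes |0\rangle_{\rm anc}$.

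Second, I would approximate the monomial $x^k$ on $[-1,1]$ by a polynomial $p(x)$ of degree $D$. A standard result in approximation theory, obtained by truncating the Chebyshev expansion of $x^k$ (equivalently, using a binomial-tail estimate), gives
\begin{equation}
    \sup_{x\in[-1,1]} |p(x) - x^k| \leq \delta \quad\text{with}\quad D = O\bigl(\sqrt{k\log(1/\delta)}\bigr).
\end{equation}
Applying $p$ to the block-encoded $\mathcal{L}$ via quantum signal processing / QSVT produces a unitary $U_p$ that block-encodes $p(\mathcal{L})$ and uses $O(D)$ calls to $W$ (hence $O(D)$ walk steps).

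Third, I would prepare $U_p(|\psi_0\rangle|0\rangle_{\rm anc})$, which equals $(p(\mathcal{L})|\psi_0\rangle)|0\rangle_{\rm anc} + |\Phi^\perp\rangle$, where $|\Phi^\perp\rangle$ is orthogonal to the ancilla-$|0\rangle$ subspace. The amplitude of the ``good'' branch is $\|p(\mathcal{L})|\psi_0\rangle\|_2$, which, by the polynomial approximation, lies within $\delta$ of $\|\mathcal{L}^k|\psi_0\rangle\|_2$. Applying fixed-point amplitude amplification with $O(\|\mathcal{L}^k|\psi_0\rangle\|_2^{-1})$ rounds yields a state $\eta/2$-close to $p(\mathcal{L})|\psi_0\rangle/\|p(\mathcal{L})|\psi_0\rangle\|_2$. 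Choosing $\delta = \Theta(\eta\,\|\mathcal{L}^k|\psi_0\rangle\|_2)$ and combining the two error contributions via the triangle inequality gives a final state within $\eta$ of the desired normalized vector. The total number of walk steps is
\begin{equation}
    O\bigl(\|\mathcal{L}^k|\psi_0\rangle\|_2^{-1} \cdot D\bigr) = O\Bigl(\|\mathcal{L}^k|\psi_0\rangle\|_2^{-1}\sqrt{k\log\bigl(1/(\eta\|\mathcal{L}^k|\psi_0\rangle\|_2)\bigr)}\Bigr),
\end{equation}
as claimed.

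The main obstacle, and the source of the characteristic $\sqrt{k}$ speedup, is establishing the polynomial-approximation bound together with a careful joint error analysis: the polynomial error $\delta$ must be chosen small enough that after amplitude amplification (which amplifies all errors by $1/\|\mathcal{L}^k|\psi_0\rangle\|_2$) the final fidelity error remains below $\eta$, yet not so small that $D$ inflates and spoils the scaling. A secondary subtlety is that the Szegedy block-encoding acts on an edge-space register and one must verify that the state $|\psi_0\rangle$ can be lifted to the walk's invariant subspace (or, equivalently, use a reflection-based variant of Szegedy's walk) so that $W$ truly implements the action of $\mathcal{L}$ on the relevant component; the symmetry of $\mathcal{L}$ assumed in the statement is precisely what makes this lifting clean.
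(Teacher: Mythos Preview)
The paper does not prove this statement; it simply quotes Theorem~1 of Apers--Sarlette (Ref.~\cite{Apers2018QuantumTesting}) and uses it as a black box. Your outline is essentially the proof given in that reference: Szegedy's walk block-encodes $\mathcal{L}$, the monomial $x^k$ is approximated on $[-1,1]$ by a degree-$O(\sqrt{k\log(1/\delta)})$ Chebyshev-truncation polynomial, and amplitude amplification boosts the good branch at cost $O(\|\mathcal{L}^k|\psi_0\rangle\|_2^{-1})$, with $\delta=\Theta(\eta\,\|\mathcal{L}^k|\psi_0\rangle\|_2)$ balancing the errors. The only minor difference is presentational: Apers--Sarlette phrase the polynomial step as a linear combination of Chebyshev polynomials $T_\ell(\mathcal{L})$ implemented directly by $\ell$ walk steps (since $T_\ell(\cos\theta)=\cos(\ell\theta)$), whereas you invoke QSVT; these are equivalent here and yield the same complexity.
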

We combine this with the measurement technique described in \AD{Theorem} \ref{thm: q_meas} to construct a random-walk algorithm for solving the DDE. We have provided a summary of the complexity in Fig. \ref{fig: QRW}.

\begin{figure}
\centering
\includegraphics[]{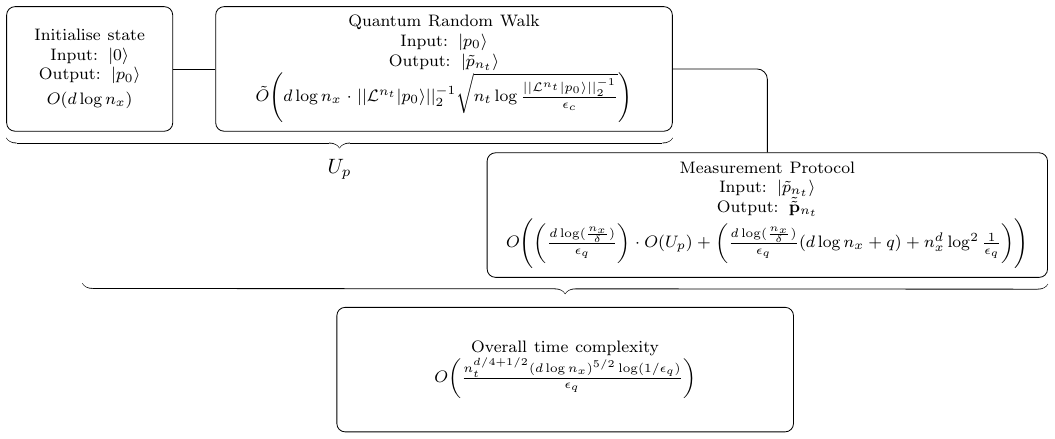}
\caption{The figure shows the complexities for the quantum \AD{random-walk} algorithm from \AD{Theorem} \ref{thm: QRW}. Where the complexity of the \AD{``quantum random walk''} is described in \AD{Theorem}~\ref{thm: Quantum_walk} and the complexity of the \AD{``measurement protocol''} is from \AD{Theorem} \ref{thm: q_meas}.}
    \label{fig: QRW}
\end{figure}

\begin{theorem}[Quantum random walk]
 There is a quantum algorithm that estimates $p(\mathbf{x})$ such that $||\Tilde{\Tilde{p}}(\mathbf{x}, T) - p(\mathbf{x}, T)||_{\infty} \leq \epsilon_c +\epsilon_q $ for all $(\mathbf{x}) \in G$ at a fixed $T=n_t\Delta t$ with $99\% $ success probability in time
    \begin{multline}
    O\bigg(\frac{n_t^{d/4+1/2}(d\log n_x)^{5/2}\log(1/\epsilon_q)}{\epsilon_q}\bigg)\\
    = O\bigg(\frac{d^{(d+7)/2}}{\epsilon_q}\Big(\frac{T^2 \zeta (aL + D)^2}{\epsilon_c }\Big)^{d/4+1/2} \log^{5/2}\Big(\frac{Td\zeta L^2(aL + D)^2}{\epsilon_c D}\Big) \log(1/\epsilon_q)\bigg).
     \end{multline}
    For clarity, when logarithmic terms are suppressed this is
    \begin{equation}
         \Tilde{O}\bigg(\frac{d^{5/2}n_t^{d/4+1/2}}{\epsilon_q}\bigg)=        \Tilde{O}\bigg(\frac{d^{(d+7)/2}T^{d/2+1}\zeta^{d/4+1/2} (aL + D)^{d/2+1}}{\epsilon_q\epsilon_c^{d/4+1/2}} \bigg).
    \end{equation}
    \label{thm: QRW}
\end{theorem}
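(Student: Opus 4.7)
The plan is to compose the quantum walk state-preparation theorem (Theorem \ref{thm: Quantum_walk}) with the probability-extraction protocol of Theorem \ref{thm: q_meas}, following the same template used in the proof of Theorem \ref{thm: QTM}. First I would prepare the initial state $|p_0\rangle$ on a register of $q=O(d\log n_x)$ qubits in $O(d\log n_x)$ time, using the state-preparation primitive already invoked in the proof of Theorem \ref{thm: QLEM}. The quantum walk is driven by the stochastic operator $\mathcal{L}$, which has sparsity $O(d)$ on the $n_x^d$-dimensional grid, so each walk step can be implemented with $O(d\log n_x)$ elementary gates.

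Next I would invoke Theorem \ref{thm: Quantum_walk} with $k=n_t$ and approximation parameter $\eta=\Theta(\epsilon_c)$, so that the $2$-norm error in preparing $\mathcal{L}^{n_t}|p_0\rangle/\|\mathcal{L}^{n_t}|p_0\rangle\|_2$ is absorbed into the classical discretisation error. The step count depends on $\|\mathcal{L}^{n_t}|p_0\rangle\|_2^{-1}$, for which I would apply Lemma \ref{lem: bound L norm} to obtain the bound $(4\sqrt{n_t})^{d/2}$, exactly as in the proof of Theorem \ref{thm: QTM}. Substituting this into
\begin{equation*}
    O\!\left(\|\mathcal{L}^{n_t}|p_0\rangle\|_2^{-1}\sqrt{n_t\log\!\left(1/(\eta\|\mathcal{L}^{n_t}|p_0\rangle\|_2)\right)}\right)
\end{equation*}
and using $\log n_t=O(\log n_x)$ yields $O(n_t^{d/4+1/2}\sqrt{d\log n_x})$ walk steps, and therefore $O(n_t^{d/4+1/2}(d\log n_x)^{3/2})$ gates to prepare $|\tilde{p}_{n_t}\rangle$.

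I would then apply Theorem \ref{thm: q_meas} with failure probability $\delta=0.01$ to extract $\tilde{\tilde{p}}(\mathbf{x},T)$, which repeats the state preparation $\tilde{O}(d\log n_x\,\log(1/\epsilon_q)/\epsilon_q)$ times and contributes only a negligible additive gate count. Multiplying through gives the unsimplified bound stated in the theorem, and substituting the expressions for $n_t$ and $n_x$ from Corollary \ref{cor:dxdt bounds} produces the closed-form complexity. Suppressing logarithmic factors yields the advertised $\tilde{O}\bigl(d^{(d+7)/2}T^{d/2+1}\zeta^{d/4+1/2}(aL+D)^{d/2+1}/(\epsilon_q\epsilon_c^{d/4+1/2})\bigr)$ bound, and the space complexity follows the $\tilde{O}(d/\epsilon_q)$ template stated in the introduction.

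The main obstacle I anticipate is purely arithmetic bookkeeping: the three independent sources of $d\log n_x$ — a $\sqrt{d\log n_x}$ inside the logarithm of Theorem \ref{thm: Quantum_walk}, a $d\log n_x$ from the per-step gate cost of the walk on the $d$-dimensional sparse graph, and a $d\log n_x$ amplification factor from $\log(n_x^d/\delta)$ in Theorem \ref{thm: q_meas} — must combine cleanly to the $(d\log n_x)^{5/2}$ factor in the statement. A secondary subtlety is that Lemma \ref{lem: bound L norm} is phrased for the localised state $|0\rangle$, so I would need to argue (as is done implicitly in the proof of Theorem \ref{thm: QTM}) that the stochasticity of $\mathcal{L}$ together with the normalisation $\|\mathbf{p}_0\|_1=1$ permits the same $\Omega(1/(4\sqrt{n_t})^{d/2})$ lower bound on $\|\mathcal{L}^{n_t}|p_0\rangle\|_2$. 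Beyond these two points the argument is a direct composition of the quoted theorems.
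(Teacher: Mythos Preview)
Your proposal is correct and follows essentially the same route as the paper's proof: prepare $|p_0\rangle$ in $O(d\log n_x)$ time, invoke Theorem~\ref{thm: Quantum_walk} with $\eta=\Theta(\epsilon_c)$ and the bound $\|\mathcal{L}^{n_t}|p_0\rangle\|_2^{-1}\le (4\sqrt{n_t})^{d/2}$ from Lemma~\ref{lem: bound L norm}, simplify via $\log n_t=O(\log n_x)$ and $n_t\propto 1/\epsilon_c$ to reach $O(d^{3/2}n_t^{d/4+1/2}(\log n_x)^{3/2})$ for state preparation, and then multiply by the $O(d\log n_x\log(1/\epsilon_q)/\epsilon_q)$ overhead of Theorem~\ref{thm: q_meas}. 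The two subtleties you flag---the $(d\log n_x)^{5/2}$ bookkeeping and the use of Lemma~\ref{lem: bound L norm} for $|p_0\rangle$ rather than $|0\rangle$---are handled in the paper exactly as you anticipate (the latter simply by citing the lemma without further comment).
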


\begin{proof}
     We start by producing the initial state $|p_0\rangle$. Given we have assumed that sums of squares of $p_0$ over arbitrary regions can be computed in time $O(1)$, $|p_0\rangle$ can be computed in time $O(d\log n_x)$ using the techniques from \AD{Refs.}~\cite{Zalka1998SimulatingComputer, Sanders2018Black-boxArithmetic}. Next we use the algorithm from \AD{Theorem} \ref{thm: Quantum_walk} to produce the state $|\Tilde{p}_{n_t}\rangle$. The complexity of implementing a quantum walk step is $O(d\log n_x)$~\cite{Linden2022QuantumEquation}. 
    
     This results in a complexity of building the state $|\Tilde{p}_{n_t}\rangle $ as
     \begin{equation}
         O\bigg(d \log n_x \Big(1+ ||\mathcal{L}^{n_t}|p_0\rangle||_2^{-1} \sqrt{n_t\log \Big(\frac{||\mathcal{L}^{n_t}|p_0\rangle||_2^{-1}}{\epsilon_c}\Big)}\Big)\bigg).
     \end{equation}
     We know that $||\mathcal{L}^{n_t}|p_0\rangle||_2^{-1} \leq (4\sqrt{n_t})^{d/2}$ using Lemma \ref{lem: bound L norm}, so
     \begin{equation}
         O\bigg(d \log n_x \Big(1+ (\sqrt{n_t})^{d/2} \sqrt{n_t\log \Big(\frac{(\sqrt{n_t})^{d/2}}{\epsilon_c}\Big)}\Big)\bigg)
     \end{equation}
     \begin{equation}
         = O\bigg(d n_t^{d/4} \log n_x  \sqrt{n_t(\log(1/\epsilon_c)+ d/4\log n_t)}\bigg).
     \end{equation}
     To further simplify the expression, we apply the following limits; $\log n_t = O(\log n_x)$ and $n_t \propto 1/\epsilon_c$ which follow from Corollary \ref{cor:dxdt bounds} (as shown in the proof of \AD{Theorem}~\ref{thm: Classical_FFT}).  
     So we can simplify the overall complexity of producing the state $|\Tilde{p}_{n_t}\rangle $ to 
     \begin{equation}
        O\bigg(d^{3/2} n_t^{d/4+1/2}(\log n_x)^{3/2}\bigg).
     \end{equation}

    Then we use \AD{Theorem} \ref{thm: q_meas} to extract the probability distribution $\Tilde{\Tilde{\mathbf{p}}}_{n_t}$ from the state $|\Tilde{p}_{n_t}\rangle$, such that $||\Tilde{\Tilde{\mathbf{p}}}_{n_t}-\Tilde{\mathbf{p}}_{n_t}||_{\infty}\leq \epsilon_q $, which requires repeating the steps above $O(\log(n_x^d/\delta) \log(1/(\epsilon_q\delta))/\epsilon_q)$.  This requires an additional $O\big(1/\epsilon_q\big(\log (1/(\epsilon_q\delta))(q + \log(1/\epsilon_q))\big)\big)$ gates. 
    This results in an overall complexity for the final distribution $\Tilde{\Tilde{\mathbf{p}}}_{n_t} $ such that $||\Tilde{\Tilde{\mathbf{p}}}_{n_t}-\mathbf{p}_{n_t}||_{\infty}\leq \epsilon_c +\epsilon_q $ is
    \begin{equation}
         O\bigg(\frac{\log(n_x^d/\delta) \log(1/(\epsilon_q\delta))}{\epsilon_q}\Big(d^{3/2} n_t^{d/4+1/2}(\log n_x)^{3/2}\Big) + \frac{1}{\epsilon_q}\big(\log (1/(\epsilon_q\delta))(q + \log(1/\epsilon_q))\big)\bigg).
     \end{equation}
      In this case $q = d\log n_x$ and $\delta = 0.01$ 
     \begin{equation}
         O\bigg(\frac{n_t^{d/4+1/2}(d\log n_x)^{5/2}\log(1/\epsilon_q)}{\epsilon_q} + \frac{\log(1/\epsilon_q)}{\epsilon_q}(d\log n_x + \log(1/\epsilon_q))\bigg)
     \end{equation}
     \begin{equation}
         = O\bigg(\frac{n_t^{d/4+1/2}(d\log n_x)^{5/2}\log(1/\epsilon_q)}{\epsilon_q}\bigg).
     \end{equation}
     Then we apply the values of $n_x$ and $n_t$ from Corollary~\ref{cor:dxdt bounds} such that the overall complexity can be written as
     \begin{equation}
        O\bigg(\frac{d^{(d+7)/2}}{\epsilon_q}\Big(\frac{T^2 \zeta (aL + D)^2}{\epsilon_c }\Big)^{d/4+1/2} \log^{5/2}\Big(\frac{Td\zeta L^2(aL + D)^2}{\epsilon_c D}\Big) \log(1/\epsilon_q)\bigg).
     \end{equation}
    For clarity, when we suppress logarithmic terms this is
    \begin{equation}
         \Tilde{O}\bigg(\frac{d^{5/2}n_t^{d/4+1/2}}{\epsilon_q}\bigg)=        \Tilde{O}\bigg(\frac{d^{(d+7)/2}T^{d/2+1}\zeta^{d/4+1/2} (aL + D)^{d/2+1}}{\epsilon_q\epsilon_c^{d/4+1/2}} \bigg).    
     \end{equation}

     As in the previous case, we consider the spatial complexity during the measurement step in \AD{Theorem} \ref{thm: q_meas}. Since $\log n_t = O(\log n_x)$ this result is largely the same as for the previous case $O\big((1+1/\epsilon_q)\Big(d\log\Big(\frac{Td\zeta L^2(aL+D)^2}{\epsilon_c D}\Big)\Big) + 1/\epsilon_q \log(1/\epsilon_q)\big) = \Tilde{O}(d/\epsilon_q)$.
\end{proof}
In this method we leverage the inherent stochasticity of the DDE for the values of $\Delta t$ and $\Delta x$ we chose in Corollary \ref{cor:dxdt bounds}.  Consequently, the applicability of this method extends to a broader class of stochastic PDEs, including \AD{nonlinear} PDEs with spatially varying parameters. This is unsurprising given the well-established stochastic nature of the DDE, which is mappable to a stochastic differential equation.  However, we have one more computationally efficient solution method for solving the linear DDE.
 
\subsection{\AD{Diagonalization} by quantum Fourier transform}
\label{ssec: QFT}
In this section we present a quantum algorithm that uses the \AD{QFT} to improve efficiency over \AD{random-walk} methods for solving the DDE. Like the classical FFT approach in \AD{Theorem} \ref{thm: Classical_FFT}, we leverage the QFT to diagonalize a circulant matrix. We show that the key advantage of the QFT is its ability to perform calculations while the quantum state remains in superposition. This enables \AD{diagonalization} in $O(d\log\log n_x)$ \cite{Musk2020AComputations}. Since the QFT efficiently \AD{diagonalize}s $\mathcal{L}$, the only remaining requirement is the implementation of the \AD{nonunitary} operation $\Lambda^k$, where $\Lambda$ is the diagonal matrix of eigenvalues of $\mathcal{L}$ (as in the classical case). In Fig. \ref{fig: quantum_FT} we provide a flowchart illustrating the algorithm's components and their contributions to the overall time complexity.

\begin{figure}
\centering
\includegraphics[]{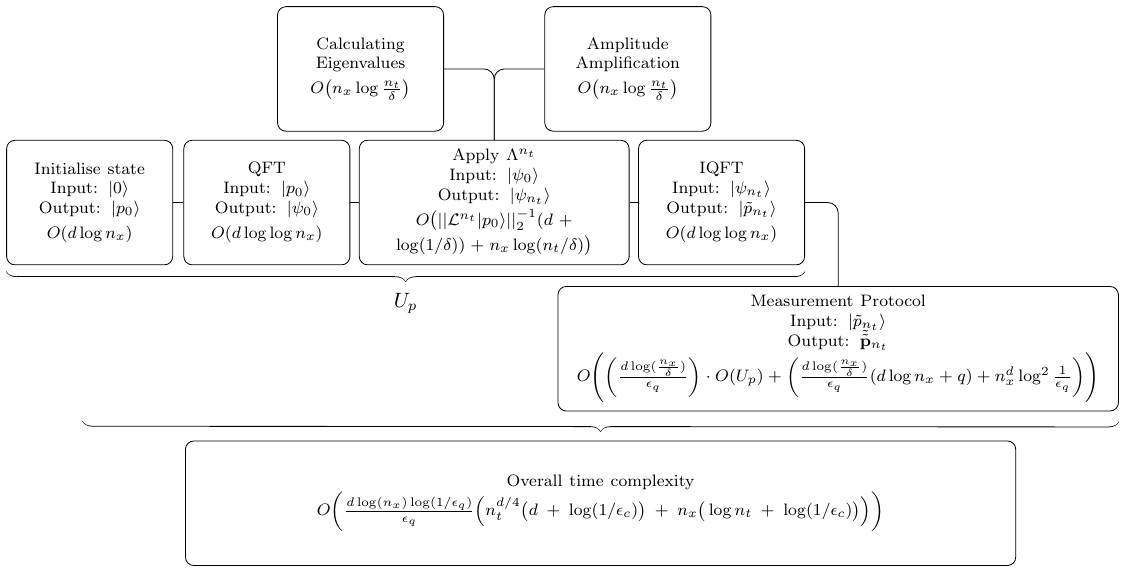}
\caption{A flowchart demonstrating the contributing factors to the overall time complexity of the quantum \AD{diagonalization} solution method for the DDE in \AD{Theorem} \ref{thm: QFT}. The cost of calculating eigenvalues is derived in Lemma \ref{lemma: eigen_L}, the QFT cost is derived in \AD{Ref.} \cite{Musk2020AComputations} and the measurement complexity is derived in \AD{Theorem} \ref{thm: q_meas}.}
    \label{fig: quantum_FT}
\end{figure}

\begin{theorem}[Quantum \AD{diagonalization}] There is a quantum algorithm that estimates the probability distribution $p(\mathbf{x})$ from \AD{Eq.}~\eqref{eqn: DDE} such that  $||\Tilde{\Tilde{p}}(\mathbf{x}, T) - p(\mathbf{x}, T)||_{\infty} \leq \epsilon_c +\epsilon_q $ for all $(\mathbf{x}) \in G$ at a fixed $T=n_t\Delta t$ with $99\% $ success probability in time 
    \begin{equation}
        O\bigg(\frac{d\log(n_x)\log(1/\epsilon_q)}{\epsilon_q} \Big(n_t^{d/4}\big(d + \log(1/\epsilon_c)\big)
        + n_x\big(\log n_t + \log(1/\epsilon_c)\big)\Big) \bigg)
    \end{equation}
    \begin{multline}
         =O\bigg(\frac{d}{\epsilon_q}\log\big(1/\epsilon_q\big)\log\Big(\frac{Td\zeta L^2 (aL + D)^2}{ \epsilon_c D}\Big) 
         \Big(\Big(\frac{T^2d^2\zeta (aL + D)^2}{ \epsilon_c }\Big)^{d/4}\big(d + \log(1/\epsilon_c)\big)\\
         + \sqrt{\frac{Td\zeta L^2(aL + D)^2}{ \epsilon_c D}}\Big(\log\Big(\frac{T^2d^2\zeta (aL + D)^2}{ \epsilon_c }\Big) + \log(1/\epsilon_c)\Big)\Big).
    \end{multline}
  Suppressing the logarithmic terms, this is
    \begin{equation}
        \Tilde{O}\bigg(\frac{d^2n_t^{d/4}}{\epsilon_q} \bigg) = \Tilde{O}\bigg( \frac{d^{(d/2+2)}T^{d/2}\zeta^{d/4} (aL + D)^{d/2}}{ \epsilon_q\epsilon_c^{d/4}}\bigg).
        \label{eqn: qft_comp}
    \end{equation}
    \label{thm: QFT}
\end{theorem}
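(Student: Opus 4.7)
The plan is to mimic the classical diagonalization of \AD{Theorem}~\ref{thm: Classical_FFT} on a quantum computer, exploiting that the differential operator $\mathcal{L}$ is diagonalized by the tensor power of the Fourier transform. The identity $\mathcal{L}^{n_t}=(\mathcal{F}^{\otimes d})^{-1}\Lambda^{n_t}\mathcal{F}^{\otimes d}$ will be implemented coherently on the state $|p_0\rangle$. Because $|l_j|\leq 1$ for the eigenvalues from \AD{Eq.}~\eqref{eqn: eigen_L_full} under the regime of \AD{Corollary}~\ref{cor:dxdt bounds}, the operator $\Lambda^{n_t}$ is a contraction rather than a unitary, so I will realize it as a subnormalized block encoding and recover the normalized output state $|\tilde{p}_{n_t}\rangle$ by amplitude amplification. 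Finally, \AD{Theorem}~\ref{thm: q_meas} will convert the resulting state into the classical estimate $\tilde{\tilde{\mathbf{p}}}_{n_t}$ up to $\epsilon_q$ in $\infty$-norm.

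Concretely, I would proceed as follows. First, precompute the $n_x$ one-dimensional eigenvalues of $\mathcal{H}$ to accuracy $\epsilon_c/n_t$ using \AD{Lemma}~\ref{lemma: eigen_L}, so that the $d$-dimensional eigenvalues of $\mathcal{L}^{n_t}$ (which are products/powers of sums coming from the tensor structure in \AD{Eq.}~\eqref{eqn: matrix_M}) can be assembled on demand; this precomputation costs $O(n_x(\log n_t+\log(1/\epsilon_c)))$ and the table is stored in a QROM addressable in superposition. Second, prepare $|p_0\rangle$ in $O(d\log n_x)$ time as in the proof of \AD{Theorem}~\ref{thm: QLEM}. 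Third, apply the $d$-fold tensor QFT in $O(d\log\log n_x)$ following \cite{Musk2020AComputations}. Fourth, implement $\Lambda^{n_t}$ as a block encoding: coherently look up the $d$ per-axis contributions, add them on an ancilla register, take the $n_t$th power to the requisite precision, and apply a controlled rotation whose amplitude is the resulting $l_{j_1}\cdots$ entry; uncompute the ancillary register. Each such call costs $O(d+\log(1/\epsilon_c))$ gates given the precomputed table. Fifth, apply the inverse QFT. Sixth, amplify: the amplitude of the ``success'' branch is $\|\mathcal{L}^{n_t}|p_0\rangle\|_2\geq (4\sqrt{n_t})^{-d/2}$ by \AD{Lemma}~\ref{lem: bound L norm}, so $O(n_t^{d/4})$ rounds of fixed-point amplitude amplification suffice to produce $|\tilde{p}_{n_t}\rangle$ with $\Theta(1)$ probability. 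Seventh, wrap this state-preparation unitary $U_p$ inside \AD{Theorem}~\ref{thm: q_meas} with $q=O(d\log n_x)$ and error $\delta=0.01$ to obtain $\tilde{\tilde{\mathbf{p}}}_{n_t}$ with $\|\tilde{\tilde{\mathbf{p}}}_{n_t}-\tilde{\mathbf{p}}_{n_t}\|_\infty\leq\epsilon_q$.

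The main obstacle is the non-unitarity of $\Lambda^{n_t}$ and the resulting need for amplitude amplification. A naive diagonalization gives no contractive damping beyond the $1/(4\sqrt{n_t})^{d/2}$ lower bound from \AD{Lemma}~\ref{lem: bound L norm}, and this cost is unavoidable: it produces the factor $n_t^{d/4}$ that dominates the stated complexity. A related subtlety is choosing the eigenvalue precision so that the block-encoded operator approximates $\Lambda^{n_t}$ well enough that the amplified state is still $\epsilon_c$-close to $|\tilde{p}_{n_t}\rangle$ in 2-norm; by the argument in the proof of \AD{Theorem}~\ref{thm: Classical_FFT} it suffices to take the per-eigenvalue precision $O(\epsilon_c/n_t)$, which is absorbed by the $\log(1/\epsilon_c)$ factors above.

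Finally I would assemble the costs. Per application of $U_p$ the cost is $O(n_t^{d/4}(d+\log(1/\epsilon_c)))$ from the amplified block encoding, plus the one-shot $O(n_x(\log n_t+\log(1/\epsilon_c)))$ for the eigenvalue precomputation. Multiplying by the $\tilde{O}(1/\epsilon_q)$ calls and $O(d\log n_x)$ factor from \AD{Theorem}~\ref{thm: q_meas} yields the time in the theorem statement. Substituting the expressions for $n_t$ and $n_x$ from \AD{Corollary}~\ref{cor:dxdt bounds} and suppressing logarithmic factors gives the advertised $\tilde{O}(d^{d/2+2}T^{d/2}\zeta^{d/4}(aL+D)^{d/2}/(\epsilon_q\epsilon_c^{d/4}))$. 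The space complexity follows from the $O(d\log n_x)$ qubits of the main register and the $\tilde{O}(1/\epsilon_q)$ overhead from \AD{Theorem}~\ref{thm: q_meas}, matching the $\tilde{O}(d/\epsilon_q)$ quoted in Sec.~\ref{ssec: results_sum}.
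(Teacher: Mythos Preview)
Your proposal is correct and follows essentially the same approach as the paper: prepare $|p_0\rangle$, apply the tensor QFT, realize $\Lambda^{n_t}$ via an ancilla-controlled rotation (what you call a block encoding), amplitude-amplify using the $\|\mathcal{L}^{n_t}|p_0\rangle\|_2^{-1}\le(4\sqrt{n_t})^{d/2}$ bound from \AD{Lemma}~\ref{lem: bound L norm}, invert the QFT, and wrap in \AD{Theorem}~\ref{thm: q_meas}. The only slip is that the approximate QFT costs $O(d\log n_x\log\log n_x)$ rather than $O(d\log\log n_x)$, but this is subleading and does not affect the stated complexity.
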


\begin{proof}
    We begin by preparing the state $|p_0\rangle$, achievable in time $O(d\log n_x)$ (as shown in proof of \AD{Theorem} \ref{thm: QLEM}). Then, we apply an approximate QFT in time $O(d\log n_x \log\log n_x)$ time, yielding the intermediate state $|\psi_0\rangle$ \cite{Musk2020AComputations}.
    Next, as in \AD{Theorem} \ref{thm: Classical_FFT} we apply $\Lambda^{n_t}$ to $|\psi_0\rangle$, where $\Lambda$ is the diagonal matrix corresponding to the eigenvalues of $\mathcal{L}$. The final step to this process is applying the inverse QFT to produce $|\Tilde{p}_{n_t}\rangle$. 
    
  From \AD{Eq.}~\eqref{eqn: eigen_L_full}, the eigenvalues $l_j$ of $\mathcal{L}$ correspond to strings $j = j_1,..., j_d$, where $j_1,...j_d \in {0, ... ,n_x-1}$.
  We can expand~\cite{Linden2022QuantumEquation} 
    \begin{equation}
        |\psi_0\rangle = \sum_{j_1,...,j_d = 0}^{n_x-1} \psi_{j_1,...,j_d}|j_1,...,j_d\rangle.
    \end{equation}
    Then we can apply $\Lambda^{n_t}$ by using an ancilla qubit, performing the map
    \begin{equation}
        |\psi_0\rangle |0\rangle \mapsto \sum_{j_1,...,j_d = 0}^{n_x-1} \psi_{j_1,...,j_d}|j_1,...,j_d\rangle \bigg(l_j^{n_t} |0\rangle + \sqrt{1-l_j^{2n_t}} |0^{\perp}\rangle\bigg)
        \label{eqn: map}
    \end{equation}
    and measuring the ancilla qubit. If, on measurement of the ancilla the outcome is 0, then the state $|\psi_{n_t}\rangle$ is as required.
    Given the classical description of $l_j^{n_t}$ for each $j$, the map in \AD{Eq.} (\ref{eqn: map}) on the ancilla qubit can be achieved up to accuracy $O(\delta)$. This uses $O(d + \log(1/\delta))$ gates and a few additional ancilla qubit which can be reset on completion~\cite{Sanders2018Black-boxArithmetic}.

    The probability that measuring the ancilla qubit provides the required state is $||\mathcal{L}^{n_t}|p_0\rangle||_2^2$.
    We use amplitude amplification, where, $O(||\mathcal{L}^{n_t}|p_0\rangle||_2^{-1})$ repetitions are required to produce the desired state with probability $0.99$. 
    Finally, we apply the inverse QFT to the residual state to produce $|\Tilde{p}_{n_t}\rangle = \mathcal{L}^{n_t}|p_0\rangle / ||\mathcal{L}^{n_t}|p_0\rangle||_2$. 

    To achieve the application of $\Lambda^{n_t}$ we require the eigenvalues of $\mathcal{L}^{n_t}$. We can compute these eigenvalues up to accuracy $\delta$ in time $O(d)$, by using Lemma \ref{lemma: eigen_L}. This is much more efficient than in the classical case in \AD{Theorem}~\ref{thm: Classical_FFT} since all of the eigenvalues can be applied simultaneously in a superposition, with a one off \AD{pre}computation cost of $O(n_x\log(n_t/\delta))$.
    Therefore, we have shown the overall cost of producing the state $|\Tilde{p}_{n_t}\rangle$ is 
    \begin{equation}
        O\Big(d\log n_x + d\log n_x\log\log n_x + ||\mathcal{L}^{n_t}|p_0\rangle||_2^{-1}\big(d + d\log (1/\delta)\big) + n_x\log(n_t/\delta)\Big).
    \end{equation}
    As in the classical case, we take $\delta = \epsilon_c$
    and we also know that $||\mathcal{L}^{n_t}|p_0\rangle||_2^{-1} \leq (4\sqrt{n_t})^{d/2}$ using Lemma \ref{lem: bound L norm}. When implemented this results in
    \begin{equation}
        O\Big(d\log n_x + d\log n_x\log\log n_x + (4\sqrt{n_t})^{d/2}\big(d + \log(1/\epsilon_c)\big)
        + n_x\log n_t + n_x\log(1/\epsilon_c)\Big).
    \end{equation}
    Upon inspection of Corollary~\ref{cor:dxdt bounds} it is clear that 
        \begin{equation}
            n_t = \frac{n_x^2 TdL^2}{2D}.
        \end{equation}
    Therefore, $dn_t^{d/4}>d\log n_x \log\log n_x$.
    Taking this into account, the overall complexity for the algorithm that produces the final state $|\Tilde{p}_{n_t}\rangle$ is
    \begin{equation}
        O\Big(n_t^{d/4}\big(d + log(1/\epsilon_c)\big)
        + n_x\big(\log n_t + log(1/\epsilon_c)\big)\Big).
    \end{equation}
    
    As for the previous methods we use \AD{Theorem} \ref{thm: q_meas} to extract the probability distribution $\Tilde{\Tilde{\mathbf{p}}}_{n_t}$ from the state $|\Tilde{p}_{n_t}\rangle$. To achieve this with error probability $\delta$ requires repeating the steps above $O(\log(n_x^d/\delta) \log(1/(\epsilon_q\delta))/\epsilon_q)$.  This uses an additional $O\big(1/\epsilon_q\big(\log (1/(\epsilon_q\delta))(q + \log(1/\epsilon_q))\big)\big)$ gates. Which results in an overall complexity for the final distribution
    \begin{multline}            O\bigg(\log(n_x^d/\delta)\log(1/(\epsilon_q\delta))/\epsilon_q \Big(n_t^{d/4}\big(d + \log(1/\epsilon_c)\big)
        + n_x\big(\log n_t + \log(1/\epsilon_c)\big)\Big)\\
        + \frac{1}{\epsilon_q}\big(\log (1/(\epsilon_q\delta))(q + \log(1/\epsilon_q))\big)\bigg)\AD{.}
    \end{multline}
     In this case $q = O(d\log n_x)$ and $\delta = 0.01$ which gives
    \begin{multline}
        O\bigg(\frac{d\log(n_x)\log(1/\epsilon_q)}{\epsilon_q} \Big(n_t^{d/4}\big(d + \log(1/\epsilon_c)\big)
        + n_x\big(\log n_t + \log(1/\epsilon_c)\big)\Big) + \frac{1}{\epsilon_q}\big(\log (1/(\epsilon_q))(d\log n_x + \log(1/\epsilon_q))\big)\bigg).
    \end{multline}
    Then we can reduce this to 
    \begin{equation}
        O\bigg(\frac{d\log(n_x)\log(1/\epsilon_q)}{\epsilon_q} \Big(n_t^{d/4}\big(d + \log(1/\epsilon_c)\big)
        + n_x\big(\log n_t + \log(1/\epsilon_c)\big)\Big) \bigg)\AD{,}
    \end{equation}
    as the gate complexity of measurement is negligible compared to the larger first term.      
    Finally, we can use Corollary \ref{cor:dxdt bounds}, to substitute in the values for $n_x$ and $n_t$ and write the overall complexity as
    \begin{multline}
         O\bigg(\frac{d}{\epsilon_q}\log\big(1/\epsilon_q\big)\log\Big(\frac{Td\zeta L^2 (aL + D)^2}{ \epsilon_c D}\Big) 
         \Big(\Big(\frac{T^2d^2\zeta (aL + D)^2}{ \epsilon_c }\Big)^{d/4}\big(d + \log(1/\epsilon_c)\big)\\
         + \sqrt{\frac{Td\zeta L^2(aL + D)^2}{ \epsilon_c D}}\Big(\log\Big(\frac{T^2d^2\zeta (aL + D)^2}{ \epsilon_c }\Big) + \log(1/\epsilon_c)\Big)\Big).
    \end{multline}
    Suppressing the logarithmic terms we get
    \begin{equation}
        \Tilde{O}\bigg(\frac{d}{\epsilon_q}\Big(d n_t^{d/4}+n_x\Big) \bigg).
    \end{equation}
    For $d>1$, $d n_t^{d/4} > n_x$. Since this case is most likely for problems where quantum computational advantage is being sought, we choose to drop the final term for clarity. Therefore, our final complexity is
    \begin{equation}
        \Tilde{O}\bigg(\frac{d^2n_t^{d/4}}{\epsilon_q} \bigg) = \Tilde{O}\bigg( \frac{d^{(d/2+2)} T^{d/2}\zeta^{d/4} (aL + D)^{d/2}}{ \epsilon_q\epsilon_c^{d/4}}\bigg).
    \end{equation}
    
    As in the previous cases, we consider the spatial complexity during the measurement step in \AD{Theorem} \ref{thm: q_meas} since $q$ here has the same order of complexity as in the previous method the spatial complexity is also the same 
    \begin{equation}
        O\bigg((1+1/\epsilon_q)\Big(d\log\Big(\frac{Td\zeta L^2(aL+D)^2}{\epsilon_c D}\Big)\Big) + 1/\epsilon_q \log(1/\epsilon_q)\bigg) = \Tilde{O}(d/\epsilon_q).
    \end{equation}
    \end{proof}
In the \AD{diagonalization} by QFT algorithm above, the dominant computational cost arises from two instances of amplitude amplification. The first is used in the measurement process, and the second is involved in constructing the diagonal matrix of eigenvalues. An additional, one-time cost is incurred in \AD{pre}computing the eigenvalues of $\mathcal{L}$, \AD{while} in the classical algorithm described in \AD{Theorem} \ref{thm: Classical_FFT}, the FFT is the primary determinant of computational complexity. We have visually represented the difference in complexity between the quantum and classical approaches in the flowcharts in Fig. \ref{fig: FFT_flow} and Fig. \ref{fig: quantum_FT}.

\AD{Theorem} \ref{thm: QFT} offers the most efficient solution method for \AD{Eq.}~\eqref{eqn: DDE}, this is ensured by bounding $\epsilon_q$. The bound we require ensures that the improvement in efficiency gained in all variables by using \AD{Theorem} \ref{thm: QFT} over \AD{Theorem} \ref{thm: Classical_FFT} outweighs the new dependency introduced by $\epsilon_q$. This must be achieved while maintaining the overall error $\epsilon$. Therefore, quantum computational advantage occurs when 
\begin{equation}
    \frac{1}{\epsilon_q}\leq \Tilde{O}\left(\frac{\zeta^{d/4}L^d(aL+D)^{d/2}}{d \epsilon_c^{d/4}D^{d/2}}\right).
\end{equation}
This method could be extended to all linear PDEs, no matter their order as the matrix will always be circulant and therefore can be \AD{diagonalize}d. This was recently discussed \AD{in Ref.} \cite{Lubasch2025QuantumSpace}. There would need to be further investigation to apply this method to \AD{nonlinear} PDEs with spatially varying parameters. 

\section{Conclusion}
\label{sec: Conclusion}

In this paper we demonstrated a quantum computational advantage for solving the DDE in \AD{Eq.} (\ref{eqn: DDE}).
The magnitude of this advantage depends on the parameter of the problem.
Specifically, for solving the DDE at a fixed time $T$, QFT-based \AD{diagonalization} proved most efficient overall (as discussed Sec. \ref{ssec: QFT}). 
It is an open question beyond the scope of this paper as to whether this could be used for PDEs with spatially varying drift and diffusion terms.

Our algorithm uses \AD{multidimensional} amplitude estimation for probability oracles to extract the probability distribution from the quantum computer, extending previous work that limited these solution methods to expectation value extraction. This introduced a new variable, $\epsilon_q$. 
Quantum advantage is achieved when the complexity is lower whilst maintaining the same total accuracy $\epsilon=\epsilon_c + \epsilon_q$. This requires that the growth in complexity due to $1/\epsilon_q$, is less than the reduction in complexity in the remaining terms of the quantum algorithm when compared to the classical algorithm. Maintaining the same total $\epsilon$ also requires a reduction in $\epsilon_c$ between the classical case and the quantum case to account for the increased $\epsilon_q$ (where $\epsilon_q = 0$ in each of the classical algorithms).

We also extended the quantum linear system, quantum \AD{random-walk}, and QFT methods to solve high-dimensional linear DDEs. 
It revealed that each quantum algorithm outperforms its classical counterpart in terms of time efficiency. 
The quantum \AD{random-walk} method was almost as efficient as quantum diagonalization and is applicable to stochastic PDEs, which offers an opportunity for solving a \AD{nonlinear} DDE. Alternatively, for obtaining solutions at all points in space and time, we proved the quantum linear equations method most efficient. Quantifying the efficiency gains of these extensions remains an open question.

\section*{Acknowledgements}
ED thanks Noah Linden for insights, David Snelling and Sharmila Balamurugan for encouragement and advice and Fujitsu UK Ltd for funding. \\
\\


\newpage
\appendix
\section{Ingredients Proofs}
\label{app: strategy}
In this appendix we provide the proofs of \AD{Theorem} \ref{thm: approx linf}.
\setcounter{theorem}{0}
\begin{theorem}[(Restated). Approximation up to $\infty$-norm error] If $\Delta t \leq  \frac{\Delta x^2}{2dD}$, \AD{then}
    \begin{equation}
        ||\Tilde{\mathbf{p}}_{n_t} - \mathbf{p}_{n_t}||_\infty \leq n_t \frac{d \Delta t \zeta}{2L^d}\bigg( d\Delta t (aL+D)^2 + \frac{\Delta x^2}{3} \Big(a L + \frac{D}{2}\Big) \bigg) = T \frac{d \zeta}{2L^d}\bigg( d\Delta t (aL+D)^2 + \frac{\Delta x^2}{3} \Big(a L + \frac{D}{2}\Big) \bigg).
    \end{equation}
\end{theorem}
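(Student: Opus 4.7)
The plan is a classical Lax--Richtmyer-style argument: show that $\mathcal{L}$ is a contraction in the $\infty$-norm, bound the one-step local truncation error using the Taylor remainders already tabulated in Eqs.~\eqref{eqn: pt disc}--\eqref{eqn: pxx disc}, and iterate. First I would verify that under the hypothesis $\Delta t \le \Delta x^2/(2dD)$ the operator $\mathcal{L}$ from Eq.~\eqref{eqn: diff operator} is a (sub)stochastic matrix. Reading off the coefficients, the diagonal entry is $1 - 2dD\Delta t/\Delta x^2$, which is nonnegative precisely under the assumed CFL-type condition; the off-diagonal entries $\Delta t(D/\Delta x^2 \pm a/(2\Delta x))$ are nonnegative provided $a\Delta x \le 2D$ (the regime considered throughout the paper). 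A direct computation shows each row of $\mathcal{L}$ sums to $1$, so $\lVert \mathcal{L}\rVert_\infty \le 1$.

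Next I would quantify the local truncation error $\boldsymbol{\tau}_k := \mathbf{p}_{k+1} - \mathcal{L}\mathbf{p}_k$ where $\mathbf{p}_k$ denotes the exact solution sampled on the grid at time $k\Delta t$. Multiplying Eq.~\eqref{eqn: pt disc} by $\Delta t$, substituting the DDE for $\partial p/\partial t$, and replacing each $\partial p/\partial x_j$ and $\partial^2 p/\partial x_j^2$ by their centered-difference approximations via Eqs.~\eqref{eqn: px disc}--\eqref{eqn: pxx disc}, the residual decomposes into a time term of order $\Delta t^2 \max|\partial_t^2 p|$ and spatial terms of order $\Delta t\,\Delta x^2 \max|\partial_{x_j}^3 p|$ and $\Delta t\,\Delta x^2 \max|\partial_{x_j}^4 p|$, summed over $j=1,\dots,d$. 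Using the smoothness bounds~\eqref{eqn: smoothness_bound} with repeated indices gives $\max|\partial_{x_j}^3 p| \le \zeta L$ and $\max|\partial_{x_j}^4 p| \le \zeta$, producing a spatial contribution of the form $\tfrac{d\Delta t\,\Delta x^2\zeta}{12}(2aL+D) = \tfrac{d\Delta t\,\Delta x^2\zeta}{6}(aL+D/2)$ after a small algebraic rearrangement.

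The key preparatory step is bounding $\max|\partial_t^2 p|$. I would differentiate the DDE in $t$ and substitute it into itself, writing $\partial_t^2 p = (\sum_j \mathcal{D}_j)^2 p$ with $\mathcal{D}_j = a\partial_{x_j} + D\partial_{x_j}^2$. Expanding the square gives a double sum over $j,k$ of terms $a^2\partial^2_{x_j x_k}p$, $aD\partial^3_{x_j x_k^2}p$ (twice), and $D^2\partial^4_{x_j^2 x_k^2}p$; the smoothness hierarchy in Eq.~\eqref{eqn: smoothness_bound} then yields $\max|\partial_t^2 p| \le d^2 \zeta (aL+D)^2$, exactly as quoted in the main text. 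Combining this with the spatial contribution gives the per-step bound $\lVert\boldsymbol{\tau}_k\rVert_\infty \le \tfrac{d\Delta t\,\zeta}{2}\bigl(d\Delta t(aL+D)^2 + \tfrac{\Delta x^2}{3}(aL+D/2)\bigr)$.

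Finally I would close the argument by induction. Since $\mathbf{e}_{k+1} = \mathcal{L}\mathbf{e}_k + \boldsymbol{\tau}_k$ with $\mathbf{e}_0 = 0$ (the initial data is exact) and $\lVert\mathcal{L}\rVert_\infty \le 1$, the triangle inequality gives $\lVert\mathbf{e}_{n_t}\rVert_\infty \le \sum_{k=0}^{n_t-1}\lVert\boldsymbol{\tau}_k\rVert_\infty \le n_t \max_k \lVert\boldsymbol{\tau}_k\rVert_\infty$, which is exactly the stated bound (noting $n_t\Delta t = T$). I expect the main technical obstacle to be the bound on $\max|\partial_t^2 p|$: the time derivatives are not assumed smooth directly, and one has to carefully apply the DDE twice and track which mixed spatial derivatives appear so that the smoothness hierarchy in Eq.~\eqref{eqn: smoothness_bound} can be applied with the correct powers of $L$. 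Verifying substochasticity of $\mathcal{L}$ is routine once the CFL condition is in hand, and the error propagation step is then automatic.
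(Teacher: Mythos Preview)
Your proposal is correct and follows essentially the same approach as the paper's proof: verify that $\mathcal{L}$ has nonnegative entries and unit row sums under the CFL condition (so $\lVert\mathcal{L}\rVert_\infty\le 1$), bound the one-step truncation error via the Taylor remainders in Eqs.~\eqref{eqn: pt disc}--\eqref{eqn: pxx disc} together with the smoothness bounds~\eqref{eqn: smoothness_bound} and the derived estimate $\max|\partial_t^2 p|\le d^2\zeta(aL+D)^2$, and then propagate the error by induction via $\mathbf{e}_{k+1}=\mathcal{L}\mathbf{e}_k+\boldsymbol{\tau}_k$. If anything, you are slightly more careful than the paper in noting that off-diagonal nonnegativity also requires $a\Delta x\le 2D$, which the paper's proof leaves implicit.
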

\begin{proof}
    Using \AD{Eq.} (\ref{eqn: disc OU}) and introducing a new iteration variable, $k \in [0, n_t]$, where $\Tilde{\mathbf{p}}_k$ and $\mathbf{p}_k$ denote the approximate and exact solutions respectively at time $t = k$. Therefore, $\Tilde{\mathbf{p}}_{k+1} = \mathcal{L}\Tilde{\mathbf{p}}_k$ and so $\mathcal{L}$ is stochastic if
    \begin{equation}
        1 - \frac{2dD\Delta t}{\Delta x^2} \geq 0 \mathrm{\quad i.e. \quad} \Delta t \leq \frac{\Delta x^2}{2 dD}\AD{,}
    \end{equation}
    which holds by assumption. Since a stochastic operator must be norm preserving \cite{Voigt1981StochasticEntropy}. Then     
    \begin{multline}
        \bigg|\frac{\Tilde{p}(\mathbf{x}, t + \Delta t) - \tilde{p}(\mathbf{x}, t)}{\Delta t} - \sum_{j = 1}^d \bigg [ \frac{a }{2\Delta x}(\Tilde{p}(...,x_{j[i]} + \Delta x,...,t) - \Tilde{p}(...,x_{j[i]} - \Delta x,...,t)) + \frac{D}{\Delta x^2}(\Tilde{p}(...,x_{j[i]} + \Delta x,...,t)\\
    + \Tilde{p}(...,x_{j[i]} - \Delta x,...,t) - 2\Tilde{p}(\mathbf{x},t))
    \bigg] \bigg|\leq \frac{\Delta t}{2} \max\Bigg|\frac{\partial^2 p}{\partial t^2}\Bigg| + \frac{d a\Delta x^2}{6} \max\Bigg|\frac{\partial^3 p}{\partial x^3}\Bigg| + \frac{ d D \Delta x^2}{12}\max\Bigg|\frac{\partial^4 p}{\partial x^4}\Bigg| 
    \end{multline}
    substituting in the assumptions on the smoothness bounds from Eq.~\eqref{eqn: smoothness_bound} gives
    \begin{equation}
        ||\mathbf{p}_{k+1} - \mathcal{L}\mathbf{p}_k||_{\infty} \leq \Delta t\bigg(\frac{\Delta t}{2} \bigg[ d^2 \zeta(a^2L^2 + 2 a DL + D^2) \bigg] + \frac{d a L \zeta \Delta x^2}{6}  + \frac{d D \zeta \Delta x^2}{12}\bigg)
    \end{equation}
    and simplifying this leads to
    \begin{equation}        
        ||\mathbf{p}_{k+1} - \mathcal{L}\mathbf{p}_k||_{\infty} \leq \frac{d \Delta t \zeta }{2}\bigg(d \Delta t (aL + D)^2 + \frac{\Delta x^2}{3} \Big(a L + \frac{D}{2}\Big)\bigg).
    \end{equation} 
    For clarity this has been simplified by writing $A = (aL + D)^2 $ and $B = a L + \frac{D}{2}$ such that
    \begin{equation}        
        ||\mathbf{p}_{k+1} -\mathcal{L}\mathbf{p}_k||_{\infty} \leq \frac{d \Delta t \zeta}{2}\bigg( d\Delta t A + \frac{\Delta x^2}{3} B \bigg).
    \end{equation} 
    
    Next, we consider the errors introduced by this approach, writing $\Tilde{\mathbf{p}}_k = \mathbf{p}_k + \mathbf{e}_k$ where $\mathbf{e}_k$ is an error vector, then
    \begin{equation}
        \Tilde{\mathbf{p}}_0 = \mathbf{p}_0
    \end{equation}
    and
    \begin{equation}
        \Tilde{\mathbf{p}}_1 =\mathcal{L}\mathbf{p}_0 = \mathbf{p}_1 + \mathbf{e}_1\AD{,}
    \end{equation}
    where $||\mathbf{e}_1||_\infty \leq \frac{d \Delta t \zeta}{2}\bigg( d\Delta t A + \frac{\Delta x^2}{3} B \bigg)$
    \begin{equation}
        \mathbf{\Tilde{p}}_2 =\mathcal{L}\mathbf{\Tilde{p}}_1=\mathcal{L}(\mathbf{p}_1 + \mathbf{e}_1) = \mathbf{p}_2 + \mathbf{e}_2 +\mathcal{L}\mathbf{e}_1\AD{,}
    \end{equation}
    where $||\mathbf{e}_2||_\infty \leq \frac{d \Delta t \zeta}{2}\bigg( d\Delta t A + \frac{\Delta x^2}{3} B \bigg)$ as $\mathcal{L}$ is stochastic, $||\mathcal{L}\mathbf{e}_1||_\infty \leq ||\mathbf{e}_1||_\infty$, so, $||\mathbf{\Tilde{p}}_2 - \mathbf{p}_2||_\infty \leq d \Delta t \zeta\bigg( d\Delta t A + \frac{\Delta x^2}{3} B \bigg)$. Following this argument 
    \begin{equation}
        ||\mathbf{\Tilde{p}}_{n_t} - \mathbf{p}_{n_t}||_\infty \leq n_t \frac{d \Delta t \zeta}{2}\bigg( d\Delta t A + \frac{\Delta x^2}{3} B \bigg) = T \frac{d \zeta}{2}\bigg( d\Delta t A + \frac{\Delta x^2}{3} B \bigg)
    \end{equation}
    as claimed.
\end{proof}
\section{Component eigenvalues}
\label{app: eigenvalues}
This appendix covers the proofs of the Lemmas \ref{lem: eigen_M} and \ref{lem: eigen_AAdag} that are required in the classical technical ingredients section for finding the condition number of $\mathcal{A}$.
\setcounter{theorem}{2}

\begin{lemma}[Restated]
    The eigenvalues of $\mathcal{M}$ are $\{\mu_{j_1}+ ... + \mu_{j_d}: j_1,...,j_d \in \{0,1,...,n_x-1\}\}$, where
    \begin{equation}
        \mu_j = -4\frac{D}{\Delta x^2} \sin^2\left(\frac{\pi j}{n_x}\right) + \frac{a \AD{i}}{\Delta x} \sin\left(\frac{2\pi j}{n_x}\right)
    \end{equation}
    and $\AD{i}$ denotes the imaginary unit. Moreover, $\mathcal{M}$ is \AD{diagonalize}d by the $d$\AD{th} tensor product of the Fourier transform.
\end{lemma}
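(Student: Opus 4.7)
The plan is to first compute the eigenvalues of the one-dimensional operator $\mathcal{H}$ by recognizing it as a circulant matrix, and then lift the result to $\mathcal{M}$ using standard facts about tensor sums.

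First I would observe from \AD{Eq.}~\eqref{eqn: matrix_H} that $\mathcal{H}$ is circulant: each row is a cyclic shift of the previous one, with generating row
\begin{equation}
c_0 = -\tfrac{2D}{\Delta x^2}, \quad c_1 = \tfrac{D}{\Delta x^2}+\tfrac{a}{2\Delta x}, \quad c_{n_x-1} = \tfrac{D}{\Delta x^2}-\tfrac{a}{2\Delta x},
\end{equation}
and all other entries zero. It is standard that every $n_x\times n_x$ circulant matrix is diagonalized by the discrete Fourier transform $\mathcal{F}$, with eigenvalues $\mu_j = \sum_{k=0}^{n_x-1} c_k\, e^{2\pi \AD{i} jk/n_x}$ for $j=0,\ldots,n_x-1$. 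Substituting the three nonzero entries and combining the $e^{\pm 2\pi \AD{i} j/n_x}$ terms via $2(1-\cos\theta)=4\sin^2(\theta/2)$ and $e^{\AD{i}\theta}-e^{-\AD{i}\theta}=2\AD{i}\sin\theta$ gives
\begin{equation}
\mu_j = -\frac{4D}{\Delta x^2}\sin^2\!\left(\frac{\pi j}{n_x}\right) + \AD{i}\,\frac{a}{\Delta x}\sin\!\left(\frac{2\pi j}{n_x}\right),
\end{equation}
as claimed, and in matrix form $\mathcal{H} = \mathcal{F}^{\dagger}\, \mathrm{diag}(\mu_0,\ldots,\mu_{n_x-1})\, \mathcal{F}$.

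Next I would lift to $\mathcal{M}$ using the definition \AD{Eq.}~\eqref{eqn: matrix_M}. Conjugating each summand by $\mathcal{F}^{\otimes d}$ and using $\mathcal{F}\mathcal{I}_{n_x}\mathcal{F}^{\dagger}=\mathcal{I}_{n_x}$ gives
\begin{equation}
(\mathcal{F}^{\otimes d})\,\mathcal{M}\,(\mathcal{F}^{\otimes d})^{\dagger} = \sum_{k=1}^{d} \mathcal{I}_{n_x}^{\otimes(k-1)} \otimes \mathrm{diag}(\mu_0,\ldots,\mu_{n_x-1}) \otimes \mathcal{I}_{n_x}^{\otimes(d-k)},
\end{equation}
which is manifestly diagonal. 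Reading off the entry indexed by the multi-index $(j_1,\ldots,j_d)$ yields the eigenvalue $\mu_{j_1}+\cdots+\mu_{j_d}$ with eigenvector $|j_1\rangle\otimes\cdots\otimes|j_d\rangle$ (in the Fourier basis). Since the multi-indices range over $\{0,\ldots,n_x-1\}^d$, this exhausts all $n_x^d$ eigenvalues, and simultaneously establishes that $\mathcal{M}$ is diagonalized by $\mathcal{F}^{\otimes d}$.

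There is no real obstacle beyond bookkeeping: the only thing to be careful about is the sign convention for $\mathcal{F}$ (so that the off-diagonal entries of $\mathcal{H}$ pick up the correct phases $e^{\pm 2\pi \AD{i} j/n_x}$), and the standard tensor-sum identity that eigenvalues of $A\otimes I + I\otimes B$ are sums of eigenvalues of $A$ and $B$ applied inductively in $d$.
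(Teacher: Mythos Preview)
Your proposal is correct and follows essentially the same approach as the paper: recognize the one-dimensional operator as circulant, diagonalize it by the discrete Fourier transform, and read off the eigenvalues. Your labeling of $c_1$ and $c_{n_x-1}$ is swapped relative to the paper's convention, but you also use the opposite sign in the exponent of the DFT eigenvalue formula, so the two swaps cancel and you land on the same $\mu_j$. If anything, your treatment of the $d$-dimensional lift is more explicit than the paper's: the paper's proof works entirely in the one-dimensional setting (effectively computing the spectrum of $\mathcal{H}$) and leaves the tensor-sum structure of $\mathcal{M}$ and its diagonalization by $\mathcal{F}^{\otimes d}$ implicit, whereas you spell out the conjugation $(\mathcal{F}^{\otimes d})\mathcal{M}(\mathcal{F}^{\otimes d})^{\dagger}$ and the resulting additive eigenvalue structure directly.
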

\begin{proof}
    Since matrix $\mathcal{M}$ is circulant then it can be \AD{diagonalize}d by the Fourier \AD{t}ransform $\mathcal{F}$.\\
    If $\mathcal{D}_{\mathcal{M}} = \mathrm{diag}\{\mu_0, \mu_1, ..., \mu_{n_x-1}\}$
    is the diagonal matrix that stores the eigenvalues of $\mathcal{M}$, 
    then $\mathcal{D}_{\mathcal{M}} \, \mathcal{F}^{\dag} = \mathcal{F}^{\dag} \, \mathcal{M}$. 
    A circulant matrix has the structure 
    \begin{equation}
        \begin{pmatrix}
            c_0 & c_{n_x - 1} & ... & c_2 & c_1 \\
            c_1 & c_0 & c_{n_x - 1 } & & c_2 \\
            \vdots & c_1 & c_0 & \ddots & \vdots \\
            c_{n_x - 2} & & \ddots & \ddots & c_{n_x - 1} \\
            c_{n_x - 1} & c_{n_x - 2} & ... & c_1 & c_0
        \end{pmatrix}. 
    \end{equation}    
    Therefore, in the case of matrix $\mathcal{M}$, $c_0 = -2\frac{D}{\Delta x^2}$, $c_1 = \frac{D}{\Delta x^2} - \frac{a}{2 \Delta x}$, $c_2 = ... = c_{n-2} = 0$ and $c_{n - 1} = \frac{D}{\Delta x^2} + \frac{a}{2\Delta x}$. Then $\mathcal{D}_{\mathcal{M}} \, \mathcal{F}^{\dag} |0\rangle = \mathcal{F}^{\dag} \, \mathcal{M} |0\rangle$ gives
    \begin{equation}
        \frac{1}{\sqrt{n}}
        \begin{pmatrix}
            \mu_0 \\ \mu_1 \\ \vdots \\ \mu_{n_x - 1}
        \end{pmatrix}
         = \mathcal{F}^{\dag}
         \begin{pmatrix}
            c_0 \\ c_1 \\ \vdots \\ c_{n_x - 1}
        \end{pmatrix}.
    \end{equation}
    If the $j$th eigenvector of the Fourier modes is $\omega_{n_x}^j = e^{\frac{2j\pi \AD{i}}{n_x}}$ then 
    \begin{equation}
        \mu_j = \sum_{k = 0}^{n_x-1} c_k\omega_{n_x}^{-jk} = -2\frac{D}{\Delta x^2} + \bigg(\frac{D}{\Delta x^2} -\frac{a}{2\Delta x}\bigg)\omega_{n_x}^{-j} + \bigg(\frac{D}{\Delta x^2} +\frac{a}{2\Delta x}\bigg)\omega_{n_x}^{-j(n_x - 1)}   
    \end{equation}
    \begin{equation}
         = -2\frac{D}{\Delta x^2} + \frac{2D}{\Delta x^2}\cos\left(\frac{2\pi j}{n_x}\right) + \frac{a \AD{i}}{\Delta x} \sin\left(\frac{2\pi j}{n_x}\right).
    \end{equation}
    \begin{equation}
         = -4\frac{D}{\Delta x^2} \sin^2\left(\frac{\pi j}{n_x}\right) + \frac{2 a \AD{i}}{\Delta x} \sin\left(\frac{\pi j}{n_x}\right)\cos\left(\frac{\pi j}{n_x}\right)
    \end{equation}
    \begin{equation}
         = 2\sin\left(\frac{\pi j}{n_x}\right) \bigg(\frac{a \AD{i}}{\Delta x}\cos\left(\frac{\pi j}{n_x}\right) -\frac{2D}{\Delta x^2} \sin\left(\frac{\pi j}{n_x}\right) \bigg)
    \end{equation}
    as stated.
\end{proof}
\begin{lemma}[Restated.]
    The eigenvalues of $\mathcal{A}_j\,\mathcal{A}_j^{\dag}$ take the form
    \begin{equation}
        \alpha_j = 1 + |l_j|^2 + 2|l_j|\cos \theta = \bigg( \frac{\sin \theta}{\sin (n_t \theta)}\bigg)^2\AD{,}
    \end{equation}
    where $\theta $ is defined by
    \begin{equation}
       |l_j| \sin (n_t \theta) + \sin((n_t+1)\theta) = 0
    \end{equation}
    and $\theta \neq k \pi$ for $k \in \mathbb{N}$.
\end{lemma}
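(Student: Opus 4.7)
The plan is to reduce the eigenvalue problem $\mathcal{A}_j\mathcal{A}_j^\dag v = \alpha v$ to a linear three-term recurrence on the eigenvector components $v_k$, strip off the phase of $l_j$ by a diagonal unitary conjugation, and then use the two non-generic rows of $\mathcal{A}_j\mathcal{A}_j^\dag$ to extract a quantization condition on $\theta$.

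Writing $l_j = |l_j|e^{i\phi}$ and substituting $v_k = e^{ik\phi} u_k$ cancels the phase in all three row types. The interior rows ($2\le k\le n_t-1$) become $|l_j|(u_{k-1}+u_{k+1}) = (1+|l_j|^2-\alpha)u_k$. Motivated by the form claimed in the lemma, I set $\alpha = 1+|l_j|^2+2|l_j|\cos\theta$, so the interior recurrence becomes the constant-coefficient equation $u_{k+1} + 2\cos\theta\,u_k + u_{k-1} = 0$, with characteristic roots $-e^{\pm i\theta}$ and general solution $u_k = A(-e^{i\theta})^k + B(-e^{-i\theta})^k$. The anomalous first row $(1-\alpha)u_1 = |l_j|u_2$, when compared with the interior recurrence formally extended to $k=1$, is equivalent to requiring the fictitious boundary value $u_0 = |l_j|u_1$. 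The anomalous last row collapses to $u_{n_t-1} = -2\cos\theta\,u_{n_t}$, which is the interior recurrence at $k = n_t$ under the convention $u_{n_t+1}=0$.

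Applying $u_{n_t+1}=0$ fixes $B/A$ and yields (up to normalization) $u_k \propto (-1)^k \sin((n_t+1-k)\theta)$. The remaining boundary $u_0 = |l_j|u_1$ then reads $\sin((n_t+1)\theta) + |l_j|\sin(n_t\theta) = 0$, which is the quantization condition stated in the lemma. The exclusion $\theta\neq k\pi$ discards the degenerate case in which both sines vanish simultaneously and the quantization equation carries no information. The delicate step is matching the two boundary rows against the otherwise translation-invariant recurrence; this is where the specific structure of $\mathcal{A}_j\mathcal{A}_j^\dag$ (diagonal entry $1$ at position $(1,1)$ rather than $1+|l_j|^2$) enters the proof, and I expect this bookkeeping with the fictitious $u_0$ to be the main obstacle.

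For the closed-form expression, I solve the quantization condition for $|l_j| = -\sin((n_t+1)\theta)/\sin(n_t\theta)$ and insert it into $\alpha = 1+|l_j|^2 + 2|l_j|\cos\theta$. Setting $a = \sin(n_t\theta)$ and $b = \sin((n_t+1)\theta)$ and using the addition formula $b = a\cos\theta + \cos(n_t\theta)\sin\theta$, the single-fraction numerator reduces as $a^2 + b^2 - 2ab\cos\theta = (b-a\cos\theta)^2 + a^2\sin^2\theta = \sin^2\theta$, so $\alpha = \sin^2\theta/\sin^2(n_t\theta)$, completing the verification.
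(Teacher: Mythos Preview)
Your proof is correct and follows essentially the same route as the paper: reduce the eigenvalue problem for the perturbed tridiagonal Toeplitz matrix $\mathcal{A}_j\mathcal{A}_j^\dag$ to a constant-coefficient three-term recurrence and extract the quantization condition from the two anomalous boundary rows. The paper simply identifies the matrix parameters ($a=-l_j$, $b=1+|l_j|^2$, $c=-l_j^\dag$, corner perturbations $\alpha=|l_j|^2$, $\beta=0$) and then defers the recurrence analysis to Yueh~(2005) and Losonczi~(1992), stating the result without reproducing the computation. Your version is more self-contained: the phase substitution $v_k=e^{ik\phi}u_k$ that reduces to a real recurrence, the fictitious boundary values $u_0=|l_j|u_1$ and $u_{n_t+1}=0$, and especially the explicit trigonometric verification of the closed form $\alpha=\sin^2\theta/\sin^2(n_t\theta)$ are all worked out in detail, whereas the paper leaves these implicit in the cited references and does not verify the closed form at all. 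Both approaches buy the same conclusion; yours is simply a direct, elementary instantiation of the machinery the paper invokes by citation.
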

\begin{proof}
The techniques in \AD{Refs.}\cite{Yueh2005EIGENVALUES, Losonczi1992EigenvaluesMatrices} can be used to find a set of eigenvalues $\lambda_k$ for $\mathcal{A}_j\,\mathcal{A}_j^{\dag}$. The starting matrix for this reference is
\begin{equation}
    \mathcal{A}_j\mathcal{A}_j^{\dag} = \begin{pmatrix}
        -\alpha + b & c & 0 & ... & 0 \\
        a & b  & c &  \\
        0 &  a & b  & \ddots & \vdots\\
        \vdots & & \ddots & \ddots & c \\
        0 &  & ... & a  & -\beta + b
    \end{pmatrix}\AD{.}
\end{equation}
In this case $\alpha = |l_j|^2$, $\beta = 0$, $a = -l_j$, $b = 1 + |l_j|^2$ and $c = -l_j^{\dag}$. Then the eigenvalue problem is such that it can be described by the following series of linear equations since $\mathcal{A}_j\,\mathcal{A}_j^{\dag} \mathbf{u} = \lambda \mathbf{u}$
\[u_0 = 0,    
\]
\[
a u_0 + b u_1 + c u_2 = \lambda u_1 + \alpha u_1,
\]
\[
a u_1 + b u_2 + c u_3 = \lambda u_2 + 0,
\]
\[
... = ...,
\]
\[
a u_{n_t-2} + b u_{n_t-1} + c u_{n_t} = \lambda u_{n_t-1} + 0,
\]
\[
a u_{n_t-1} + b u_{n_t} + c u_{n_t+1} = \lambda u_{n_t} + \beta u_{n_t},
\]
\[
u_{n_t+1} = 0,
\]
without repeating the work in \AD{Ref.} \cite{Yueh2005EIGENVALUES} it can be shown that in the case of the matrix $\mathcal{A}_j\,\mathcal{A}_j^{\dag}$ the eigenvalues are $\lambda = 1 + |l_j|^2 + 2|l_j|\cos\theta$ when the following expression is true
\begin{equation}
    -\frac{\sin (n_t + 1) \theta}{\sin (n_t\theta)} = |l_j| = \sqrt{1 - \frac{8 D \Delta t}{\Delta x^2}\sin^2\left(\frac{\pi j}{n_x}\right) + \Delta t^2 \bigg(\frac{16 D^2}{\Delta x^4}\sin^4\left(\frac{\pi j}{n_x}\right) + \frac{a^2}{\Delta x^2}\sin^2\left(\frac{2\pi j}{n_x}\right)\bigg)}.
    \label{eqn: theta_proof}
\end{equation}
and it can be stated that $\theta \neq k \pi$ for $k \in \mathbb{N}$ as the \AD{left-hand side} of \AD{Eq.}~\eqref{eqn: theta_proof} can only be real if $\theta \in \mathbb{R}$. 
\end{proof}

\section{Condition number}
\label{app: cond_num_A}
This appendix provides the proof of \AD{Theorem} \ref{thm: kappa}.
\setcounter{theorem}{4}
\begin{theorem}[Restated.]
    Condition number of $\mathcal{A}$ is
    \[ \kappa = 
        \begin{cases}
            \Theta(n_t) \ if\ \frac{a^2 T}{2n_tD} \leq 1, \\
            \Theta \bigg(\sqrt{n_t^2 + \frac{n_ta^2T}{D}}\bigg) \ if\ \frac{a^2 T}{2n_tD} > 1. 
        \end{cases}
    \]
    Furthermore, 
    \[||\mathcal{A}|| =
    \begin{cases}
        \Theta(1) \ if \ \frac{a^2 \Delta t}{2n_tD} \leq 1, \\
        \Theta\bigg(\sqrt{\frac{a^2 T}{n_tD}}\bigg) \ if\ \frac{a^2 T}{2n_tD} > 1
    \end{cases}
    \]
    and $||\mathcal{A}^{-1}|| = \Theta(n_t)$.
\end{theorem}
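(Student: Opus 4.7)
The plan leverages the block decomposition in \AD{Eq.}~\eqref{eqn: A transformation}: the singular-value spectrum of $\mathcal{A}$ is the union, over $j \in \{0,\dots,n_x-1\}$, of the singular values of the $n_t\times n_t$ blocks $\mathcal{A}_j = l_j \mathcal{T} - \Delta t \mu_j \mathcal{I}_{n_t}$, so Lemma~\ref{lem: eigen_AAdag} can be applied blockwise. Using the identity $\alpha_j(\theta) = 1 + |l_j|^2 + 2|l_j|\cos\theta = \bigl|e^{\AD{i}\theta}+|l_j|\bigr|^2$, the envelopes $(1-|l_j|)^2 \leq \alpha_j(\theta) \leq (1+|l_j|)^2$ reduce everything to two tasks: (i) controlling $\max_j|l_j|$, and (ii) locating the extremal admissible roots $\theta$ of $|l_j|\sin(n_t\theta) + \sin((n_t+1)\theta) = 0$. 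For (i), writing $s = \sin^2(\pi j/n_x)\in[0,1]$ and $\gamma = a^2\Delta t/(2D) = a^2T/(2n_tD)$ and substituting $\Delta t = \Delta x^2/(2dD)$ from Corollary~\ref{cor:dxdt bounds}, a direct computation from \AD{Eq.}~\eqref{eqn: eigen_M} gives
\begin{equation}
|l_j|^2 = \Bigl(1-\tfrac{2s}{d}\Bigr)^2 + \tfrac{4\gamma s(1-s)}{d},
\end{equation}
whence $|l_j|^2 \leq 1$ uniformly in $j$ when $\gamma\leq 1$ (with equality at $j=0$), and $\max_j|l_j|^2 = \Theta(\gamma) = \Theta(a^2T/(n_tD))$ when $\gamma>1$.

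For the smallest singular value, I first treat the block $j=0$, where $l_0=1$ and $\mathcal{A}_0 = \mathcal{T}$. The defining equation becomes $\sin(n_t\theta)+\sin((n_t+1)\theta)=0$, which a sum-to-product identity reduces to $2\cos(\theta/2)\sin((n_t+\tfrac{1}{2})\theta)=0$; the admissible roots are $\theta_k = 2k\pi/(2n_t+1)$, $k=1,\dots,n_t$, giving $\sigma_{\min}(\mathcal{A}_0) = 2\sin(\pi/(2(2n_t+1))) = \Theta(1/n_t)$. For the remaining blocks, the envelope $\sigma_{\min}(\mathcal{A}_j) \geq \bigl|1-|l_j|\bigr|$ handles $|l_j|$ bounded away from $1$, and a perturbative expansion of the roots near $\theta=\pi$ (starting from the solvable $|l_j|=1$ case) covers $|l_j|$ close to $1$; together these yield $\sigma_{\min}(\mathcal{A}_j) = \Omega(1/n_t)$ uniformly, so $\|\mathcal{A}^{-1}\| = \Theta(n_t)$ in both regimes.

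For the largest singular value in the regime $\gamma\leq 1$, the envelope gives $\sigma_{\max}(\mathcal{A}_j)\leq 1+|l_j|\leq 2$, while the $j=0$ block attains $2\cos(\pi/(2n_t+1)) = \Theta(1)$, so $\|\mathcal{A}\| = \Theta(1)$. In the regime $\gamma>1$, I would analyze the block $j_\star$ that maximizes $|l_{j_\star}|$: substituting $\theta = \phi/n_t$ into the characteristic equation yields $|l_{j_\star}|\sin\phi + \sin((1+1/n_t)\phi) = 0$, which admits a root $\phi_\star = O(1)$ with $\cos\theta_\star = 1-O(1/n_t^2)$, so $\sigma_{\max}(\mathcal{A}_{j_\star}) = (1+|l_{j_\star}|)(1-O(1/n_t)) = \Theta(\sqrt{a^2T/(n_tD)})$. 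Multiplying $\|\mathcal{A}\|\cdot\|\mathcal{A}^{-1}\|$ then gives $\kappa_\mathcal{A} = \Theta(n_t)$ in the first regime and $\kappa_\mathcal{A} = \Theta(\sqrt{n_t a^2T/D}) = \Theta(\sqrt{n_t^2 + n_t a^2T/D})$ in the second, since $\gamma>1$ is equivalent to $a^2T/D > 2n_t$.

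The principal obstacle I anticipate is the drift-dominated regime: certifying rigorously that when $|l_j|>1$ the transcendental equation admits a root $\theta_\star = O(1/n_t)$ saturating the upper envelope $(1+|l_j|)^2$ up to relative error $O(1/n_t^2)$, rather than only achieving some strictly smaller value. A secondary subtlety is the uniform lower bound $\sigma_{\min}(\mathcal{A}_j) = \Omega(1/n_t)$ for blocks with $|l_j|$ arbitrarily close to $1$, which I would handle by a continuity argument anchored at the solvable $|l_j|=1$ case.
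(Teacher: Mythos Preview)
Your approach is correct and shares the paper's high-level framework (the block decomposition of \AD{Eq.}~\eqref{eqn: A transformation} together with Lemma~\ref{lem: eigen_AAdag}), but differs in one key place. For $\sigma_{\min}$ you work with the first expression $\alpha_j = 1+|l_j|^2+2|l_j|\cos\theta$, treat the $j=0$ block exactly, and then need a separate envelope/continuity argument to cover the remaining blocks---this is your self-identified ``secondary subtlety.'' The paper instead uses the \emph{second} identity in Lemma~\ref{lem: eigen_AAdag}, $\alpha_j = (\sin\theta/\sin(n_t\theta))^2$, which is manifestly independent of $|l_j|$; bounding this ratio by $\sin\theta \geq 2\theta/\pi$ and $\sin(n_t\theta)\leq n_t\theta$ on $\theta\in(0,\pi/(2n_t)]$ gives $\sigma_{\min}\geq 2/(n_t\pi)$ uniformly in $j$ in one stroke, eliminating the block-by-block case analysis entirely. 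Your computation of $|l_j|^2 = (1-2s/d)^2 + 4\gamma s(1-s)/d$ is in fact more careful than the paper's (which effectively sets $d=1$ at that step), and your explicit evaluation of $\sigma_{\min}(\mathcal{A}_0)=2\sin(\pi/(2(2n_t+1)))$ supplies the matching upper bound for $\|\mathcal{A}^{-1}\|$ that the paper leaves implicit. As for your ``principal obstacle'' in the drift-dominated regime, the paper's proof also stops at the upper envelope $\sigma_{\max}\leq 1+|l_j|_{\max}$ without certifying a root that saturates it, so your concern is legitimate but no more of a gap than what the paper itself leaves.
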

\begin{proof}  
The minimum $\sigma_j$ occurs at the minimum $\frac{\sin \theta}{\sin (n_t\theta)}$, $\sin\theta$ is increasing in the interval $\theta \in [0, \pi/2]$ and $|\sin (n_t\theta)|$ is periodic in the interval $\theta \in [0, \pi/n_t]$. 
Furthermore, $\sin (n_t\theta)$ is symmetric around $\theta = \pi/2n_t$ so it is sufficient to consider the interval $\theta \in [0, \pi/2n_t]$. When $\theta$ is small $\sin \theta \geq 2\theta/\pi$ and $\sin (n_t\theta) \leq n_t\theta$. Therefore, 
\begin{equation}
    \sigma_{\AD{\min}} \geq \underset{0< \theta < \Pi}{\bigg|\frac{\sin \theta}{\sin (n_t\theta)}} \bigg| \geq \frac{2}{n_t\pi}
\end{equation}
so $\sigma_{\AD{\min}}  = \Theta (1/n_t)$. 

Next, to estimate $\sigma_{\AD{\max}}$, we consider the maximum $|l_j|$, and $\Delta t \leq \frac{\Delta x^2}{2 dD}$ which when substituted into \AD{Eq.} (\ref{eqn: theta_proof}) means $|l_j|$ can be written as
\[
    |l_j| \leq \sqrt{1 - 4\sin^2\bigg(\frac{\pi j}{n_x}\bigg) + 4\sin^4\bigg(\frac{\pi j}{n_x}\bigg) + \frac{a^2 \Delta t}{2D}\sin^2\bigg(\frac{2\pi j}{n_x}\bigg)}
\]
and then simplified to
\begin{equation}
    |l_j| \leq \sqrt{1 + \bigg(\frac{a^2 T}{2n_tD} -1\bigg) \sin^2\bigg(\frac{2\pi j}{n_x}\bigg)}.       
\end{equation}
Therefore, the maximum of $|l_j|$ is the maximum of whichever of these parts is larger as follows 
\[|l_j|_{\AD{\max}} = 
\begin{cases} 
    1\ \mathrm{if}\ \frac{a^2 T}{2n_tD} \leq 1\ \mathrm{at}\ j = n_x/2, \\
    \sqrt{\frac{a^2 T}{2n_tD}}\ \mathrm{if}\ \frac{a^2 T}{2n_tD} > 1\ \mathrm{at}\ j = n_x/16
\end{cases}
\]
resulting in
\[\sigma_{\AD{\max}} \leq 
\begin{cases} 
    2\ \AD{\mathrm{if}}\ \frac{a^2 \Delta t}{2D} \leq 1, \\
    \sqrt{1+ \frac{a^2 T}{2n_tD} + \sqrt{\frac{2a^2 T}{n_tD}}}\ \AD{\mathrm{if}}\ \frac{a^2 T}{2n_tD} > 1.
\end{cases}
\]
Given the condition number is calculated as $\kappa_A = \AD{\frac{\sigma_{\max}}{\sigma_{\min}}}$, it can be written as
\begin{equation}
    \kappa_A = 
    \begin{cases}
        2 n_t \ \AD{\mathrm{if}}\ \frac{a^2 T}{2n_tD} \leq 1, \\
        n_t\sqrt{1+ \frac{a^2 T}{2n_tD} + \sqrt{\frac{2a^2 T}{n_tD}}} = n_t \sqrt{1 + \frac{a^2T}{2n_tD} + a\sqrt{\frac{2 T}{n_tD}}} \ \AD{\mathrm{if}}\ \frac{a^2 T}{2n_tD} > 1. 
    \end{cases}
\end{equation}
and so it can be stated
\begin{equation}
    \kappa_A = 
    \begin{cases}
        \Theta(n_t) \ \AD{\mathrm{if}}\ \frac{a^2 T}{2n_tD} \leq 1, \\
        \Theta\Big(\sqrt{n_t^2 + \frac{n_ta^2T}{D}}\Big) \ \AD{\mathrm{if}}\ \frac{a^2 T}{2n_tD} > 1. 
    \end{cases}
\end{equation}

\end{proof}
Both results have been included here for completeness but in most cases $\kappa_A = \Tilde{O}(n_t)$ can be used since $T = n_t\Delta t$. So, consider $a^2\Delta t/2D >1$ which is unlikely to be very large as $\Delta t$ is inherently small, which means that $\kappa_A = \sqrt{n_t^2 + n_ta^2T/D} = \sqrt{n_t^2 + n_t^2a^2\Delta t/D} \approx n_t$.

\section{Bounding the differential operator}
\label{app: bound_l_lemma}
This appendix provides the proofs to Lemma~\ref{lem: kappa_l} and Lemma \ref{lem: bound L norm}. 
\setcounter{theorem}{10}
\begin{lemma}[Restated: Condition number of $\mathcal{L}$] 
    The condition number of $\mathcal{L}$ from \AD{Eq.}~\eqref{eqn: M to L} is $\kappa_{\mathcal{L}} = 5$ when $\frac{D\Delta t}{\Delta x^2} \leq 1/5$ and $a/D<2\sqrt{10}$. 
\end{lemma}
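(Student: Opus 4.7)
My plan would be to exploit the spectral structure of $\mathcal{L}$. Because $\mathcal{H}$ is circulant and $\mathcal{M}$ is a sum of Kronecker products of $\mathcal{H}$ with identities, Lemma \ref{lem: eigen_M} shows that $\mathcal{M}$ is simultaneously diagonalized by $\mathcal{F}^{\otimes d}$; hence $\mathcal{L} = \mathcal{I}_{n_x^d} + \Delta t\,\mathcal{M}$ is normal and its singular values coincide with the moduli of its eigenvalues $l_{\vec j} = 1 + \Delta t\sum_{k=1}^d \mu_{j_k}$. The condition number then reduces to
\[
\kappa_{\mathcal{L}} = \frac{\max_{\vec j}|l_{\vec j}|}{\min_{\vec j}|l_{\vec j}|},
\]
turning the problem into a scalar extremisation over the Fourier modes.

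To make this tractable, I would parametrise the modes by $s_k = \sin^2(\pi j_k/n_x)\in[0,1]$, set $\alpha = D\Delta t/\Delta x^2$ and $\gamma = a\Delta t/\Delta x$, and write
\[
|l_{\vec j}|^2 = \Bigl(1 - 4\alpha\sum_{k=1}^d s_k\Bigr)^{\!2} + \gamma^2\Bigl(\sum_{k=1}^d 2\epsilon_k\sqrt{s_k(1-s_k)}\Bigr)^{\!2},
\]
where $\epsilon_k\in\{\pm1\}$ encodes the sign of $\sin(2\pi j_k/n_x)$. Applying Cauchy-Schwarz to the imaginary bracket gives $|l_{\vec j}|^2\le (1-4\alpha S_1)^2 + 4\gamma^2 S_1(d-S_1)$ with $S_1 = \sum_k s_k\in[0,d]$, collapsing the analysis to a one-variable optimisation of a quadratic in $S_1$.

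From this reduction, I would obtain the lower bound $|l|_{\min}\ge 1/5$ by evaluating at the extremal mode $S_1=d$ (where the imaginary part vanishes) and using $\alpha\le 1/5$ to force $|1-4\alpha S_1|\ge 1/5$. For the upper bound $|l|_{\max}\le 1$, I would observe that $|l_{\vec 0}|^2 = 1$ and then check that the quadratic in $S_1$ never exceeds its endpoint value $1$ on $[0,d]$. This reduces to a linear inequality in $S_1$ whose validity throughout the interval is exactly where the combined hypotheses $\alpha\le 1/5$ and $a/D < 2\sqrt{10}$ must be brought to bear, noting that the identity $\gamma^2 = (a/D)^2\alpha^2 \Delta x^2$ lets the second hypothesis translate into a usable bound on $\gamma^2$ relative to $\alpha$. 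Dividing the two bounds then gives $\kappa_{\mathcal L}\le 5$.

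The main obstacle lies in this last quantitative step: the real part $(1-4\alpha S_1)^2$ decreases from $1$ while the imaginary part $4\gamma^2 S_1(d-S_1)$ is a downward parabola with positive mass in the interior of $[0,d]$, and showing that their sum never surpasses $1$ requires delicate constant-matching. I expect that the specific numerical values $1/5$ and $2\sqrt{10}$ are tuned precisely so that the resulting quadratic inequality in $S_1$ has nonpositive discriminant under both hypotheses simultaneously, and this balancing is the technical crux of the proof.
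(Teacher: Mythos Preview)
Your overall strategy coincides with the paper's: both arguments reduce to computing the squared singular values of $\mathcal{L}$ as $|l_j|^2$ and then locating the extreme values over the Fourier modes. The paper does this by observing that $\mathcal{L}\mathcal{L}^\dagger$ is itself circulant and reading off its eigenvalues from the coefficients $c_0,c_1,c_{n_x-1}$; you reach the same quantities via normality of $\mathcal{L}$ and the explicit $\mu_j$. In one spatial dimension the two routes are equivalent, and the paper's identification of the minimum at $Z=\pi$ (your $s=1$) with value $(1-4\alpha)^2\ge 1/25$ and of the maximum $1$ at $Z=0$ matches your endpoint analysis. The role of the hypothesis $a/D<2\sqrt{10}$ is also the same in spirit: the paper notes it prevents the imaginary ($X^2$) contribution from pushing $|l|^2$ above $1$ at small $Z$, which is exactly your ``the quadratic in $S_1$ never exceeds $1$'' condition.

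Where your proposal overreaches is the $d$-dimensional lower bound. The paper's proof, as written, actually treats only the one-dimensional $\mathcal{L}$ (note that its $c_0=1-2D\Delta t/\Delta x^2$ has no factor of $d$, and $\mathcal{L}\mathcal{L}^\dagger$ is circulant only in $d=1$). Your attempt to extend this by setting $S_1=d$ and claiming ``$\alpha\le 1/5$ forces $|1-4\alpha S_1|\ge 1/5$'' fails for $d>1$: with $d=3$ and $\alpha=1/12\le 1/5$ one gets $|1-4\alpha d|=|1-1|=0$, so the asserted bound collapses. More structurally, the Cauchy--Schwarz step gives an \emph{upper} bound on the imaginary contribution, which is useful for bounding $|l|_{\max}$ but gives no purchase on $|l|_{\min}$; you would need a separate argument to rule out interior modes $0<S_1<d$ with small real part and small imaginary part simultaneously. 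In one dimension this can be checked directly (the quadratic $f(s)-f(1)=4(1-s)[2\alpha-4\alpha^2(1+s)+\gamma^2 s]$ is nonnegative for $\alpha\le 1/4$), and that is essentially what the paper uses; but a genuine $d$-dimensional version would require either the stronger input $\alpha d\le 1/5$ or a different minimisation argument.
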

\begin{proof}      
    Since $\mathcal{L}\,\mathcal{L}^{\dagger}$ is circulant, repeat the steps taken in the proof of Lemma \ref{lem: eigen_AAdag} to find the eigenvalues.

    $\mathcal{L}$ is circulant with structure $c_0 = 1 - \frac{2D\Delta t}{\Delta x^2}$, $c_1 = \Delta t \big(\frac{D}{\Delta x^2}- \frac{a}{2\Delta x}\big)$, $c_{n_x-1} = \Delta t \big(\frac{D}{\Delta x^2} + \frac{a}{2\Delta x}\big)$ and $c_2 = ... = c_{n_x-2} = 0$. Therefore, $\mathcal{L}\,\mathcal{L}^{\dagger}$ is circulant with structure $d_0 = c_0^2 + c_1^2 + c_2^2 = (1 - \frac{2D\Delta t}{\Delta x^2})^2 + 2\Delta t^2(\frac{a^2}{4\Delta x^2} + \frac{D^2}{\Delta x^4})$, $d_1 = d_{n_x-1} = c_0(c_1+c_2) = \frac{2D\Delta t}{\Delta x^2}(1 - \frac{2D\Delta t}{\Delta x^2})$, $d_2 = d_{n_x-2} = c_1 c_{n_x-1} = \frac{D^2\Delta t^2}{\Delta x^4} + \frac{a^2\Delta t^2}{4\Delta x^2}$ and $d_3 =... = d_{n_x-3} = 0$.
    This results in 
    \begin{multline}                        \lambda_j(\mathcal{L}\,\mathcal{L}^{\dagger}) = \bigg(1 - \frac{2D\Delta t}{\Delta x^2}\bigg)\bigg(1 + \frac{2D\Delta t}{\Delta x^2}\bigg(2\cos\bigg(\frac{2\pi j}{n}\bigg)-1\bigg)\bigg) + \frac{4D^2\Delta t^2}{\Delta x^4}\cos^2\bigg(\frac{2\pi j}{n_x}\bigg) +\frac{a^2\Delta t^2}{\Delta x^2}\sin^2\bigg(\frac{2\pi j}{n_x}\bigg).
        \label{eqn: lambda_L}
    \end{multline}
There are a few different regimes to consider when looking for the minimum and maximum eigenvalues. This may be easier to consider by rewriting \AD{Eq.} \ref{eqn: lambda_L} with new variables
\begin{equation}
    \lambda_j(\mathcal{L}\,\mathcal{L}^{\dagger}) = (1 - 2Y)((1 -2Y) + 4Y\cos(Z))) + 4Y^2\cos^2(Z) + 4X^2 \sin^2(Z)\AD{,}
    \label{eqn: lambda_L_YZ}
\end{equation}
where $Y = \frac{D\Delta t}{\Delta x^2}$, $X = \frac{a \Delta t}{2 \Delta x^2} = \frac{a}{2D}Y$ and $Z = \frac{2\pi j}{n_x}$. 
We can bound these variables as follows; $0 < Y \leq 1/2d \leq 1/2$  based on Theorem \ref{thm: approx linf}, $0<X\leq a/4D$ and $0 \leq Z \leq 2\pi$. However, on inspection at $Y = 1/4$ the eigenvalue finds a minimum at 
$\lambda(\mathcal{L}\mathcal{L}^{\dag})_{\AD{\min}} = 0$. This would cause the condition number $\kappa \rightarrow 0$. To force a \AD{noninfinite} condition number we bound the variables $X$ and $Y$ instead to, $0 < Y \leq 1/5$ and $0 < X \leq a/10D$. This is true for all $d \geq 3$, and is only a single order of magnitude reduction. 

To begin with we
consider when $Z = \pi$, to find the minimum. Substituting this into \AD{Eq.} \ref{eqn: lambda_L_YZ} gives
\begin{equation} 
    \lambda_j(\mathcal{L}\,\mathcal{L}^{\dagger}) = (1-2Y)(1-6Y)+4Y^2.
\end{equation}
Then we can find the minimal result at the maximum $Y = 1/5$, therefore, $\lambda_{\AD{\min}}(\mathcal{L}\,\mathcal{L}^{\dagger}) = 1/25$. To find the maximum there are 2 scenarios. When either $Y=0$ or $Z=0$ it is easy to see that $\lambda_{\AD{\max}}(\mathcal{L}\,\mathcal{L}^{\dagger}) = 1$. 

Using these results we can find the condition number $\kappa_{\mathcal{L}}$ as
\begin{equation}
    \kappa_{\mathcal{L}} = \sqrt{\frac{\lambda_{max}}{\lambda_{min}}} = 5\AD{.}
\end{equation}

There is also a case at very small $Z$ e.g. $Z = \pi/64$, (where $n_x \geq 128$ where a large $X^2$ has a dominant factor. In this case the increase in the term $X^2\sin^2Z$ is greater than the reduction in the $\cos Z$ terms. This only occurs at $a/D > 2\sqrt{10}$, so we only consider the region for which the maximal eigenvalue is well defined as stated. 
\end{proof}

\begin{lemma}[Restated: Norm. of $\mathcal{L}$] 
    Let $\mathcal{L}$ be defined by \AD{Eq.}~\eqref{eqn: diff operator}, taking $\Delta t = \Delta x^2 /(2dD)$ as in Corollary \ref{cor:dxdt bounds}. Then for any integer $\tau \geq 1$,
    \begin{equation}
       \frac{1}{(4\sqrt{\tau})^d} \leq \langle 0|\mathcal{L}^{2\tau}|0\rangle = ||\mathcal{L}^{\tau}||_2^2.
    \end{equation}
\end{lemma}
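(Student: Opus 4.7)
The plan is to Fourier-diagonalise $\mathcal{L}$ and reduce the lower bound to the return probability of a $d$-dimensional simple random walk. By Lemma~\ref{lem: eigen_M} one has $\mathcal{L} = U\Lambda U^{\dagger}$ with $U = \mathcal{F}^{\otimes d}$ and $\Lambda$ diagonal with entries $l_{\mathbf{j}} = 1 + \Delta t \sum_{k=1}^{d}\mu_{j_k}$. Since $U^{\dagger}|0\rangle$ is uniform up to phases, each of its coordinates has modulus $n_x^{-d/2}$, and therefore
\[
\langle 0|\mathcal{L}^{2\tau}|0\rangle = \frac{1}{n_x^d}\sum_{\mathbf{j}}l_{\mathbf{j}}^{2\tau}, \qquad \|\mathcal{L}^{\tau}|0\rangle\|_{2}^{2} = \frac{1}{n_x^d}\sum_{\mathbf{j}}|l_{\mathbf{j}}|^{2\tau}.
\]
These two expressions coincide whenever $\mathcal{L}$ is symmetric (the diffusion-only case $a=0$, where every $l_{\mathbf{j}}$ is real), which is the relevant identification in the lemma statement; in the drifted case the applications in Sec.~\ref{sec:Quantum Methods} only need $\|\mathcal{L}^{\tau}|0\rangle\|_{2}$, so that is where I focus.

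The key observation for the lower bound is that $|l_{\mathbf{j}}|^{2} \geq (\operatorname{Re} l_{\mathbf{j}})^{2} = \ell_{\mathbf{j}}^{2}$, where $\ell_{\mathbf{j}} = 1 - \tfrac{2}{d}\sum_{k}\sin^{2}(\pi j_k / n_x)$ is the eigenvalue obtained by setting the drift $a$ to zero. Hence
\[
\|\mathcal{L}^{\tau}|0\rangle\|_{2}^{2} \;\geq\; \frac{1}{n_x^d}\sum_{\mathbf{j}}\ell_{\mathbf{j}}^{2\tau} \;=\; \langle 0|\mathcal{L}_{\mathrm{sym}}^{2\tau}|0\rangle,
\]
which throws away the imaginary (drift) contribution as a non-negative excess and reduces the problem to the return probability of the symmetric walk on $\mathbb{Z}_{n_x}^{d}$ that picks a uniformly random dimension and jumps by $\pm\Delta x$ with equal probability. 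This bypass of the drift is what allows the claim to hold for every $\tau \geq 1$ without an exponentially decaying drift factor.

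To bound the symmetric return probability I would decompose $\mathcal{L}_{\mathrm{sym}} = d^{-1}\sum_{k=1}^{d}\tilde{\mathcal{L}}_{k}^{\mathrm{sym}}$, where $\tilde{\mathcal{L}}_{k}^{\mathrm{sym}}$ is the one-dimensional symmetric step (time step $\Delta x^2/(2D)$) on the $k$-th tensor factor; these operators commute, so the multinomial theorem gives
\[
\langle 0|\mathcal{L}_{\mathrm{sym}}^{2\tau}|0\rangle = \frac{1}{d^{2\tau}}\sum_{N_1+\cdots+N_d = 2\tau}\binom{2\tau}{N_1,\ldots,N_d}\prod_{k=1}^{d}\frac{1}{2^{N_k}}\binom{N_k}{N_k/2},
\]
using the one-dimensional Fourier computation $\langle 0_k|(\tilde{\mathcal{L}}_{k}^{\mathrm{sym}})^{N_k}|0_k\rangle = 2^{-N_k}\binom{N_k}{N_k/2}$ for even $N_k < n_x$ and $0$ otherwise. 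Writing $\binom{N_k}{N_k/2} = [y_k^0](y_k + y_k^{-1})^{N_k}$ and applying the multinomial theorem a second time collapses this into $(2d)^{-2\tau}\,[y_1^0\cdots y_d^0]\bigl(\sum_{k=1}^{d}(y_k + y_k^{-1})\bigr)^{2\tau}$, which is the standard closed form for the return probability of the simple $2d$-neighbour walk on $\mathbb{Z}^d$ at time $2\tau$. A Stirling estimate on central binomials then shows this is at least $(4\sqrt{\tau})^{-d}$ for every integer $\tau \geq 1$.

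The main obstacle is the combinatorial book-keeping in $d$ dimensions: the multinomial-to-product step followed by the coefficient-extraction identity is what makes the $d$-dimensional bound factorise cleanly, turning a multi-index sum over $(N_1,\ldots,N_d)$ into a single multivariate Laurent-coefficient that is known in closed form. Once that identity is in place, the Stirling bound $\binom{2\tau}{\tau}/4^{\tau} \geq 1/\sqrt{\pi\tau} > 1/(4\sqrt{\tau})$ on central binomials provides a comfortable slack that covers the small-$\tau$ boundary cases and justifies dropping lower-order terms from unbalanced partitions of $2\tau$.
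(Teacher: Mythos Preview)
Your approach is correct and, in one respect, more careful than the paper's own proof. The paper argues probabilistically: it reads $\langle 0|\mathcal{L}^{2\tau}|0\rangle$ as the return probability of the walk at time $2\tau$, lower-bounds the probability that each of the $d$ coordinate walks receives an even number of steps by $2^{-d}$ (via an induction on $d$ for the count $N_e(d,2\tau)$ of even-parity step allocations), and then applies the one-dimensional central-binomial bound $\binom{2m}{m}/4^m\ge 1/(2\sqrt{m})\ge 1/(2\sqrt{\tau})$ on each coordinate. The product of the two factors gives $(4\sqrt{\tau})^{-d}$.

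You instead diagonalise and work spectrally. The inequality $|l_{\mathbf j}|^{2}\ge(\operatorname{Re} l_{\mathbf j})^{2}$ is exactly what is needed to strip out the drift and reduce to the symmetric walk; this is a point the paper glosses over, since the asserted identity $\|\mathcal{L}^{\tau}|0\rangle\|_2^2=\langle 0|\mathcal{L}^{2\tau}|0\rangle$ genuinely fails when $a\neq 0$ (the former is $n_x^{-d}\sum_{\mathbf j}|l_{\mathbf j}|^{2\tau}$, the latter $n_x^{-d}\sum_{\mathbf j}l_{\mathbf j}^{2\tau}$), so your reduction actually patches a gap. Your multinomial/Laurent-coefficient collapse to the $\mathbb Z^d$ SRW return probability is standard and correct (on the torus the one-dimensional factor is $\ge$ rather than $=$ once $N_k\ge n_x$, which only helps a lower bound). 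The one thin spot is your final ``Stirling in $d$ dimensions'' step: the cleanest way to extract $(4\sqrt{\tau})^{-d}$ from the multinomial sum is precisely the paper's parity trick---restrict to even $N_k$, pay $2^{-d}$, and bound each coordinate's return by $1/(2\sqrt{\tau})$ using $N_k\le 2\tau$. So the two arguments dovetail: yours supplies the explicit drift-removal the paper omits, while the paper's even-parity induction supplies the clean endgame you only sketched.
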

\begin{proof}
    Since $\mathcal{L}$ describes a simple \AD{random-walk} on a periodic $d$-dimensional square lattice and 
    \begin{equation}
        ||\mathcal{L}^{\tau}|0\rangle||_2^2 = \langle 0|\mathcal{L}^{2\tau}|0\rangle,
    \end{equation}
    where $|0\rangle$ denotes the origin, then  $||\mathcal{L}^{\tau}|0\rangle||_2^2$ can be interpreted as the probability of returning to the origin after $2\tau$ steps of the \AD{random-walk}. 
    
    The next step is to bound this quantity, first consider the $d = 1$ case.
    A simple lower bound follows by observing that the probability of returning to $0$ after $2\tau$ steps is lower bounded by the probability of a \AD{random-walk} on the integers returning to $0$ after $2\tau$ steps which is exactly $(_{\tau}^{2\tau})/2^{2\tau}$, such that
    \begin{equation}
        \langle 0|\mathcal{L}^{2\tau}|0\rangle \geq \frac{(_{\tau}^{2\tau})}{2^{2\tau}} \geq \frac{1}{2\sqrt{\tau}}.
    \end{equation}
    
    We can extend this bound to the $d$- dimensional case. Based on the interpretation of $\langle 0|\mathcal{L}^{2\tau}|0\rangle$ as the probability of returning to the origin after $2\tau$ steps of a \AD{random-walk}. Each step corresponds to choosing one of $d$ dimensions uniformly at random. Then moving in one of 2 possible directions within that dimension. Since the lower bound for this in one dimension is $\frac{1}{2\sqrt{\tau}}$, then if each of the $d$ independent \AD{random-walk}s make an even number of steps, the probability that they all simultaneously return to the origin is at least $\frac{1}{(2\sqrt{\tau})^d}$. For this to be true, we must lower bound the probability that all of the walks take an even number of steps. This is demonstrated by \AD{Ref.} \cite{Linden2022QuantumEquation} and recreated here. 

    Let $N_e(d, 2\tau)$ denote the number of sequences of $2\tau$ integers between 1 and $d$ such that the number of times each integer appears in the sequence is even. The probability that all the walks take an even number of steps is $N_e(d, 2\tau)/d^{2\tau}$. Then we can show by induction on $d$ that $N_e(d, 2\tau)\geq d^{2\tau}/2^d$. For the best case, $N_e(1, 2\tau) = 1 \geq 1/2$ as required. Then for $d\geq2$,
    \begin{equation}
        N_e(d, 2\tau) = \sum_{j=0}^{\tau}\binom{2\tau}{2j} N_e(d-1, 2\tau - 2j)
        \geq \sum_{j=0}^{\tau}\binom{2\tau}{2j}\frac{1}{2^d-1}(d-1)^{2\tau-2j}        = (d-1)^{2\tau}\frac{1}{2^d-1} \sum_{j=0}^{\tau}\binom{2\tau}{2j}(d-1)^{-2j}\AD{,}
    \end{equation}
    \begin{equation}
        (d-1)^{2\tau}\frac{1}{2^d-1}\frac{1}{2}\bigg(\bigg(1+\frac{1}{d-1}\bigg)^{2\tau}+\bigg(1-\frac{1}{d-1}\bigg)^{2\tau}\bigg)\AD{,}
    \end{equation}
    \begin{equation}
        \frac{1}{2^d}(d^{2\tau}+(d-2)^{2\tau}) \geq \frac{1}{2^d}d^{2\tau}.
    \end{equation}
    Therefore, with probability at least $1/2^d$, all of the walks make an even number of steps, and the probability that they all return to the origin after $2\tau$ steps in total is at least $1/(4\tau)^d$. 
\end{proof}

\end{document}